\documentclass[sn-basic
]{sn-jnl}
\usepackage[utf8]{inputenc}

\usepackage{tikz}
\usetikzlibrary{matrix,arrows,decorations.pathreplacing,tikzmark,calc}
\tikzset{break/.style={fill=white, inner sep=1pt}}
\tikzset{inline/.style={column sep = 2.5em,inner sep = 1pt}}
\usetikzlibrary{arrows,positioning}

\usepackage{graphicx}%
\usepackage{multirow}%
\usepackage{amsmath,amssymb,amsfonts}%
\usepackage{amsthm}%
\usepackage{mathrsfs}%
\usepackage[title]{appendix}%
\usepackage{xcolor}%
\usepackage{textcomp}%
\usepackage{manyfoot}%
\usepackage{booktabs}%
\usepackage{algorithm}%
\usepackage{algorithmicx}%
\usepackage{algpseudocode}%
\usepackage{listings}%
\usepackage{lmodern}

\usepackage{amsmath,amsfonts,amsthm,amssymb,thmtools,mathtools,mathrsfs,dsfont}
\usepackage{graphics}
\usepackage{epstopdf}
\usepackage{float}
\usepackage{microtype}
\usepackage{lineno}
\usepackage{natbib}


\usepackage{array}
\newcolumntype{L}[1]{>{\raggedright\let\newline\\\arraybackslash\hspace{0pt}}m{#1}}
\newcolumntype{C}[1]{>{\centering\let\newline\\\arraybackslash\hspace{0pt}}m{#1}}
\newcolumntype{R}[1]{>{\raggedleft\let\newline\\\arraybackslash\hspace{0pt}}m{#1}}


\theoremstyle{plain}
\newtheorem{theorem}{Theorem}[section]

\newtheorem{corollary}[theorem]{Corollary}
\theoremstyle{definition}
\newtheorem{definition}[theorem]{Definition}
\newtheorem{example}[theorem]{Example}

\title{Disease Transmission on Random Graphs Using Edge-Based Percolation}

\author[1]{\fnm{S.} \sur{Zhao}}\email{20sz11@queensu.ca}

\author*[1]{\fnm{F.M.G.} \sur{Magpantay}}\email{felicia.magpantay@queensu.ca}

\affil*[1]{\orgdiv{Department of Mathematics and Statistics}, \orgname{Queen's University}, \orgaddress{\street{48 University Avenue}, \city{Kingston}, \postcode{K7L 3N6}, \state{ON}, \country{Canada}}}

\abstract{
Edge-based percolation methods can be used to analyze disease transmission on complex social networks. This allows us to include complex social heterogeneity in our models while maintaining tractability. 
Here we begin by reviewing the seminal works on this field by \cite{NewmanStrogatzWatts:2001, Newman:2002, Newman:2003}, and \cite{MillerSlimVolz:2012}. 
We present a unified discussion of the theoretical background behind these models, including a detailed derivation of some of the major results. We also demonstrate the connections between these papers and  the classical literature in random graph theory~\cite{MolloyReed:1995, MolloyReed:1998}.
Finally, we also present an accompanying R package that takes epidemic and network parameters as input and generates estimates of the epidemic trajectory and final size.
This manuscript and the R package were developed to help researchers easily understand and use network models to investigate the interaction between different community structures and disease transmission.
}

\keywords{Mathematical Epidemiology, Network, Percolation, Edge-based Modeling}


\pacs[Acknowledgements]{
We are grateful to T. Day and K. Zhang for helpful discussions. 
We acknowledge the support of the Natural Sciences and Engineering Research Council of Canada (NSERC) Discovery Grant Program and the Mathematics for Public Health network of the NSERC Emerging Infectious Diseases Modelling Initiative.}


\begin{document}

\maketitle

\section{Introduction}
\label{sec: intro}

Models of disease transmission are commonly set up as compartmental models whose dynamics are controlled by a system of ordinary differential equations (ODEs). 
The susceptible-infectious-recovered (SIR) is the basis of many such models and this often comes with a mass action (MA) assumption wherein the population is assumed to be homogeneous and fully mixed.
We refer to this model as the MA-SIR model and a diagram is shown in Figure~\ref{fig: MA-SIR}.

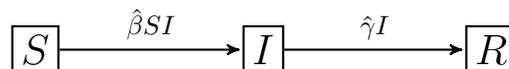
\begin{figure}[htbp]
    \begin{center}
        \begin{tikzpicture}[->, >=stealth',shorten >=1pt,auto, node distance=3cm, thick, main node/.style={rectangle,draw, font = \sffamily\Large\bfseries}]

            \node[main node] (S) {$S$};
            \node[main node] (I) [right of=S] {$I$};
            \node[main node] (R) [right of=I] {$R$};
            \draw[thick,->] (S) -- (I) node[midway,sloped,above,rotate=0] {$\hat{\beta} S I$};
	
            \draw[thick,->] (I) -- (R) node[midway,sloped,above,rotate=0] {$\hat{\gamma} I$};

        \end{tikzpicture}
    \end{center}
    \caption[The MA-SIR model]{The MA-SIR model. Here $S$, $I$ and $R$ are the susceptible, infectious and recovered/removed compartments respectively. Under the mass action (MA) assumption, the rate of flow from $S$ to $I$ is equal to their product times a constant per-infected transmission rate of $\hat{\beta}$. Infectious individuals are assumed to recover at a per capita rate of $\hat{\gamma}$.
    Figure was created using Latex Tikz Picture.}
\label{fig: MA-SIR}
\end{figure}

The assumptions of MA transmission and perfect infection-derived immunity keeps the dynamics of the MA-SIR model relatively tractable for various kinds of analysis, such as the computation of the final epidemic size.
The MA assumption is reasonable for some diseases spreading across highly connected populations.
However, for more structured populations or for cases when transmission requires specific types of contact, these assumptions can be problematic~\citep{BansalMeyers:2012}.
The MA assumption can lead to overestimates of the size of the actual outbreak, as demonstrated in Figure~\ref{fig: epidemic_curves} where we compare an MA-SIR model with an SIR model with a heterogeneous contact network. 

\begin{figure}[htbp]
    \begin{center}
        \includegraphics[width=0.6\textwidth]{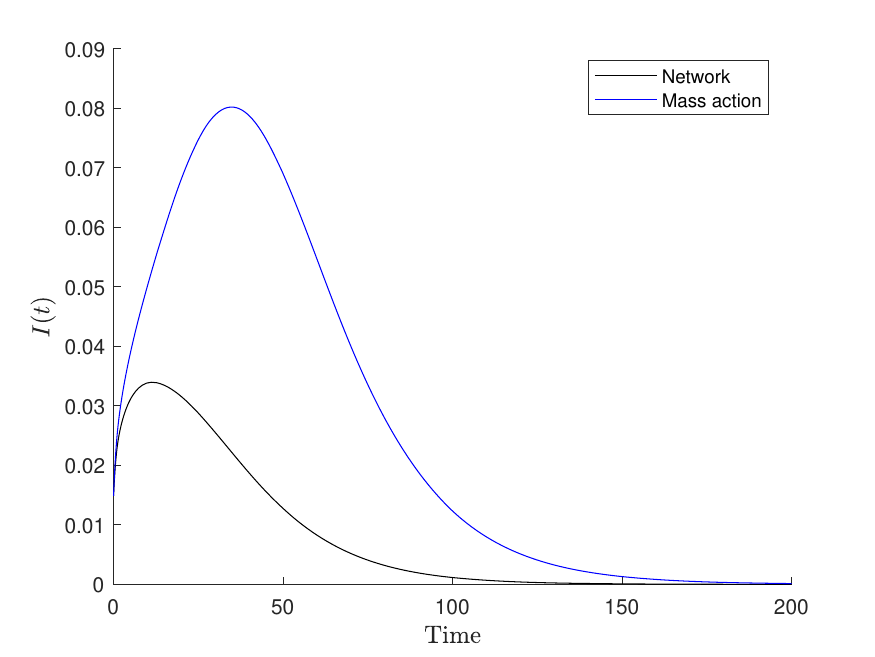}
    \end{center}
    \caption[Epidemic curves of an MA-SIR and a network-SIR models]{Epidemic curves of an MA-SIR model and a network-SIR model by \citet{MillerSlimVolz:2012} with the same basic reproduction numbers. Figure was created using Matlab.}
\label{fig: epidemic_curves}
\end{figure}

There have been many modeling studies that attempt to eliminate the MA assumption without significantly sacrificing analytical tractability.
Some approaches involve introducing social heterogeneity by adding multiple strains and risk groups for each compartment.
Others incorporate age-specific contact rates and age-structured populations (by creating compartments for different age groups or using partial differential equation models).
Here we consider the use of complex networks of connected individuals over which the disease can spread. 
Complex network approaches have been gaining in popularity recently and are often considered to be a good way to introduce social heterogeneity into compartment models with more flexibility about partnership duration, while keeping dynamic information and tractability.
  
Epidemiology is one of the main application of networks in biomathematics. 
Most of these models employ a network composed of an undirected graph whose vertices represent individuals in a host population and edges are social connections that allow infection to be transmitted between individuals.
The type, degree of vertices or weight of edges can be amended to fit different assumptions of epidemics.
\citet{Newman:2003} provides a comprehensive exposition of different techniques and models developed in recent years for disease on complex networks. 
Here we follow his approach, terminology, classification and theory for networks.
This paper is organized as follows: In Section~\ref{sec: review} we establish our notation and review the theoretical foundations of random graphs. In Section~\ref{sec: disease} we introduce percolation-based transmission over networks following~\citet{Newman:2003}.
We also clarify the connections between random graph theory and network epidemiology.
In Section~\ref{sec: dyn} we  review the dynamics of network models from \cite{MillerSlimVolz:2012} and relate some results to the previous sections.
In Section~\ref{sec:summary} we summarize our findings and discuss future directions.

\section{Theory of random graphs}
\label{sec: review}

The mathematical structures we use to describe networks are graphs.
In this section we review some definitions and results from basic graph theory and theories specific for random graphs.

\subsection{Basic definitions}
\label{sec: basic}

\begin{definition} [\cite{BondyMurty:2008}]
\label{def: Graph & Subgraph}
An \textbf{undirected graph} $G$ is an ordered pair $G=(V_G,E_G)$ consisting of a set $V_G$ of \textbf{vertices} and a set $E_G$, disjoint from $V_G$, of \textbf{undirected edges}, together with an incidence function $\psi_G$ that associates with each edge $e\in E_G$ of $G$ an unordered pair $\{a,b\}$ of vertices $a,b \in V_G$ of $G$. 
We also use the following terminology:

\begin{itemize}

    \item If for $e \in E$ and $a,b \in V$, $\psi_G(e)=\{a,b\}$, then $e$ is said to \textbf{join} (or \textbf{connect}) $a$ and $b$, the vertices $a, b$ are called the \textbf{ends} of $e$.
    We also say that the ends $a, b$ of an edge $e$ are  \textbf{incident} with $e$, and vice versa. 

    \item A \textbf{loop} is an edge with identical ends. In other words, a loop is an edge join a vertex with itself.

    \item A graph is said to have \textbf{parallel edges}  if there exist two or more edges with the same pair of ends.

    \item A graph is \textbf{simple} if it has no loops or parallel edges.

    \item A graph $F=(V_F,E_F)$ is called a \textbf{subgraph} of a graph $G=(V_G,E_G)$, denoted by $F \subseteq G$, if $V_F \subseteq V_G$ and $E_F \subseteq E_G$, and $\psi_F=\psi_G|_{E_F}$ is the restriction of $\psi_G$ to $E_F$. 
    If $F \subseteq G$, $G$ is also said to \textbf{contain} $F$.
\end{itemize}

\end{definition}

\begin{theorem}
\label{cor: Simple Subgraph}
A subgraph of a simple undirected graph is a simple undirected graph.
\end{theorem}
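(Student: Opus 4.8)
The plan is to argue directly from the definition of a simple graph, namely the absence of loops and of parallel edges, together with the defining property of a subgraph that $\psi_F = \psi_G|_{E_F}$. Write $G = (V_G, E_G)$ for the given simple undirected graph with incidence function $\psi_G$, and let $F = (V_F, E_F)$ be any subgraph, so that $V_F \subseteq V_G$, $E_F \subseteq E_G$, and $\psi_F = \psi_G|_{E_F}$. All that is needed is to transport each of the two forbidden configurations from $F$ back up to $G$, where they are excluded by hypothesis.

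First I would rule out loops. Suppose toward a contradiction that $F$ contains a loop, i.e.\ there is an edge $e \in E_F$ with $\psi_F(e) = \{a,a\}$ for some $a \in V_F$. Since $e \in E_F \subseteq E_G$ and $\psi_F = \psi_G|_{E_F}$, we get $\psi_G(e) = \psi_F(e) = \{a,a\}$, so $e$ is a loop of $G$, contradicting the simplicity of $G$. Next I would rule out parallel edges by the same device: if $e_1 \neq e_2$ in $E_F$ satisfy $\psi_F(e_1) = \psi_F(e_2)$, then $e_1, e_2 \in E_G$ with $\psi_G(e_1) = \psi_G(e_2)$, again contradicting the simplicity of $G$. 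Finally, $F$ is undirected because each of its edges, being an element of $E_G$, is an undirected edge assigned an unordered pair of vertices by $\psi_G$, hence by its restriction $\psi_F$. Therefore $F$ is a simple undirected graph.

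There is essentially no obstacle in this argument: the statement is a direct unwinding of Definition~\ref{def: Graph & Subgraph}. The only point requiring care is to invoke the compatibility $\psi_F = \psi_G|_{E_F}$ at exactly the step where one passes from an edge of $F$ exhibiting a loop or a repeated endpoint pair to the corresponding edge of $G$; without that condition the claim could fail, but it is built into the definition of a subgraph.
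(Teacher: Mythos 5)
Your proof is correct and takes the same approach as the paper, which simply states that the result follows directly from the definitions; you have merely spelled out the routine details of transporting loops and parallel edges from $F$ back to $G$ via $\psi_F=\psi_G|_{E_F}$.
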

\begin{proof}
This follows directly from the definitions of simple, undirected graphs and subgraphs.
\end{proof}

To consider properties of graphs, we introduce the following notation given in Definition~\ref{def: size and degree}.

\begin{definition}[\cite{BondyMurty:2008}]
\label{def: size and degree}
Consider a simple, undirected graph $G=(V_G,E_G)$.
\begin{itemize}

\item The \textbf{size} of the graph $G$, often denoted by $N$, is the number of vertices in $G$, i.e. $N=|V_G|$. 

\item The \textbf{total number of edges} of $G$, often denoted by $m$, is the number of elements of set $E_G$, thus $m=|E_G|$.

\item Two distinct vertices $a,b \in V_G$ are said to be \textbf{neighbours} if they are connected by an edge $e \in E_G$. $a$ and $b$ are also said to be \textbf{neighbour} of each other. The set of all neighbours of a vertex $a$ in graph $G$ is denoted by $N_G(a)$
    
\item The \textbf{degree} of a vertex $a$ in graph $G$, denoted by $D_G(a)$, is the number of edges in $G$ that incident with $a$. 
 For a simple graph $G$, degree of $u$ is also the number of neighbours of $a$, thus $D_G(a)=|N_G(a)|$.

\item  An \textbf{isolated vertex} is a vertex of degree 0, which means no edge is incident with it and $N_G(a) = \emptyset$.
        
\end{itemize}
\end{definition}

In the network models we will consider, vertices represent individuals in the community and edges represent a disease-transmitting connection between the corresponding individuals, which allows transmission in both directions.
We note that even if two individuals are strongly socially connected, they might not have a disease-causing connection.
For example, two individuals that work together but only contact each other via online meetings would not be considered connected to each other.
Thus, there may be a small proportion of isolated vertices in the networks we consider, especially if we model situations when a community is under a disease control measure, like a stay-at-home order.
Since isolated vertices cannot infect others or be infected, they are irrelevant to the disease spreading dynamics in the network and we can exclude such vertices when constructing the graphs.
With proper book-keeping, isolated vertices have no impact on the epidemic results of the model, and one can easily put them back to the final result by adding them to the uninfected individuals.

For the vertices that are connected, we have the following definitions from graph theory:
\begin{definition}[Adapted from \cite{FriezeKaronski:2016}]
\label{def: Connected Graph}
Consider a simple, undirected graph $G=(V_G,E_G)$,
\begin{itemize}
\item A \textbf{path} within $G$ is a subgraph $P=(V_P,E_P)$ of $G$, where all vertices in $V_P$ can form a sequence such that every consecutive pair of vertices in the sequence is connected by an edge in $E_P \subseteq E_G$.
Moreover, $E_P$ only contains edges in $E_G$ that connect such consecutive pairs in the sequences. 

\item A \textbf{connected graph} $F=(V_F, E_F)$ is a graph wherein every pair of vertices $(a, b)$ where $a, b \in V_F$, there exist a path within $F$ such that its sequence starts with $a$ and ends with $b$.

\item A \textbf{component} $C$ of graph $G$ is a connected subgraph of $G$ that is \textbf{maximal}, which means no connected subgraph of $G$ will contains $C$ except $C$ itself.

\item A \textbf{cycle} is a path that contains no less than 3 vertices, whose vertices can form a sequence satisfying the requirement of path, while only the first and last vertices are the same.

\item A \textbf{tree} is a connected graph that contains no cycles.

\item A \textbf{tree component} is a component that contains no cycle, thus also a tree.
\end{itemize}
\end{definition}

\subsection{Random graphs}
\label{sec: randomgraphs}

In applications, graphs are often created directly as an accurate but abstract replica of the real-world network or data. 
In these cases the graph is mostly specified with clear topological structure.
This is possible only because the vertices and edges represented by these models are either physically fixed, like power cables, storage warehouses and highways, or recorded automatically as data by some system, like social media accounts, internet interactions and transportation records. 
As a result, these data are easily accessible and easy to track, and the graph can be practically created and modified accordingly. 

However, as described in section \ref{sec: basic}, epidemic network models have individuals in the community as vertices and disease-causing connections as edges.
Thus, even if for a small community the individuals are somehow tractable, but the disease-causing connections (even just typical social connections) can be difficult to ascertain.
Instead, it is common to approach modeling epidemic network models using random graphs.
Here we only discuss simple, undirected graphs in this manuscript so we now omit the adjectives ``simple'' and ``undirected'' and refer only to graphs for brevity.

Random graph theory is a vast subject. 
At the core are graphs that are generated randomly according to some certain graph properties. 
The goals are to find and discuss properties that would be shared among such graphs. 

\begin{definition}[\cite{Lovasz:2012}]
\label{def: graph property}
    A \textbf{graph property} is a property of graphs that depends only on the abstract structure, not on graph representations such as particular labelling or drawings of the graph.
\end{definition}

\begin{definition}[\cite{FriezeKaronski:2016}]
\label{def: property subset}
For a given graph property and given vertex set $[N]=\{1,2,\cdots ,N\}$, $N \in \mathbb{Z}_{>0}$, the \textbf{graph property subset} $\mathcal{P}_N$ is a subset of all labeled graphs satisfy the property,  i.e.  $\mathcal{P}_N=\{G=\{[N],E\}| G \text{ satisfies the given graph property}\}$.
\end{definition}

\begin{definition}
\label{def: Random Graph}
Let $N \in \mathbb{Z}_{>0}$.
Consider all graphs with size $N$ and a given graph property characterized by $\sigma$, which could be a single variable or a vector variable.
In this case we denote the graph property subset by $\mathcal{G}_{N,\sigma}$.
A \textbf{random graph} with size $N$ and the given property, denoted by $\mathbb{G}_{N,\sigma}$, is a graph chosen uniformly from $\mathcal{G}_{N,\sigma}$ at random.   
\end{definition}

The most well-known example of random graphs is the Erd\"os-R\'enyi graphs which were first studied by \cite{ErdosRenyi:1960} and \cite{Gilbert:1959}.
\begin{example}[Erd\"os-R\'enyi Graphs]
\label{ex: erdosrenyi}
The Erd\"os-R\'enyi graphs is parameterized by parameters $N \in \mathbb{Z}_{>0}$ and $p\in(0,1)$, and usually is labeled  $\mathcal{G}_{N,p}$. Any random graph $\mathbb{G}_{N,p} \in \mathcal{G}_{N,p}$ is constructed in the following manner: $N$ vertices are generated by assuming each of the $\binom{N}{2}$ possible pairs of vertices has an edge with probability $p$ independent of all other edges.
\end{example}

From Example~\ref{ex: erdosrenyi}, we saw that for any simple graphs $G$ with size $N \in \mathbb{Z}_{>0}$, there is at most $\binom{N}{2}$ edges in its edge set.
Moreover, if the size $N$ is fixed, there exist exactly $2^{\binom{N}{2}}$ distinguishable simple graphs.
For convenience, for given fixed size $N$, we use $[\binom{N}{2}]$ to denote the set of all possible edges for simple graphs, and use $[2^{\binom{N}{2}}]$ to denote the set of all possible simple graphs.
So $\forall G=\{[N],E\} \in [2^{\binom{N}{2}}]$, while $\forall E \subseteq [\binom{N}{2}]$ and for any graph property, its graph subset $\mathcal{P}_N \subseteq [2^{\binom{N}{2}}]$.

We are often interested in threshold phenomena, which is an ``abrupt'' appearance or disappearance of certain properties, as the size and number of edges of the graph increase monotonically.

\begin{definition} [Adapted from \cite{FriezeKaronski:2016}]
    A graph property with corresponding subset $\mathcal{P}_N$ is \textbf{monotone increasing property} if $G=\{V_G,E_G\} \in \mathcal{P}_N$ implies $G_{+e}=\{V_G, E_G \cup \{e\}\} \in \mathcal{P}_N$, i.e., adding an edge $e$ to a graph $G$ does not destroy the graph property.
\end{definition}

\begin{definition} [Adapted from \cite{FriezeKaronski:2016}]
    A monotone increasing property with corresponding subset $\mathcal{P}_N$ is \textbf{non-trivial} if for large enough size $N \in \mathbb{Z}_{>0}$, the empty graph (with no edges) $\bar{K}_N=\{[N], \emptyset \} \notin \mathcal{P}_N$ and the complete graph $K_N=\{[N],[\binom{N}{2}]\} \in \mathcal{P}_N$.
\end{definition}

It is equivalent to say that a monotone increasing property is non-trivial, if for large enough $N \in \mathbb{Z}_{>0}$, there exist at least one graph of size $N$ belonging to $\mathcal{P}_N$ and at least one graph of size $N$ that does not belong to $\mathcal{P}_N$.
A simple example of such a property is the existence of a loop. This is a monotone increasing property since adding edges to a graph with a loop will not destroy any existing loops.
In addition, for any $N \geq 3$ the empty graph $\bar{K}_N$ does not have any loops and the complete graph $K_N$ has loops.

\begin{definition} (Adapted from \cite{FriezeKaronski:2016})
\label{def: Threshold Function}
Consider random graph $\mathbb{G}_{N,\sigma}$.
A function ${\sigma}^*={\sigma}^*(N)$ of $N$ for the variable $\sigma$ is a \textbf{threshold} for an monotone increasing graph property with graph property subset $\mathcal{P}_N$ of the random graph $\mathbb{G}_{N,\sigma}$ if,
    \begin{equation}
    \lim_{N \rightarrow \infty} \mathbb{P} (\mathbb{G}_{N,{\sigma}} \in \mathcal{P}_N) =
        \begin{cases}
            0, & \text{if } {\sigma} < {\sigma}^*\\
            1, & \text{if } {\sigma} > {\sigma}^*\\
        \end{cases}
    \end{equation}
\label{defn:threshold}
\end{definition}

\begin{example}[\cite{FriezeKaronski:2016}]
\label{exp: threshold}
Consider the Erd\"os-R\'enyi random graph from Example~\ref{ex: erdosrenyi}.
Let the characterizing parameter $\sigma=p$, the probability that an edge of the graph exists (independent of all other edges).
Now consider the property that the graph has at least one edge, i.e. the graph is non-empty graph.
Now $\mathcal{P}_N=\{G=\{[N],E\}|E \subseteq [\binom{N}{2}] \text{ and } E \neq \emptyset \}=[2^{\binom{N}{2}}] \setminus \{\bar{K}_N\}$.
This simple graph property is clearly monotone increasing. It was proven that $p^{*}= 1/N^2$ is the threshold for $\mathbb{G}_{N,p}$ to have at least one edge ($\mathbb{G}_{N,p} \neq \bar{K}_N $). 
\end{example}

In order to clarify the statements of many theorems related to random graphs in the limit as the size goes to infinity, we follow \cite{FriezeKaronski:2016} and use the notation that an event occurs ``with high probability'' (w.h.p.) if it happens almost surely in the limit as the size goes to infinity. This is given in the next definition.

\begin{definition}[\cite{FriezeKaronski:2016}]
\label{def: whp}
We say a sequence of events $\{E_N\}$ occurs \textbf{with high probability (w.h.p.)} if
$
\lim_{N \rightarrow \infty} \mathbb{P}(E_N)=1
$.
\end{definition}

Thus in Definition~\ref{defn:threshold} the value ${\sigma}^*$ is a threshold for a property $\mathcal{P}_N$ in $\mathbb{G}_{N,\sigma}$ is the same as saying that $\mathbb{G}_{N,\sigma}$ has property $\mathcal{P}_N$ w.h.p. if $\sigma<\sigma^*$, while it is false w.h.p. if $\sigma>\sigma^*$.
\cite{BollobasThomason:1987} found that thresholds are significant for random graphs. The following theorem makes this clear.
\begin{theorem}[\cite{BollobasThomason:1987}]
\label{thm: Threshold}
    Every non-trivial monotone graph property has a threshold.
\end{theorem}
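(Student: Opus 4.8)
The plan is to run the standard amplification argument in the Erd\"os--R\'enyi model $\mathbb{G}_{N,p}$ of Example~\ref{ex: erdosrenyi} --- the setting in which Definition~\ref{defn:threshold} is applied in Example~\ref{exp: threshold} --- with characterizing parameter $\sigma=p$. Write $f_N(p):=\mathbb{P}(\mathbb{G}_{N,p}\in\mathcal{P}_N)$. First I would record three elementary facts about $f_N$. (i) $f_N$ is nondecreasing: for $p_1\le p_2$ one couples $\mathbb{G}_{N,p_1}\subseteq\mathbb{G}_{N,p_2}$ by attaching an independent $\mathrm{Uniform}[0,1]$ variable to each of the $\binom{N}{2}$ pairs and declaring the pair an edge of $\mathbb{G}_{N,p_i}$ exactly when its variable is below $p_i$; since $\mathcal{P}_N$ is monotone increasing, $\mathbb{G}_{N,p_1}\in\mathcal{P}_N$ forces $\mathbb{G}_{N,p_2}\in\mathcal{P}_N$. (ii) $f_N$ is continuous, being the polynomial $\sum_{G\in\mathcal{P}_N}p^{|E_G|}(1-p)^{\binom{N}{2}-|E_G|}$. (iii) For all large $N$, non-triviality gives $f_N(0)=\mathbf{1}[\bar K_N\in\mathcal{P}_N]=0$ and $f_N(1)=\mathbf{1}[K_N\in\mathcal{P}_N]=1$. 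By the intermediate value theorem I may then fix, for each sufficiently large $N$, a value $p^{*}(N)\in(0,1)$ with $f_N(p^{*}(N))=\tfrac12$; the claim is that $p^{*}$ is a threshold in the sense of Definition~\ref{defn:threshold}, i.e. $f_N(p)\to1$ when $p/p^{*}\to\infty$ and $f_N(p)\to0$ when $p/p^{*}\to0$. (As Example~\ref{exp: threshold} illustrates, the conditions $\sigma<\sigma^{*}$ and $\sigma>\sigma^{*}$ appearing there are to be understood as $\sigma/\sigma^{*}\to0$ and $\sigma/\sigma^{*}\to\infty$ respectively.)

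The engine is an amplification lemma. For a positive integer $k$, if $\mathbb{G}^{(1)},\dots,\mathbb{G}^{(k)}$ are independent copies of $\mathbb{G}_{N,p}$, then $\mathbb{G}^{(1)}\cup\cdots\cup\mathbb{G}^{(k)}$ has the law of $\mathbb{G}_{N,\,1-(1-p)^k}$, because a fixed pair of vertices is a non-edge of the union precisely when it is a non-edge of all $k$ copies, an event of probability $(1-p)^k$, and distinct pairs remain independent. Since each $\mathbb{G}^{(i)}$ is a subgraph of the union and $\mathcal{P}_N$ is monotone increasing, the union can fail $\mathcal{P}_N$ only if every $\mathbb{G}^{(i)}$ fails it, so by independence
\[
1-f_N\!\bigl(1-(1-p)^k\bigr)\ \le\ \bigl(1-f_N(p)\bigr)^k ,
\qquad\text{equivalently}\qquad
f_N\!\bigl(1-(1-p)^k\bigr)\ \ge\ 1-\bigl(1-f_N(p)\bigr)^k .
\]

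Next I would feed $p=p^{*}(N)$ into this and combine it with the elementary bound $1-(1-x)^k\le kx$ and the monotonicity of $f_N$. Upper regime: if $p\ge k\,p^{*}(N)$ then $p\ge 1-(1-p^{*}(N))^k$, hence $f_N(p)\ge f_N\!\bigl(1-(1-p^{*}(N))^k\bigr)\ge 1-2^{-k}$; so if $p(N)/p^{*}(N)\to\infty$ then $\liminf_N f_N(p(N))\ge 1-2^{-k}$ for every fixed $k$, and letting $k\to\infty$ yields $f_N(p(N))\to1$. Lower regime: apply the lemma to a generic $p$ with $k\,p\le p^{*}(N)$; then $1-(1-p)^k\le kp\le p^{*}(N)$, so $1-(1-f_N(p))^k\le f_N\!\bigl(1-(1-p)^k\bigr)\le f_N(p^{*}(N))=\tfrac12$, giving $f_N(p)\le 1-2^{-1/k}$; so if $p(N)/p^{*}(N)\to0$ then $\limsup_N f_N(p(N))\le 1-2^{-1/k}$ for every fixed $k$, and $k\to\infty$ (so $2^{-1/k}\to1$) yields $f_N(p(N))\to0$. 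Together these verify both halves of Definition~\ref{defn:threshold}.

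I expect the amplification lemma --- specifically, recognizing the coupling $\mathbb{G}_{N,1-(1-p)^k}\stackrel{d}{=}\bigcup_{i=1}^{k}\mathbb{G}^{(i)}$ and pairing it with the fact that $\mathcal{P}_N$ is monotone increasing --- to be the only genuinely substantive step; everything afterwards is the bound $1-(1-x)^k\le kx$ and passing to limits. The remaining points are harmless: non-triviality is only needed for large $N$, which is exactly what is required to define $p^{*}(N)$ asymptotically, and the constant $\tfrac12$ pinning down $p^{*}$ is arbitrary (any fixed $c\in(0,1)$ gives a threshold, and all such choices agree up to a bounded factor), which is precisely why the theorem asserts existence but not uniqueness of a threshold. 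If one wishes, the analogous statement for the uniform-number-of-edges model then transfers by the standard asymptotic equivalence between $\mathbb{G}_{N,p}$ and $\mathbb{G}_{N,m}$ with $m\approx\binom{N}{2}p$.
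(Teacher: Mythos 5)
Your argument is correct, and it is essentially the same proof the paper points to: the paper itself only cites \cite{FriezeKaronski:2016}, and the argument given there is exactly this one --- fix $p^{*}(N)$ with $f_N(p^{*})=\tfrac12$ via monotonicity and the intermediate value theorem, then use the $k$-fold union coupling $\mathbb{G}_{N,1-(1-p)^k}$ together with $1-(1-x)^k\le kx$ to amplify in both regimes. Your reading of Definition~\ref{defn:threshold} in the ratio sense $\sigma/\sigma^{*}\to 0$ or $\infty$ is also the intended one, as Example~\ref{exp: threshold} confirms.
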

\begin{proof}
See proof by \cite{FriezeKaronski:2016}.
\end{proof}

Since we will be working on epidemic network models of large communities, these graphs are usually assumed to have large enough size, such that the results at the $N \rightarrow \infty$ limit are reasonable to use.
We will show later that these thresholds in random graph theory is highly related to, and eventually equivalent to the epidemic threshold that determines the appearance of epidemic outbreak.

For large graphs, especially as we take the limit as $N\to\infty$, it can be convenient to model the graphs using degree distributions of vertices.
We define these distributions below for a fixed graph $G$, then explain later on how these degree distributions can be used to generate random graphs.

\begin{definition}
\label{def: Degree Distribution}
For a graph $G$ with no isolated vertices and for any positive integer $k \in \mathbb{Z}_{>0}$, let $p_k \in (0,1)$ be the probability that a randomly chosen vertex have degree $k$. 
Consider the degree $K$ of a randomly chosen vertex from $G$ as a random variable, then the \textbf{degree distribution} of $G$ is the probability distribution of $K$ with the probability mass function given by,
    \begin{equation}
        p_k=\mathbb{P}(K=k).
    \end{equation}
\end{definition}

Several commonly used discrete probability distributions are (a) mean field $k=const.$, (b) discrete Poisson $p_k=e^{-\lambda} \lambda^k/k!$, (c) discrete exponential $p_k=(1-e^{-\alpha}) e^{-\alpha (k-1)}$ and (d) scale-free (also known as truncated power-law) $p_k=k^{-\gamma}/\zeta(\gamma)$ \citep{BansalMeyers:2012}.

For a graph with finite size $N$, it is common  to specify the degree distribution using a degree sequence:
\begin{definition}
\label{def: Degree Seq}
For a graph $G$ with size $N \in \mathbb{Z}_{>0}$ and no isolated vertices, consider an ordered sequence of all its vertices labeled by $1, 2,\cdots, N$.
The \textbf{degree sequence} of the graph is a sequence $\textbf{d}=(d_1, d_2, \cdots, d_N)$, such that $d_i \in \mathbb{Z}_{>0}$ and degree of vertex $v_i$ is given by $D_G({v_i})=d_i$ for any $i=1,2,\cdots,N$.
The set of all such simple graphs is denoted by $\mathcal{G}_{N,\textbf{d}}$ while corresponding random graph is denoted by $\mathbb{G}_{N,\textbf{d}}$.
\end{definition}

It is easy to see that for a graph with finite size $N$, specifying the degree sequence is equivalent to specifying the degree distribution, by setting $p_k=|\{i|d_i=k, i=1,2,\cdots,N \}|/N$ for all $k \in \mathbb{Z}_{>0}$.
Defining the probabilities $p_k$ in this way guarantees that it is properly normalized and $\sum_{k=1} p_k=1$. For a finite graph $p_k$ is also the portion of vertices with degree $k$ among all vertices.

Even though a lot of the results we will discuss are based on the asymptotic property of graphs derived in the large $N$ limit ($N \rightarrow \infty$), all practical realizations of graphs still have finite sizes and represent finite populations.
Degree sequences are especially useful for both the applications, when vertices and their degrees are directly generated from data, and simulations, which use degree sequences to generate random networks.
Even in cases when we have a prescribed distribution, these are often derived from a sample result with a different sample size or come from an assumption that follow some classical probability distribution.
These prescribed distributions often cannot be directly used as the proportion of vertices for each positive integer value $k$ since they might not lead to integer number of vertices.
For these cases, the prescribed distribution can instead be used to  generate a degree sequence by taking each $d_i$ as i.i.d. variables from the prescribed distribution, so the final degree distribution is approximated by the described distribution for large enough $N$. 
Therefore, it is more convenient to use the degree sequence to specify random graphs, then generate the degree distribution based on such sequences for later analysis and calculation steps of the model.

\begin{definition}
\label{def: Degree Seq Addition}
Consider a degree sequence $\textbf{d}=(d_1, d_2, \cdots, d_N)$ with size $N$.
We define the following:
\begin{itemize}
    
\item The \textbf{maximum degree} $\Delta = \max\{d_i|i=1,2,..,N\}$.

\item The \textbf{total degree} $M = \sum_{i=1}^{N} d_i$.

\item The \textbf{mean degree} $\Bar{d}=\langle K \rangle=M/N$.

\item The degree sequence $\textbf{d}$ is said to be \textbf{realizable} if the set of simple graphs $\mathcal{G}_{N,\textbf{d}} \neq \emptyset$. 
    
\item For a degree sequence $\textbf{d}$, any random graph $\mathbb{G}_{N,\textbf{d}} \in \mathcal{G}_{N,\textbf{d}}$ is a \textbf{realization} of $\textbf{d}$.
\end{itemize}
\end{definition}

We note that there can be differences between the prescribed distribution of the degree sequence and the exact degree distribution of the graph.
For finite size $N$ graphs, there will be increasing difference between them as $N$ gets smaller.
For large $N$, the generating distribution is a close approximation of degree distribution.

\begin{example}
The Erdos-Renyi graphs generated using the algorithm described in Example~\ref{ex: erdosrenyi} results in a Binomial($N-1,p$) degree distribution with $p_k=\binom{N-1}{k}p^k(1-p)^{N-1-k}$ for $k=0,\dotsc,N-1$.
In the large $N$ limit, this degree distribution can be approximated by a Poisson distribution.
We note that the construction of the Erd\"os-R\'enyi graph described in Example~\ref{ex: erdosrenyi} is straightforward and does not include any random draws from the binomial degree distribution. 
However, it turns out that many real-world graphs, such as those from social networks, have degree distributions that are quite different from Poisson~\citep{NewmanStrogatzWatts:2001}.\end{example}

Given a degree distribution, the next step would be generating the random graph uniformly.
There are different ways to describe and generate random graphs $\mathbb{G}_{N,\textbf{d}}$ with size $N$ a prescribed degree sequence $\textbf{d}=(d_1,d_2,...,d_N)$.

A common confusion about some of the early work on network epidemiology \citep{NewmanStrogatzWatts:2001,Newman:2002,Newman:2003} is that these articles provide little detail about the generation algorithm they used to generate random networks and do not provide explicit statistical interpretation on how these algorithms guarantee uniform graph outputs.
In particular, there is no discussion about how to guarantee uniformity while rejecting loops and multi edges.
The intuitive description from those papers suggest that the algorithm is similar to that presented by \cite{BlitzsteinDiaconis:2011}. 
This algorithm does indeed have high efficiency and is guaranteed to generate all possible graphs, but it remains open to prove that the output is uniformly distributed.

It is still an active research topic in random graph theory \citep{Greenhill:2021} to develop algorithms that efficiently generates all random graphs with the given degree sequence under a large size $N$. It is also difficult to guarantee uniformity.
All existing methods have their limitation, like calculation speed, extra requirement of degree sequence, case by case code interpretation and cannot analytically guarantee uniform output.

To systematically illustrate the generation method, we would like to use the famous configuration model developed by \cite{Bollobas:1980}.
In Algorithm~\ref{alg: Config} we provide a short description of an algorithm based on configuration model, with some slight modification to make it more consistent with both later theorems and Newman's terminology.

\begin{algorithm}
\caption{Construction of a random network from a degree sequence with configuration model}
\label{alg: Config}
Suppose we want to generate a realization of a finite graph with $N$ vertices and a given degree sequence $\textbf{d}=(d_1,d_2,...,d_N)$. 

\begin{enumerate}
\item Following the algorithm described by \cite{Newman:2010}, we first create $N$ vertices labeled from $i=1$ to $N$, and attach $d_i$ \textbf{stubs} (half edges) to the $i$-th vertex.

\item Then we randomly pair any of the two distinct stubs to form edges by the configuration model: 
\begin{enumerate}
\item Vertices are treated as a partition of all stubs, which each vertex labeled by $i$ is isomorphic to a set $w_i$ of the partition contains all the stubs attach to it, so the set $w_i$ has $d_i$ element.

\item Randomly picking any two of the distinct stubs among all stubs and form a pair, even they come from the same partition set or there are already other pairs between the same two sets.
Once two stubs are paired, they will be recorded and removed from later paring candidates.

\item Repeat Step (b) until no unpaired stubs are left.

\item Map the paired stubs and partition sets back to graph, as edges and vertices, accordingly.

\item If there are multi-edges and loops in the result graph reject the result and restart from (a), until a simple graph is attained.
\end{enumerate}

\item Output the simple graph as final result.
\end{enumerate}
\end{algorithm}

A more systematic and detailed algorithm to randomly pairing the stubs by configuration model is given by \cite{FriezeKaronski:2016}, with analytical proof that the resulting graph is guaranteed to be uniformly selected at random from among all possible graphs.
The configuration model is the earliest algorithm that guarantee uniformity and is easy to interpret, but it requires rejection of loops and multi-edges after each whole generation procedure and restart until success, thus is slow for large $N$.
We will instead introduce another algorithm for simulation and validation of the project.

With the pairing procedure described in Algorithm~\ref{alg: Config}, it is obvious that not every positive integer sequence is a realizable degree sequence.
Moreover, the following must be true:
\begin{corollary} 
\label{cor: realizable seq}
For any realizable degree sequence $\textbf{d}=(d_1, d_2, \cdots, d_N)$ with size $N$
    \begin{itemize}
        \item The total degree $M= \sum_{i=1}^{N} d_i$ is even.

        \item For a realization $\mathbb{G}_{N,\textbf{d}}$ of $\textbf{d}$, the total number of edges $m$ must satisfy $ m = \frac{1}{2} M$
    \end{itemize}
\end{corollary}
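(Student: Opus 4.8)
The plan is to prove both bullets at once by a double-counting (``handshaking'') argument applied to any fixed realization. Since $\textbf{d}$ is realizable, Definition~\ref{def: Degree Seq Addition} tells us $\mathcal{G}_{N,\textbf{d}}\neq\emptyset$, so I would fix a realization $G=(V_G,E_G)=\mathbb{G}_{N,\textbf{d}}$; by construction this is a simple undirected graph on vertices labelled $1,\dots,N$ with $D_G(v_i)=d_i$ for each $i$. Everything then reduces to counting the incidences between vertices and edges in two different ways.

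First I would introduce the incidence set $I=\{(v,e)\in V_G\times E_G:\ v\text{ is an end of }e\}$ and count $|I|$ twice. Summing over vertices, by Definition~\ref{def: size and degree} each vertex $v_i$ occurs in exactly $D_G(v_i)=d_i$ of these pairs, so $|I|=\sum_{i=1}^{N}d_i=M$. Summing over edges, because $G$ is simple it has no loops, hence each edge has two \emph{distinct} ends and contributes exactly $2$ to $|I|$; therefore $|I|=2|E_G|=2m$. Equating the two expressions gives $M=2m$, which immediately yields $m=\tfrac12 M$, and since $m\in\mathbb{Z}_{\ge 0}$ it also shows that $M$ is even. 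As an alternative phrasing I could argue directly from the stub picture in Algorithm~\ref{alg: Config}: there are $M$ stubs in total, and a completed pairing partitions them into edges each consisting of two stubs, so a complete pairing is only possible when $M$ is even, and it then produces exactly $M/2$ edges.

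There is no genuinely hard step here; the statement is essentially the degree-sum formula. The only point that requires any attention — and the only place the hypotheses are actually used — is that simplicity excludes loops, so that ``each edge contributes $2$'' is an exact identity rather than merely an upper bound, together with realizability guaranteeing that a graph $G$ with degree sequence $\textbf{d}$ exists to which the argument can be applied. (Parallel edges, were they allowed, would not affect the count, since each such edge still has two distinct ends.)
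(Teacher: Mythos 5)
Your proof is correct and matches the paper's (implicit) justification: the paper derives this corollary directly from the stub-pairing in Algorithm~\ref{alg: Config}, which is exactly the alternative phrasing you give at the end, and your primary incidence double-counting argument is the same count expressed as the degree-sum formula. Nothing is missing.
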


This is the only condition that \cite{NewmanStrogatzWatts:2001} and \cite{MillerSlimVolz:2012} mention for a degree sequence. However, there are more conditions needed for a degree sequence to be realizable.

\begin{theorem}[\cite{ErdosGallai:1960}]
\label{thm: Realizable}
    Let $d_1 \geq d_2 \geq \cdots \geq d_N$ be non-negative integers with $M=\sum_{i=1}^{N} d_i$ even. Then the sequence $\textbf{d}=\{d_1,d_2,\cdots,d_N\}$ is realizable if and only if
    \begin{equation}
        \sum_{i=1}^{k} d_i \leq k (k-1) + \sum_{i=k+1}^{n} \min (k,d_i) \text{ for each } k \in \{1,2,\cdots,N\}
    \end{equation}
\end{theorem}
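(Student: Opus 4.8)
The plan is to prove the two directions separately; necessity of the inequalities is a short counting argument, while sufficiency requires an inductive construction and is where the real work lies. For necessity, suppose $\mathbf{d}$ is realized by a simple graph $G$, with vertices labelled so that $D_G(v_i) = d_i$ and $d_1 \ge d_2 \ge \cdots \ge d_N$. Fix $k \in \{1, \dots, N\}$ and let $S = \{v_1, \dots, v_k\}$ be the $k$ vertices of largest degree. In the sum $\sum_{i=1}^{k} d_i$ each edge with both ends in $S$ is counted twice and each edge with exactly one end in $S$ is counted once. Since $G$ is simple, $S$ spans at most $\binom{k}{2}$ edges, contributing at most $k(k-1)$; and each vertex $v_i$ with $i > k$ is incident with at most $\min(k, d_i)$ edges meeting $S$ --- at most $k$ because $|S| = k$ and $G$ is simple, and at most $d_i$ because $d_i$ is its degree. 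Adding these bounds yields $\sum_{i=1}^{k} d_i \le k(k-1) + \sum_{i=k+1}^{N} \min(k, d_i)$, which is the asserted inequality.

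For sufficiency I would induct on $m = M/2$, the number of edges any realization must have (Corollary~\ref{cor: realizable seq}). The base cases $m = 0$ (the empty graph $\bar{K}_N$) and $m = 1$ (a single edge) are immediate. In the inductive step the strategy is to remove one edge: pick two indices, decrement the corresponding entries of $\mathbf{d}$ by $1$, re-sort into a non-increasing sequence $\mathbf{d}'$ of length $N$ with non-negative entries and even sum $M-2$; verify that $\mathbf{d}'$ still satisfies all $N$ Erd\H{o}s--Gallai inequalities; apply the induction hypothesis to get a simple graph $G'$ realizing $\mathbf{d}'$; and then reinstate the removed edge. A natural choice is to decrement $d_1$ together with one carefully chosen other entry (as in Choudum's proof of this theorem), or to follow the Havel--Hakimi prescription and subtract $1$ from each of the $d_1$ largest remaining degrees at once; in either case one must keep track of the re-sorting, use the $k=1$ inequality to guarantee enough positive entries exist, and ensure that the edge being put back was not already present in $G'$.

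The hard part is showing that the Erd\H{o}s--Gallai inequalities are preserved by the reduction. Decrementing entries and re-sorting alters both the left-hand partial sums $\sum_{i \le k} d'_i$ and the truncated right-hand sums $\sum_{i > k} \min(k, d'_i)$, and these changes are not uniform in $k$, so the value of $k$ at which an inequality is tight can shift. The verification splits into cases according to whether the decremented positions lie among the first $k$ entries, whether $d_i$ equals, exceeds, or is smaller than $k$ at the relevant indices, and whether the inequality for a given $k$ already held with enough slack to absorb a change of $1$ or $2$. This bookkeeping is elementary but lengthy; it is essentially the content of Choudum's argument (equivalently, of the fact that the Havel--Hakimi reduction preserves the Erd\H{o}s--Gallai condition), and it is where nearly all the effort goes. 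A non-inductive alternative --- an extremal or ``switching'' argument on integer sequences that almost realize $\mathbf{d}$ --- is available but requires a comparable case analysis, so I would stay with the induction.
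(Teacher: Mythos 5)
Your necessity argument is the same counting argument the paper gives, and in fact it is slightly more careful: you correctly observe that in $\sum_{i=1}^k d_i$ each edge spanned by the top-$k$ set is counted \emph{twice}, so the internal contribution is at most $2\binom{k}{2}=k(k-1)$, whereas the paper's wording ("there are at most $\binom{k}{2}=k(k-1)$ edges") conflates the number of internal edges with their contribution to the degree sum. Your version is the one that should be written down. For sufficiency, the paper does not supply a proof at all --- it explicitly states that "sufficiency is less obvious" and refers the reader to the original Erd\H{o}s--Gallai paper --- so there is no in-paper argument to compare against.

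That said, as a self-contained proof your proposal is incomplete in exactly the place you flag: the inductive step requires verifying that the reduced sequence $\mathbf{d}'$ still satisfies all $N$ inequalities, and you describe the case analysis without carrying it out. This is not a wrong approach --- Choudum's induction (or the Havel--Hakimi reduction) does work --- but the preservation of the inequalities under the reduction is the entire mathematical content of the sufficiency direction, and until that case analysis is executed the "if" direction is a plan rather than a proof. One additional caution for when you do execute it: in the Havel--Hakimi variant you must also argue that the reconstructed graph can be made simple (i.e., that the vertex of degree $d_1$ can be joined to $d_1$ \emph{distinct} vertices none of which it is already adjacent to in $G'$); since $G'$ is built fresh from $\mathbf{d}'$ rather than obtained by deleting edges from a known realization, this requires the standard exchange/switching argument and is not automatic. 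Given that the paper itself defers sufficiency to the literature, matching the paper only requires the necessity half, which you have done correctly.
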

For any subset of $k$ vertices, the left side of the inequality is the maximum number of edges that these vertices could have, since it is the sum of the first $k$ largest degrees in $\mathbb{d}$.
The right side of the inequality comes from another counting estimation.
For such $k$ vertices in the subset, there are at most $\binom{k}{2}=k(k-1)$ edges that connected to each other within the subset.
For each vertex $a$ outside the subset, there are at most $\min (k,D_G(a))$ edges connected to vertex in the subset, where such number is bounded by $\sum_{i=k+1}^{n} \min (k,d_i)$.
This clearly shows the necessity of the condition, but sufficiency is less  obvious and we refer to \citep{ErdosGallai:1960} for further discussion.

Given a degree sequence generated from a given degree distribution, to guarantee the sequence is realizable we have the following additional algorithm that we need to apply before generating a random graph:

\begin{algorithm} 
\caption{Generation of realizable degree sequence from a prescribed distribution}
\label{alg: Seq}
Consider size $N$ and the prescribed distribution given by $\mathbb{P}(D=d), d\in \mathbb{Z}_{>0}$

\begin{enumerate}
\item Following the algorithm described by \cite{Newman:2010}, we first draw a set of $N$ positive integer from the prescribed distribution.

\item If the sum of all $N$ integers in Step 1 is even, proceed the sequence to Step 3. 
If there is an odd total Degree, randomly chose one of the vertices and changed its degree by drawing again from the prescribed distribution until the total number of stubs is even, then proceed to the next step.

\item Since both the order of vertices and the degree sequence have no impact when generating random graphs, we re-order and sort the sequence in descending order, as degree sequence $\textbf{d}$.

\item Check if the sequence $\textbf{d}$ satisfies Theorem~\ref{thm: Realizable}.
If satisfied, take $\textbf{d}$ as degree sequence.
Otherwise, restart from Step 1.
\end{enumerate}
\end{algorithm}

\subsection{Components}
\label{sec: components}
Taking the limit when the size of the graph $N\to\infty$ allows us to model large networks with a lot of the analysis simplified.
It is important to recognize entities that are called extensive or giant components. 
These are components that contain a significant proportion of the overall number of vertices of the graph and has a much larger size compared to other components in the same graph.
It is not hard to see that adding edges to a graph with an existing giant component does not break the existence of a giant component with a dominant size compared to other components.
Also, an empty graph has all its components with the same size of one, while complete graph contains only one single giant component with size $N$.
Therefore, the appearance and existence of giant appearance is a nontrivial monotone increasing graph property of random graphs. Thus, by Theorem~\ref{thm: Threshold}, it must have a threshold.

\begin{example}
The existence and uniqueness of such giant component was originally observed in infinite Erd\"os-R\'enyi random graphs $\mathbb{G}_{N,p}$ by \cite{ErdosRenyi:1960}, where the threshold is given by $p^{*}=1/N$.
More precisely, for large enough network size $N$,
    \begin{enumerate}
        \item If $p<p^{*}=1/N$, w.h.p. $\mathbb{G}_{N,p}$ has no component of size larger than $O(\log(N))$.
        \item If $p=p^{*}=1/N$, w.h.p. $\mathbb{G}_{N,p}$  has a unique component of size proportional to $N^{\frac{2}{3}}$.
        \item If $p>p^{*}=1/N$, w.h.p. $\mathbb{G}_{N,p}$ has a unique giant component of size $\Theta N$, which contains a positive fraction $\Theta \in (0,1)$ of all $N$ vertices. 
        In addition, w.h.p. no other component other than the giant one will contain more than $O(\log(N))$ vertices.
    \end{enumerate}
\end{example}

It can be difficult to define a giant component in random graphs outside of the Erd\"os-R\'enyi system.
However, such an extensive component remains particularly important in many network applications.
For example, in the graph corresponding to the Internet, if there was no giant component the network would not be able to perform its intended role.
To discuss giant components for arbitrary degree sequences we first consider the following result from \cite{MolloyReed:1998}.

\begin{theorem}(Adapted from \cite{MolloyReed:1995})
\label{thm: Giant Component Threshold}
    Consider random graphs $\mathbb{G}_{N,\textbf{d}}$ with size $N$ and realizable degree sequence $\textbf{d}$.
    For $\textbf{d}$, let $\Delta$ denotes its maximum degree, $\langle K \rangle$ denotes its mean degree and the corresponding degree distribution be given by a properly normalized probability mass function $\mathbb{P}(K=k)=p_k$ for all $k \in \mathbb{Z}_{>0}$.
    Suppose $\Delta= O(N^{\frac{1}{4}-\epsilon})$ for some $\epsilon > 0$ and let $\Lambda=\Lambda(d)=\sum_{k=1} k (k-2) p_k$.
    Then:
    \begin{enumerate} 
        \item[(a)] If $\Lambda > 0$, then w.h.p. there exist constants $\Theta, \zeta$ that depends on $\textbf{d}$ such that $\mathbb{G}_{N,\textbf{d}}$ has a giant component with size at least $\Theta N$ and with at least $\zeta N$ cycles.
        Furthermore, w.h.p. this component is the only component of \textbf{$\mathbb{G}_{N,\textbf{d}}$} with size greater than $\gamma \log (N)$ for some constant $\gamma$ depend on $\textbf{d}$.

        \item[(b)]  If $\Lambda < 0$, and for some function $\omega(N) = O(N^{\frac{1}{8}-\epsilon})$, $\Delta \leq \omega(N)$, then w.h.p. for some constant $R$ depending on $\Lambda$ (and thus also on $\textbf{d}$), $\mathbb{G}_{N,\textbf{d}}$ has no component with more than $R \omega(N)^2 \log(N)$ vertices.
        Also, w.h.p. $\mathbb{G}_{N,\textbf{d}}$  a.s. has fewer than $2 R \omega(N)^2 \log(N)$ cycles and has no component containing more than one cycle.   
    \end{enumerate}
\end{theorem}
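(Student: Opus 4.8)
The plan is to use the \textbf{exploration (exposure) process} underlying the original proofs of \cite{MolloyReed:1995,MolloyReed:1998}. I would work with the configuration-model multigraph of Algorithm~\ref{alg: Config} \emph{before} rejecting loops and multi-edges: under $\Delta=O(N^{1/4-\epsilon})$ the expected numbers of loops and of multi-edges are $O(1)$, so the multigraph is simple with probability bounded away from $0$, and every event holding w.h.p. for the multigraph also holds w.h.p. for $\mathbb G_{N,\mathbf d}$. To expose the component of a uniformly random root, mark the root's stubs \emph{active}; then repeatedly pick an active stub, pair it with a uniformly chosen still-unmatched stub, mark both matched, and --- if the partner lies on a not-yet-reached vertex $v$ --- mark the remaining $d_v-1$ stubs of $v$ active. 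Writing $A_t$ for the number of active stubs after $t$ pairings ($A_0=d_{\mathrm{root}}$), for $t=o(N)$ the partner is uniform among $\approx M-2t$ stubs, so it lands on a fresh degree-$k$ vertex with probability $\approx kp_k/\langle K\rangle$ (giving increment $k-2$) and on an already-reached vertex with probability $O(t/M)$ (giving increment $-2$); hence $\mathbb E[A_{t+1}-A_t\mid\mathcal F_t]=\langle K\rangle^{-1}\sum_k k(k-2)p_k+o(1)=\Lambda/\langle K\rangle+o(1)$. So $A_t$ is a random walk whose drift has the sign of $\Lambda$, and the proof splits according to that sign; equivalently, the breadth-first tree is governed by a Galton--Watson process with size-biased offspring law $\{kp_k/\langle K\rangle\}_{k\ge1}$, whose mean is $\langle K(K-1)\rangle/\langle K\rangle$, which is $>1$ when $\Lambda>0$ and $<1$ when $\Lambda<0$.

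\textbf{Subcritical case, $\Lambda<0$.} Negative drift together with bounded increments $|A_{t+1}-A_t|\le\Delta\le\omega(N)$ gives, via a Chernoff/large-deviation bound on the partial sums of the supermartingale $A_t-t\Lambda/(2\langle K\rangle)$, that the exploration from a fixed vertex reaches $A_t=0$ within $O(\omega(N)^2\log N)$ steps except with probability $o(N^{-1})$; a union bound over the $N$ possible roots yields the size bound of part (b). For the cycle statements I would use a first-moment argument: $\Lambda<0$ means $\langle K(K-1)\rangle/\langle K\rangle<1$, so the expected number of cycles of length $\ell$ in the configuration model is $\le\tfrac{1}{2\ell}\big(\langle K(K-1)\rangle/\langle K\rangle\big)^\ell(1+o(1))$, and summing over $\ell$ bounds the total expected number of cycles by a constant --- hence below $2R\,\omega(N)^2\log N$ w.h.p.; likewise the expected number of components carrying two or more independent cycles and at most $R\,\omega(N)^2\log N$ vertices is $o(1)$, so w.h.p. no component contains more than one cycle.

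\textbf{Supercritical case, $\Lambda>0$.} Here the walk has positive drift. For a fixed root it stays strictly positive throughout its first $\eta N$ steps with probability at least $\rho-o(1)$, where $\rho>0$ is the survival probability of the Galton--Watson process above: Azuma's inequality with increments $\le\Delta=O(N^{1/4-\epsilon})$ gives fluctuation $O(\Delta\sqrt{\eta N})=o(N)$ over that range, so the deterministic positive-drift trajectory dominates. Thus a random vertex lies in a component of size $\ge\Theta N$ with probability $\ge\rho-o(1)$, and a second-moment computation on the number of vertices in components of size $\ge\gamma\log N$ shows this count is concentrated around a deterministic $\Theta' N$. \emph{Uniqueness} I would get by a sprinkling / two-round exposure: if two components of size $\ge\epsilon N$ survived the first exposure they would be joined in the second with probability $1-o(1)$, so w.h.p. there is a single giant. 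Finally, a positive fraction of all stubs lie in the giant, so a positive fraction of the pairings made while exploring it send an active stub onto an already-reached vertex; each such event raises the cycle rank by one, producing $\ge\zeta N$ independent cycles.

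\textbf{Main obstacle.} The technical heart is \emph{concentration of the active-stub process over a linear number of steps}: one must show $A_t$ stays within $o(N)$ of the line $t\mapsto d_{\mathrm{root}}+t\Lambda/\langle K\rangle$ even though the partner-stub distribution drifts away from $kp_k/\langle K\rangle$ as stubs are consumed and high-degree vertices cause large jumps. This is exactly what $\Delta=O(N^{1/4-\epsilon})$ (and $\Delta\le\omega(N)=O(N^{1/8-\epsilon})$ in part (b)) is there to control, through bounded-difference martingale inequalities whose error terms scale with $\Delta$; making those errors $o(N)$ over $\Theta(N)$ steps, and the separate sprinkling argument for uniqueness of the giant, are the two places where the real work lies.
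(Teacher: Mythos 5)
The paper does not actually prove this theorem: it is imported verbatim (``adapted from'') \cite{MolloyReed:1995}, so the only fair comparison is against the proof in that source. Your sketch follows exactly the Molloy--Reed route --- exposure of components via an active-stub random walk on the configuration multigraph, drift $\Lambda/\langle K\rangle$ from the size-biased offspring law, negative-drift hitting-time bounds for (b), positive-drift survival plus a second-moment and sprinkling argument for (a), and first-moment counts for the cycle statements --- so as a roadmap it is the ``same approach,'' and the individual steps you describe are the right ones. Two caveats are worth recording. First, your transfer from the multigraph to the simple graph is mis-attributed: $\Delta=O(N^{1/4-\epsilon})$ alone does \emph{not} make the expected number of loops and multi-edges $O(1)$ (take all degrees equal to $N^{1/4-\epsilon}$; the expected loop count is then of order $\Delta$). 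What actually keeps $\mathbb P(\text{simple})$ bounded away from zero is $\sum_i d_i^2=O(N)$, which here comes from the implicit requirement that $\Lambda=\sum_k k(k-2)p_k$ converge to a finite limit (Molloy and Reed's ``well-behaved sequence'' hypotheses), not from the maximum-degree bound; you should cite that hypothesis rather than the $\Delta$ bound. Second, what you have written is a plan rather than a proof: the uniform concentration of $A_t$ over $\Theta(N)$ steps, the correct formulation of sprinkling inside the configuration model (one must reserve stubs rather than add independent edges), and the second-moment computation are precisely where the length of the original paper lives, and you have flagged them without executing them. As a blind reconstruction of the cited argument this is accurate and well-organized; as a self-contained proof it is incomplete at exactly the points you identify as the main obstacles.
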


Since the idea of giant component is originally defined for infinite Erd\"os-R\'enyi graph, it is difficult to give a precise definition of a giant component for a fixed finite network \citep{KissMillerSimon:2017}.
But to better discuss such as key property of the network model, we provide a definition for network with large enough size $N$ following the idea of Theorem~\ref{thm: Giant Component Threshold} with some level of ambiguity.

\begin{definition}
\label{def: giant component}
For graph with a large enough size $N$, if the component with largest size has size greater than $\gamma \log(N)$ with some constant $\gamma$ bounded away from zero, such component is said to be a \textbf{giant component.}
Any component that is not a giant component is called a \textbf{small component}.
\end{definition}

In the network models we are interested in, the degree represents number of disease-causing connections.
The maximum number of such connections is likely to be limited and bounded, even for large population sizes.
This is also applicable for the distributions previously mentioned in Section~\ref{sec: randomgraphs}, which are commonly used to generate degree sequence.
With proper parameters, the maximum degree of the sequence generated by these distributions are either bounded (as in the Mean Field case) or can be truncated.
With that in mind, we assume that the maximum degree $\Delta$ always satisfies $\Delta=O(1)$, so the condition in Theorem~\ref{thm: Giant Component Threshold} is satisfied for large enough $N$.

\cite{MolloyReed:1998} provide further result based on Theorem~\ref{thm: Giant Component Threshold} to give a size estimate of the giant component. 
Based on that, \cite{FriezeKaronski:2016} proved special case under the condition $\Delta=O(1)$ for the same threshold $\Lambda=\Lambda(d)=\sum_{k=1} k (k-2) p_k$ with more details.
\begin{theorem}[\cite{FriezeKaronski:2016}]
\label{thm: Giant Component Size}
    For $\textbf{d}$, with maximum degree $\Delta = O(1)$ and the its corresponding degree distribution is given by a properly normalized probability mass function $\mathbb{P}(K=k)=p_k$ for all $k \in \mathbb{Z}_{>0}$ such that $\sum_{k=1} p_k=1$.
    Let $\Lambda=\Lambda(d)=\sum_{k=1} k (k-2) p_k$ and $N \rightarrow \infty$ large enough.
    \begin{enumerate} 
        \item[(a)] If $\Lambda < 0$, then w.h.p. the size of the largest component in $\mathbb{G}_{N,\textbf{d}}$ is $O(\log N)$, thus all component is small component.
        
        \item[(b)] If $\Lambda > 0$, then w.h.p. there is a unique giant component of linear size $\Theta N$ where $\Theta \in (0,1]$ is defined as follows:
        Let
        \begin{equation}
        \label{eqn: alpha eqn}
            f(\alpha)=\langle K \rangle - 2 \alpha - \sum_{k=1}^{\Delta} k p_k \Big(1-\frac{2 \alpha}{\langle K \rangle}\Big)^{k/2}
        \end{equation}
        and $\psi$ be the smallest positive solution to $f(\alpha)=0$.
        Then
        \begin{equation}
        \label{eqn: size Theta eqn}
            \Theta=1-\sum_{k=1}^{\Delta} p_k \Big(1-\frac{2 \psi}{\langle K \rangle}\Big)^{k/2}
        \end{equation}
        If $p_1=0$ then $\Theta=1$ otherwise $\Theta \in (0,1)$.

        \item[(c)] In Case(b), the degree sequence of the graph obtained by deleting the giant component satisfies the conditions of Case (a).
    \end{enumerate}
\end{theorem}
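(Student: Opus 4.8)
The plan is to reduce everything to the branching-process description of the local structure of $\mathbb{G}_{N,\textbf{d}}$ and then transcribe that description into the stated formulas. Write $G_0(x)=\sum_{k\ge 1}p_k x^k$ for the degree generating function; there are no isolated vertices so $p_0=0$, and $G_0'(1)=\langle K\rangle$. Set $G_1(x)=G_0'(x)/\langle K\rangle$. Exploring the component of a uniformly chosen vertex behaves like a Galton--Watson process whose root has offspring law $(p_k)$ (one child per edge) and whose later individuals, being reached along an edge and so having a size-biased degree with one edge used up, have offspring generating function $G_1$; the mean offspring away from the root is $G_1'(1)=G_0''(1)/\langle K\rangle=(\langle K^2\rangle-\langle K\rangle)/\langle K\rangle$, which exceeds $1$ exactly when $\langle K^2\rangle-2\langle K\rangle=\Lambda>0$. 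This Molloy--Reed dichotomy is why the same $\Lambda$ governs both Theorem~\ref{thm: Giant Component Threshold} and the present statement. Part (a) then follows at once from Theorem~\ref{thm: Giant Component Threshold}(b): with $\Delta=O(1)$ we may take $\omega(N)$ there to be constant, so w.h.p.\ no component exceeds $R\,\omega(N)^2\log N=O(\log N)$ vertices (a direct union bound over starting vertices, each exploration dominated by a subcritical branching process, also works). Likewise, in (b) the existence of a linear-size giant component, its uniqueness, and the $O(\log N)$ bound on every other component are exactly Theorem~\ref{thm: Giant Component Threshold}(a), whose hypothesis holds since $\Delta=O(1)$. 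So the genuinely new work is to identify the constant $\Theta$ in (b) and to prove the subcriticality claim in (c).

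To identify $\Theta$: the size of the giant component equals $N$ times the fraction of vertices lying in a ``large'' component, and this fraction concentrates — by an Azuma--Hoeffding estimate for the stub-exposure martingale, using that each exploration step moves the number of active stubs by at most $\Delta=O(1)$ — around the survival probability of the branching process above. That survival probability is $\Theta=1-G_0(u)$, where $u\in[0,1]$ is the smallest root of $u=G_1(u)$ (the per-edge extinction probability). When $\Lambda>0$ we have $G_1'(1)>1$, hence $u<1$; and $u=0$ exactly when $p_1=0$, since $G_1(0)=p_1/\langle K\rangle$. As $p_0=0$, this gives $\Theta=1$ when $p_1=0$ and $\Theta\in(0,1)$ otherwise. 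The stated formulas are now a change of variable: setting $u=(1-2\alpha/\langle K\rangle)^{1/2}$, that is $\alpha=\tfrac{\langle K\rangle}{2}(1-u^2)$, one computes $f(\alpha)=\langle K\rangle u^2-\sum_k k p_k u^k=u\,\big(\langle K\rangle u-G_0'(u)\big)$, so $f(\alpha)=0$ is equivalent to $u=0$ or $u=G_1(u)$; the nontrivial fixed point $u<1$ corresponds to the \emph{smallest positive} root $\psi$, and then $\Theta=1-G_0(u)=1-\sum_{k=1}^{\Delta}p_k(1-2\psi/\langle K\rangle)^{k/2}$.

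For part (c): deleting the giant component leaves a uniformly random graph on the residual degree sequence (the configuration model has this resampling property, conditional on the degrees exposed during the exploration). A vertex outside the giant has all of its neighbours outside as well, so it keeps all of its edges; hence a degree-$k$ vertex is retained with probability asymptotic to $u^k$ and the residual degree distribution is $\hat p_k=p_k u^k/G_0(u)$. Then $\hat\Lambda=\sum_k k(k-2)\hat p_k=\frac{1}{G_0(u)}\big(u^2 G_0''(u)-u G_0'(u)\big)=\frac{u^2\big(G_0''(u)-\langle K\rangle\big)}{G_0(u)}$, using $G_0'(u)=\langle K\rangle u$. Since $G_1=G_0'/\langle K\rangle$ is convex with $G_1(1)=1$ and $G_1'(1)>1$, the function $G_1(x)-x$ is convex and vanishes at both $u$ and $1$ with positive slope at $1$, so its slope at $u$ is negative; hence $G_1'(u)<1$, i.e.\ $G_0''(u)<\langle K\rangle$, and therefore $\hat\Lambda<0$, so Case (a) applies to the residual graph. (If $u=0$ the residual graph is empty since $p_0=0$, and (c) is vacuous.)

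The main obstacle is the honest justification of the branching-process approximation in (b). One has to control the depletion of the stub pool as the exploration runs — once a positive fraction of stubs has been paired, the size-biased offspring law drifts, and this correction is exactly the source of the factor $(1-2\alpha/\langle K\rangle)^{k/2}$ — and one has to show that the separately explored ``large'' clusters merge into a single giant one, which is the only non-local part of the argument and is usually handled by a sprinkling or two-round exposure argument. The hypothesis $\Delta=O(1)$ is doing real work throughout: it bounds the martingale increments and keeps the depletion correction at leading order. For the statement in full one would instead cite the detailed treatment of \cite{FriezeKaronski:2016}, which builds on \cite{MolloyReed:1998}; a secondary point needing care in (c) is the claim that the graph remaining after deletion is again a uniform configuration model on its (now random) degree sequence.
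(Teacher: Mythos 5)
The paper does not actually prove this theorem: it is stated as a result imported from \cite{FriezeKaronski:2016} (building on \cite{MolloyReed:1998}), with no proof environment following it, so there is no in-paper argument to compare yours against. That said, your outline is a correct and well-organized sketch of the standard proof. The reductions are right: part (a) and the existence/uniqueness portion of part (b) do follow from Theorem~\ref{thm: Giant Component Threshold} once $\Delta=O(1)$ lets you take $\omega(N)$ constant; the change of variables $u=(1-2\alpha/\langle K\rangle)^{1/2}$ correctly turns $f(\alpha)=0$ into $u\,(\langle K\rangle u-G_0'(u))=0$, and the smallest positive root $\psi$ does correspond to the extinction probability $u^*$ of the size-biased branching process (the decreasing bijection $\alpha\mapsto u$ sends the smallest positive $\alpha$-root to the largest $u$-root below $1$, which by convexity of $G_1$ is the unique fixed point in $[0,1)$); your dichotomy $\Theta=1$ iff $p_1=0$ and the computation $\hat\Lambda=u^2\bigl(G_0''(u)-\langle K\rangle\bigr)/G_0(u)<0$ for part (c) are both correct, the latter using $G_1'(u^*)<1$ exactly as you argue. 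You have also honestly identified where the real work lies — the concentration of the exploration process, the depletion of the stub pool (which is genuinely where the $(1-2\alpha/\langle K\rangle)^{k/2}$ factor comes from in the Molloy--Reed/Frieze--Kar\'onski differential-equation analysis), the sprinkling step that merges large clusters, and the resampling property of the configuration model needed to make the residual graph in (c) again uniform on its degree sequence. None of these gaps is a wrong turn; they are precisely the technical content of the cited proof. As a self-contained proof your text is a sketch rather than a complete argument, but as a blind reconstruction of the route taken in the source the paper defers to, it is accurate and adds genuine explanatory value over the paper's bare citation.
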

Clearly, $\Lambda$ is the threshold parameter for the existence of giant component. 
There is no conclusion at the transition point $\Lambda=0$.
If we treat $\Theta$ as the proportion of the vertices that belongs to a giant component and let Case (a) in Theorem~\ref{thm: Giant Component Size}, where no giant component exists as $\Theta=0$, a natural corollary would be:
\begin{corollary}
\label{cor: component structure}
    For random graph $\mathbb{G}_{N,\textbf{d}}$ satisfied condition of Theorem~\ref{thm: Giant Component Size},
    If $\Lambda \neq 0$, one and only one of the following three cases must be true w.h.p. 
    \begin{enumerate} 
        \item[(a)] The random graph $\mathbb{G}_{N,\textbf{d}}$ contains only one component, which is the giant component with $\Theta=1$.

        \item[(b)] The random graph $\mathbb{G}_{N,\textbf{d}}$ contains only one giant component with $\Theta \in (0,1)$ and all of the rest components are small components w.h.p..

        \item[(c)] The random graph $\mathbb{G}_{N,\textbf{d}}$ contains no giant component as $\Theta=0$, and all components will be small components.
    \end{enumerate}
\end{corollary}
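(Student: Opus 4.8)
The statement is essentially a repackaging of Theorem~\ref{thm: Giant Component Size}, so the plan is a case split on the sign of $\Lambda$ (which is nonzero by hypothesis): first establish that one of the three alternatives holds w.h.p., then observe that they are mutually exclusive, which upgrades ``one of'' to ``one and only one''. Throughout I use the convention, introduced just before the corollary, that the absence of a giant component is recorded as $\Theta=0$, and I read off ``small component'' in the sense of Definition~\ref{def: giant component}.

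For the existence part: if $\Lambda<0$, Theorem~\ref{thm: Giant Component Size}(a) says that w.h.p. the largest component of $\mathbb{G}_{N,\textbf{d}}$ has size $O(\log N)$, so w.h.p. every component is small and there is no giant component, i.e. we are in case (c) with $\Theta=0$. If $\Lambda>0$, Theorem~\ref{thm: Giant Component Size}(b) gives, w.h.p., a unique giant component of linear size $\Theta N$ with $\Theta\in(0,1]$ determined by \eqref{eqn: alpha eqn}--\eqref{eqn: size Theta eqn}; moreover Theorem~\ref{thm: Giant Component Size}(c) says the degree sequence obtained by deleting that giant component again satisfies the hypotheses of part (a), so applying part (a) to the residual graph shows that w.h.p. every remaining component has size $O(\log N)$, hence is small. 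It then only remains to split on the value of $\Theta$, which by Theorem~\ref{thm: Giant Component Size}(b) is governed by $p_1$: if $p_1>0$ then $\Theta\in(0,1)$ and we are in case (b) (one giant component, all others small), while if $p_1=0$ then $\Theta=1$ and the giant component carries all but $o(N)$ of the vertices, which is case (a).

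For the ``only one'' clause: the value $\Theta$ is a deterministic function of $\textbf{d}$, and the three cases correspond to the disjoint alternatives $\Theta=1$, $\Theta\in(0,1)$, $\Theta=0$; equivalently to ($\Lambda>0$, $p_1=0$), ($\Lambda>0$, $p_1>0$), and $\Lambda<0$. Since $\Lambda\ne0$ forces exactly one of $\Lambda>0$, $\Lambda<0$, and within $\Lambda>0$ exactly one of $p_1=0$, $p_1>0$ holds, precisely one of the three cases is the relevant one, and the argument above shows its structural assertion holds w.h.p.

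The delicate point --- the place where a little more than Theorem~\ref{thm: Giant Component Size} is needed --- is case (a): parts (b)--(c) only guarantee that the non-giant part has $o(N)$ vertices when $\Theta=1$, not that it is empty, whereas ``contains only one component'' is literally stronger. (Indeed, when $p_1=0$ but $p_2>0$, the configuration model typically retains a bounded-in-expectation number of short isolated cycles on degree-$2$ vertices.) To close this gap one either invokes an additional result to the effect that, for minimum degree at least $2$ and $\Lambda>0$, the residual graph consists only of small components in the sense of Definition~\ref{def: giant component} --- which is all case (a) really asserts once one reads it through that definition --- or one simply accepts the mild ambiguity the authors have already built into Definition~\ref{def: giant component} and the surrounding discussion, interpreting $\Theta=1$ as ``the giant component contains all but an $o(N)$ fraction of the vertices''. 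Under either reading the corollary follows; everything else is routine bookkeeping on top of Theorem~\ref{thm: Giant Component Size}.
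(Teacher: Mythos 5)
Your proof is correct and takes the same route the paper intends: the corollary is stated there without proof, as an immediate repackaging of Theorem~\ref{thm: Giant Component Size}, via exactly the case split on the sign of $\Lambda$ and then on $p_1$ that you carry out. Your observation that case (a) as literally worded (``contains only one component'') is stronger than what $\Theta=1$ actually delivers --- since, for $p_1=0$ and $p_2>0$, isolated short cycles of degree-$2$ vertices persist with non-vanishing probability --- is a genuine and correct catch; the paper silently relies on the looser reading you describe, under which the statement does follow.
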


Moreover, we have the following result:
\begin{theorem}
\label{thm: component}
 Under the conditions of Theorem~\ref{thm: Giant Component Size} and $\Lambda \neq 0$,
    \begin{enumerate}
        \item[(a)] If it exists, the giant component in $\mathbb{G}_{N,\textbf{d}}$ is unique and contains cycles w.h.p..
        
        \item[(b)] Any small components in $\mathbb{G}_{N,\textbf{d}}$ must be a tree w.h.p..
    \end{enumerate}
\end{theorem}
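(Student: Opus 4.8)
The plan is to handle the two parts separately, leaning on the structural theorems stated above (Theorem~\ref{thm: Giant Component Threshold} of Molloy--Reed and Theorem~\ref{thm: Giant Component Size}), both of which apply here because the standing assumption $\Delta=O(1)$ certainly implies $\Delta=O(N^{1/4-\epsilon})$.

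For part (a), uniqueness is essentially already recorded: when $\Lambda>0$, Theorem~\ref{thm: Giant Component Size}(b) together with Corollary~\ref{cor: component structure} asserts that the giant component is unique, while when $\Lambda<0$ there is no giant component at all and the claim is vacuous. To see that the giant component is not a tree I would quote Theorem~\ref{thm: Giant Component Threshold}(a): w.h.p.\ the giant component contains at least $\zeta N$ cycles for some constant $\zeta>0$ depending on $\textbf{d}$. Since $\zeta N\ge 1$ once $N$ is large, the giant component contains at least one cycle and hence is not a tree.

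For part (b) the strategy is to reduce everything to the subcritical regime and then invoke Theorem~\ref{thm: Giant Component Threshold}(b). If $\Lambda<0$, Theorem~\ref{thm: Giant Component Size}(a) says every component is small; since $\Delta=O(1)$ we may take $\omega(N)$ to be any sufficiently slowly growing function with $\Delta\le\omega(N)$ and $\omega(N)=O(N^{1/8-\epsilon})$, so Theorem~\ref{thm: Giant Component Threshold}(b) applies and gives, w.h.p., that no component contains more than one cycle and that the total number of cycles is at most $O(\omega(N)^2\log N)=O(\mathrm{polylog}\,N)$. If instead $\Lambda>0$, I would first delete the giant component and apply Theorem~\ref{thm: Giant Component Size}(c), which guarantees that the degree sequence of what remains is subcritical in the sense of Case (a); the leftover graph then behaves like a subcritical configuration-model graph, so the same argument applies to it. The remaining point is to upgrade ``at most one cycle per component'' to ``every small component is a tree'': here I would bound the expected number of unicyclic small components by a first-moment computation, using that in the subcritical regime the branching ratio $\langle K(K-1)\rangle/\langle K\rangle = 1+\Lambda/\langle K\rangle$ is strictly below $1$, so short cycles are rare and, in the relevant sense, w.h.p.\ the small components are acyclic.

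I expect this last step of part (b) to be the main obstacle. Theorem~\ref{thm: Giant Component Threshold}(b) only excludes components carrying two or more cycles, not unicyclic ones, and a crude first-moment estimate shows that the expected number of cycles outside the giant component is bounded but not necessarily $o(1)$; making the statement rigorous therefore requires either restricting attention to a typical small component --- for instance the component of a uniformly random vertex conditioned to lie outside the giant component, which is a tree with probability tending to $1$ because only $O(\mathrm{polylog}\,N)$ vertices lie on cycles while $\Theta(N)$ lie in small components --- or a more delicate analysis of the exploration process on the residual graph. One must also verify that the conditioning implicit in ``deleting the giant component'' in the case $\Lambda>0$ leaves a graph to which the subcritical results may legitimately be applied, which is exactly the content of Theorem~\ref{thm: Giant Component Size}(c).
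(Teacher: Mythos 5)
Your proposal follows essentially the same route as the paper: part (a) from Theorem~\ref{thm: Giant Component Threshold}(a) via the $\zeta N$ cycles, and part (b) by reducing the supercritical case to the subcritical one through Theorem~\ref{thm: Giant Component Size}(c) and then invoking Theorem~\ref{thm: Giant Component Threshold}(b). The ``main obstacle'' you flag --- that unicyclic small components are not excluded outright --- is resolved in the paper exactly as you suggest, by a counting argument showing the ratio of non-tree small components (at most $2R\omega(N)^2\log N$) to all small components (at least $N/(R\omega(N)^2\log N)$) tends to zero, so the statement is to be read as holding for a randomly chosen small component.
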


\begin{proof}
    (a) is a clear inference from Corollary~\ref{cor: component structure} and Theorem~\ref{thm: Giant Component Threshold} (a)

    (b) For the case in Corollary~\ref{cor: component structure} (c) that $\Lambda<0$, $\mathbb{G}_{N,\textbf{d}}$ only contains small component, consider Theorem~\ref{thm: Giant Component Threshold} (b), the maximum size of all small component is $R \omega(N)^2 \log(N)$.
    Thus the number of components is at least:
    \begin{equation}
        \frac{N}{R \omega(N)^2 \log(N)}
    \end{equation}
    Similarly, since $\mathbb{G}_{N,\textbf{d}}$  has fewer than $2 R \omega(N)^2 \log(N)$ cycles w.h.p. and has no component that contains more than one cycle w.h.p., the number of non-tree components (which are also the components with one cycle) is then less than $2 R \omega(N)^2 \log(N)$.
    With $\omega(N)=O(N^{\frac{1}{8}-\epsilon})$, we have
    \begin{align}
        \mathbb{P}(\text{small component of } \mathbb{G}_{N,\textbf{d}} \text{ has cycle}) 
        & =\frac{\text{number of non-tree small component}}{\text{number of small component}}
        \nonumber
    \\ 
        & \leq \frac{2 R \omega(N)^2 \log(N)}{\frac{N}{R \omega(N)^2 \log(N)}} = 2R^2 \frac{ \omega^4 \log(N)^2}{N} 
        \nonumber
    \\
        & \leq 2 R^2 \frac{(N^{1/8})^4 \log(N)^2}{N} = 2 R^2 \log(N)^2 N^{-\frac{1}{2}} 
        \nonumber
    \\
        & \to 0 \text{ as } N\rightarrow \infty
    \end{align}
    Therefore,
    \begin{align}
        &\lim_{N \rightarrow \infty} \mathbb{P}(\text{small component in }\mathbb{G}_{N,\textbf{d}} \text{ is tree})
        \nonumber
        \\
        =& 1-\lim_{N \rightarrow \infty} \mathbb{P}(\text{small component in } \mathbb{G}_{N,\textbf{d}} \text{ has cycle})=1
    \end{align}
    For the case in Corollary~\ref{cor: component structure} (b) with $\Lambda>0$ and $\Theta \neq 1$, the result follows from applying the previous case result with Theorem~\ref{thm: Giant Component Size} (c).
\end{proof}

Theorem~\ref{thm: Giant Component Size} also indicates that even with the randomness in generating process, the component structure of the random graph $\mathbb{G}_{N,\textbf{d}}$ is determined w.h.p. by its degree sequence $\textbf{d}$.
From the probability point of view, the expectation of the size of the giant component among all random graphs $\mathbb{G}_{N,\textbf{d}}$ with given $\textbf{d}$ is a good approximation of the size of the giant component for random graphs.
This theorem provides a solid justification of the methods discussed later for network epidemiology based on expectations of random graphs.


\subsection{Important generating functions and basic results} 
\label{sec: pgfs}



To study the spread of disease in a network, we would like to use similar approaches as in Section~\ref{sec: components}, but we need to allow more ``degrees of freedom'' to accommodate randomness in disease spreading.
Recall that Theorem~\ref{thm: Giant Component Threshold} and Theorem~\ref{thm: Giant Component Size} proved that the component structure of random graph $\mathbb{G}_{N,\textbf{d}}$, especially the giant component size, are determined w.h.p. by the network size $N$ and the threshold $\Lambda$.
Thus the expectation of the component size is a good estimate of the component size for any realization.
In this section, we use probability generating functions (PGFs) to find the expectation of these component sizes for network with large size $N$ and prove that such expectation provides an equivalent estimate as in Theorem~\ref{thm: Giant Component Size}. This PGF approach will also give a threshold parameter for the existence of giant component equivalent to the threshold using $\Lambda$.
This will allow us to extend the ideas later on  in Section~\ref{sec: perco} to accommodate transmissibility of diseases to the PGF method.

We always assume in this section that we have a random graph $\mathbb{G}_{N, \textbf{d}}$ with large enough size $N \rightarrow \infty$, generated from a degree sequence $\textbf{d}$ and corresponding degree distribution $p_k$.
Recall that if $K$ is the degree of a randomly chosen vertex in the network we use the notation $p_k=P(K=k)$ for its probability mass function. The corresponding probability generating function is given by,
\begin{equation}
\label{eqn: pgfgp}
    G_p(x)=\sum_{k=1}^\infty p_k x^k.
\end{equation}
Properties of PGFs are reviewed in Appendix~\ref{sec: pgf}.

Recall the random graph generation algorithm in Algorithm \ref{alg: Config}. We introduce the following definitions about excess degree and sub-component of randomly chosen stub.
\begin{definition}
\label{def: Excess Degree}
    Suppose from any graph $G=(V_G,E_G)$, randomly choose an edge $e \in E_G$ and then randomly choose one of the two stubs that forms the edge. Let the vertex $a $ be the vertex connected to that stub.
    Let $G'=(V_{G'},E_{G'})$ be a subgraph of $G$ such that $V_{G'}=V_G$ and $E_{G'}=E_G\setminus\{e\}$.
    \begin{itemize}
        \item \textbf{Excess degree} of the randomly chosen stub is the number of stubs (edges) attached to $a$, excluding the chosen stub (edge) and is thus equal to $D_{G'}(a)=D_G(a)-1$.

        \item \textbf{Sub-component} of the randomly chosen stub is a subgraph of $G$, such that it is also the component of $G'$ that contains $a$.
    \end{itemize}
\end{definition}

\begin{theorem} [Adapted from \cite{Newman:2002}]
\label{thm: pgfGq} 
    The PGF corresponding to excess degree of a randomly chosen stub in random graph $\mathbb{G}_{N, \textbf{d}}$ is given by,
    \begin{equation}
        G_q(x) =\frac{G_p'(x)}{G_p'(1)}.
    \label{eqn: pgfGq}
    \end{equation}
\end{theorem}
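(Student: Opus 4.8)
The plan is to obtain the excess-degree distribution by a direct size-biasing (counting) argument, and then to recognize the resulting generating function as $G_p'(x)/G_p'(1)$. The whole computation can be carried out for a fixed finite realization $G=\mathbb{G}_{N,\textbf{d}}$: the degree sequence $\textbf{d}$, and hence the degree distribution $p_k$, is the same for every graph in $\mathcal{G}_{N,\textbf{d}}$, so there is nothing to average over and the $N\to\infty$ limit is immediate.

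First I would unpack the sampling in Definition~\ref{def: Excess Degree}. Choosing an edge $e\in E_G$ uniformly and then one of its two stubs uniformly is the same as choosing a stub uniformly at random from the set of all stubs of $G$, because every stub lies on exactly one edge and every edge carries exactly two stubs; the induced measure on stubs is therefore uniform, on a set of size $2m=M$ (Corollary~\ref{cor: realizable seq}). (Here we use that $G$ is simple, so the two ends of $e$ are distinct and its two stubs are genuinely distinguishable.) Next I would count stubs according to the degree of their host vertex: in $G$ the number of vertices of degree $k$ is exactly $Np_k$ (Definition~\ref{def: Degree Distribution}), and each contributes $k$ stubs, so there are $kNp_k$ stubs on degree-$k$ vertices. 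Dividing by $M=\sum_j jNp_j=N\langle K\rangle$ shows that the host vertex $a$ of the chosen stub has degree $k$ with probability $kp_k/\langle K\rangle$. By Definition~\ref{def: Excess Degree}, when $D_G(a)=k$ the excess degree of the stub is $k-1$, so writing $q_j$ for the probability that the excess degree equals $j$,
\begin{equation}
q_{k-1}=\frac{k\,p_k}{\langle K\rangle},\qquad k=1,2,\dots
\end{equation}

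Finally I would assemble the PGF. By definition $G_q(x)=\sum_{j\ge 0}q_j x^j=\sum_{k\ge 1}\tfrac{k p_k}{\langle K\rangle}x^{k-1}$. Differentiating $G_p(x)=\sum_{k\ge 1}p_k x^k$ gives $G_p'(x)=\sum_{k\ge 1}k p_k x^{k-1}$, which is precisely the numerator, while $G_p'(1)=\sum_{k\ge 1}k p_k=\langle K\rangle$ is the normalizing denominator; hence $G_q(x)=G_p'(x)/G_p'(1)$, and $G_q(1)=1$ confirms that $q$ is a genuine probability distribution.

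I do not anticipate a real obstacle; the only step needing care is the first one — justifying that ``pick an edge, then pick a stub'' is exactly the degree-size-biased sampling of a vertex — together with the bookkeeping that uses the \emph{exact} relation (number of degree-$k$ vertices) $=Np_k$, so that the identity holds for finite $N$ and not merely asymptotically. One should also note in passing that conditioning the configuration model of Algorithm~\ref{alg: Config} on producing a simple graph does not affect any of this, since the per-vertex stub counts are fixed by $\textbf{d}$ alone.
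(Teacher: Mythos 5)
Your proposal is correct and follows essentially the same route as the paper: both rest on the size-biasing observation that a degree-$k$ vertex is $k$ times as likely to be reached by picking a random edge and then a random end, giving $q_{k-1}=kp_k/\langle K\rangle$, after which the identity $G_q(x)=G_p'(x)/G_p'(1)$ is a one-line computation. Your explicit stub-counting justification of the size-biased step (there are $kNp_k$ stubs on degree-$k$ vertices out of $M=N\langle K\rangle$ in total) is a slightly more careful rendering of what the paper states in one sentence, but it is the same argument.
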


\begin{proof}
    Suppose that we randomly choose an edge and then randomly choose one of the two vertices connected to that edge. 
    We denote this vertex by $a$ and note that a vertex with degree $k$ is $k$ times as likely to be chosen in this manner than a vertex with degree $1$.
    Thus the probability mass function of the excess degree of $a$ is given by,
    \begin{equation}
    \label{eqn: qk}
        q_{k-1}=P(\text{excess degree}=k-1) = \frac{k p_k}{\sum_{k=1}^{\infty}k p_k} =\frac{k p_k}{\langle K\rangle}. 
    \end{equation}
From this and Theorem~\ref{thm: pgf}, the PGF corresponding to the ``excess degree distribution'' is given by,
\begin{equation}
G_q(x)=\sum_{k=0}^\infty q_k x^k=\sum_{k=0}^\infty \frac{(k+1)p_{k+1}x^k}{{\langle K\rangle}}=\sum_{k=1}^\infty \frac{k p_{k}x^{k-1}}{{\langle K\rangle}}=\frac{G_p'(x)}{G_p'(1)}.
\end{equation}
\end{proof}

Now recall from Corollary~\ref{cor: component structure} that for a random graph $\mathbb{G}_{N,\textbf{d}}$ with large enough size $N \rightarrow \infty$ and $\Lambda \neq 0$, the component-level structure of the graphs only has three possible cases.
Let $\mathcal{S} \in [0,1]$ denote the proportion of vertices that belongs to the unique giant component. Then the possibiliies are:

\begin{enumerate}
    \item The graph contains only one giant component with $\mathcal{S}=1$. 
    No small component exists.

    \item The graph contains one unique giant component with $\mathcal{S} \in (0,1)$ and many small components.

    \item The graph contains no giant component such that $\mathcal{S}=0$ and thus contains only small component.
\end{enumerate}

Component structure is clear for case (1), thus we focus more about the second and third case where $\mathcal{S} \in [0,1)$.
More specifically, we care more about the transition behavior around the threshold of giant component existence, where the small components start to merge into a giant component.
We note that
$\mathcal{S}$ can also been seen as the probability that a randomly chosen vertex of $\mathbb{G}_{N,\textbf{d}}$ belongs to giant component.
Recall from Theorem~\ref{thm: component} that the probability that a randomly chosen vertex $a$ of $\mathbb{G}_{N,\textbf{d}}$ belongs to small component is the same as the probability it belongs to a tree component.
\begin{align}
    &\mathbb{P}(a \in \text{small component})  =\mathbb{P} (a \in \text{tree component})
    \nonumber
    \\
    =&1-\mathbb{P}(a \in \text{giant component})=1-\mathcal{S}
\end{align}

Now consider $s \in \mathbb{Z}_{>0}$ and assume a randomly chosen vertex $v$ of $\mathbb{G}_{N,\textbf{d}}$ belongs to small component with size $s$ has probability mass function $P(s)$, such that:
\begin{align}
    P(s) & =\mathbb{P}(a \in \text{small component } \wedge \text{ the small component has size }s)
    \\
    & =\mathbb{P}(a \in \text{tree component } \wedge \text{ the tree component has size }s)
\end{align}

\begin{definition}
\label{def: Hp}
Assume that $\mathcal{S} \in [0,1)$.
The conditional probability that a randomly chosen vertex $v$ belongs to a component with size $s$, given that the vertex $v$ belongs to a small (tree) component will be given by $\hat{P}(s)$. Clearly,
\begin{equation}
\label{eqn: hatPs}
   \hat{P}(s) =\mathbb{P}(a \in \text{component with size } s | a \in \text{ a tree component }) = \frac{P(s)}{1-\mathcal{S}}
\end{equation}
Using the same idea as in the PGF defined in \eqref{eqn: pgfgp}, we define the corresponding PGF by
\begin{equation}
\label{eqn: pgfhp}
    H_p(x)=\sum_{s=1}^{\infty} \hat{P}(s) x^s=\frac{1}{1-\mathcal{S}} \sum_{s=1}^{\infty} P(s) x^s. 
\end{equation}
In addition, recall that we defined the PGFs of excess degree $G_q(x)$ for a randomly chosen stubs in Theorem~\ref{thm: pgfGq} based on PGFs of degree for randomly chosen vertex.
Since we have $H_p(x)$ for randomly chosen vertex, we can also define a PGF $H_q(x)$ corresponding to the conditional probability that a randomly chosen stub has a sub-component with size $s$, given the stub is connected to a tree sub-component.
\end{definition}

To derive an expression for $H_q(x)$, we let $u$ be the probability that a randomly chosen stub has a tree sub-component.
Since $\mathcal{S} \in[0,1)$ by Theorem~\ref{thm: component} we must have $u \neq 0$.
The expression for $H_q(x)$ is given by the following theorem:

\begin{theorem}
\label{thm: PGFHq}
For a random graph $\mathbb{G}_{N, \textbf{d}}$ with large enough $N\rightarrow\infty$ and contains a small component, the PGF $H_q(x)$ must satisfy the following equation:
    \begin{equation}
    \label{eqn: HqEqn}
        H_q(x)=\frac{x}{u} G_q(u H_q(x))
    \end{equation}
where $u$ is the probability that a randomly chosen stub has a tree sub-component.
\end{theorem}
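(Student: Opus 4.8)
The plan is to exploit the locally tree-like structure of the configuration model (Algorithm~\ref{alg: Config} with $\Delta=O(1)$) to obtain a recursive distributional identity for the size of the sub-component of a randomly chosen stub, and then to translate that identity into the functional equation \eqref{eqn: HqEqn} for $H_q$. First I would fix a randomly chosen stub as in Definition~\ref{def: Excess Degree}, let $a$ be the vertex it is attached to, and let $X$ be the size of the sub-component of that stub, tracked only on the event $T$ that this sub-component is a tree; by Theorem~\ref{thm: component}(b) the complementary event is, w.h.p., exactly the event that the stub's sub-component lies in the (unique) giant component. By construction $u=\mathbb{P}(T)$, and $u\neq 0$ because $\mathcal{S}\in[0,1)$ (the discussion preceding the theorem, via Theorem~\ref{thm: component}), while $H_q(x)=\mathbb{E}[x^{X}\mid T]$ since $H_q$ is the PGF of the conditional size distribution.

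The structural heart of the argument is a self-similarity/branching step. Having revealed $a$ and its excess degree $J$ — which by Theorem~\ref{thm: pgfGq} has PGF $G_q$ — each of the remaining $J$ stubs at $a$ is, in the limit $N\to\infty$, paired to a uniformly random fresh stub, so each again behaves as an independent randomly chosen stub. Hence the $J$ sub-components reached through these stubs are i.i.d.\ copies of $X$ (in the same conditional sense), and the sub-component of the original stub is a tree exactly when all $J$ of them are trees \emph{and} are pairwise vertex-disjoint; in a locally tree-like graph, given that each is a tree, any overlap or extra edge would close a cycle, an event of asymptotically vanishing probability under $\Delta=O(1)$. This gives the recursion
\begin{equation}
X \stackrel{d}{=} 1 + \sum_{i=1}^{J} X_i \quad \text{on the event that every } X_i \text{ is finite,}
\end{equation}
with $J\sim q$ and the $X_i$ i.i.d.\ copies of $X$, independent of $J$.

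The computation I would then carry out is a routine generating-function manipulation: conditioning on $J$ and using independence of the branches,
\begin{align}
u\,H_q(x) &= \mathbb{E}\!\left[x^{X}\mathbf{1}_{T}\right] = \mathbb{E}\!\left[x^{\,1+\sum_{i=1}^{J} X_i}\textstyle\prod_{i=1}^{J}\mathbf{1}_{T_i}\right] = x\sum_{j=0}^{\infty} q_j\bigl(\mathbb{E}[x^{X_1}\mathbf{1}_{T_1}]\bigr)^{j} \nonumber \\
&= x\sum_{j=0}^{\infty} q_j\bigl(u\,H_q(x)\bigr)^{j} = x\,G_q\!\bigl(u\,H_q(x)\bigr),
\end{align}
where $T_i$ is the tree event for branch $i$; dividing by $u$ yields \eqref{eqn: HqEqn}. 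As a consistency check, setting $x=1$ and using $H_q(1)=1$ recovers the fixed-point equation $u=G_q(u)$ for the escape probability, as expected.

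I expect the main obstacle to be making the self-similarity step rigorous: namely that, as the stubs of $a$ are revealed one at a time, the unpaired portion of the configuration-model pairing remains uniform, so each new stub is again a fresh random stub and the branches are genuinely independent; and that tree-ness of the sub-component is equivalent (w.h.p.) to finiteness-and-disjointness of all branches, which is where $\Delta=O(1)$ enters to control the probability of closing a short cycle. Formally this is the local weak convergence of $\mathbb{G}_{N,\textbf{d}}$ to a unimodular Galton–Watson tree with offspring PGF $G_q$, and $H_q$ is the generating function of its total progeny conditioned on extinction; at the level of detail of this manuscript it suffices to justify the three facts used above (refreshed stubs, independence of branches, tree $\Leftrightarrow$ all branches finite and disjoint) and then perform the computation.
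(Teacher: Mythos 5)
Your proposal is correct and follows essentially the same route as the paper: condition on the tree event for the chosen stub, use the excess-degree distribution $q_k$ together with the independence of the branch sub-components to write the sub-component size as one plus a $G_q$-compounded sum of i.i.d.\ copies, and read off $H_q(x)=\frac{x}{u}G_q(uH_q(x))$; your manipulation of the defective generating function $\mathbb{E}[x^X\mathbf{1}_T]=uH_q(x)$ is algebraically identical to the paper's use of the conditional probabilities $q_k u^{k-1}$. The only difference is that you make explicit the asymptotic justification (freshness of unpaired stubs, disjointness of branches, local weak convergence) that the paper leaves at the level of "the pairing can be seen as independent for large $N$."
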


\begin{proof}
Let $e$ be a randomly chosen stub and $a$ be the vertex connected to it. 
We refer to all other stubs (other than $e$) that are connected to $a$ as the excess stubs of $a$.
By definition, $a$ belongs to the sub-component of $e$.
We also see that $e$ has a tree sub-component if and only if all excess stubs of $a$ are paired with stubs that have tree sub-components.
As we can see in Algorithm~\ref{alg: Config}, all stubs are assigned independently to vertices under the degree distribution. 
Edges are formed by randomly pairing stubs, for large enough size $N$, this pairing procedure can be seen as independent at vertex level \citep{MillerSlimVolz:2012}, so for any $k \in \mathbb{Z}_{>0}$ we have:
\begin{align}
&\mathbb{P}(\text{Excess degree of e is } k | \text{a has tree sub-component} )
    \nonumber
\\
    = &\frac{\mathbb{P}(\text{Excess degree of } e \text{ is } k \text{ and all excess stubs of } a \text{ are paired with stubs with tree sub-components} ) }{ \mathbb{P}(a \text{ has tree sub-component})}
    \nonumber
\\
    = &\frac{q_k u^k}{u} = q_k u^{k-1}
    \label{eqn: CondProb}
\end{align}
where $q_k$ is the probability mass function of excess degree, defined in \eqref{eqn: qk}.

In what follows we assume that the randomly chosen stub $e$ is indeed connected to a tree sub-component.
In this restricted probability space, $H_q(x)$ is the PGF of the size of the sub-component of $e$. 
All excess stubs will also be connected to a tree sub-component.
Due to the construction of the random graph, the sizes of those tree sub-components will be independent and have the same PGF $H_q(x)$ (see Figure~\ref{fig: H1} for an illustration).
From property 4 in Theorem~\ref{thm: pgf}, the sum of the sizes of all the tree sub-components attached to the excess stubs has PGF given by, 
\begin{equation}
\sum_{k=0}^\infty q_k u^{k-1} (H_q(x))^k.
\label{eq: decomp}
\end{equation}
Since we are dealing with trees, by Theorem~\ref{thm: component} the size of the sub-component attached to the original stub will be equal to this sum plus one (to include the vertex $a$). From property 2 in Theorem~\ref{thm: pgf}, we have the following equation for the PGF $H_q(x)$,
\begin{equation}
H_q(x) = x\sum_{k=0}^\infty q_k u^{k-1} (H_q(x))^k
= \frac{x}{u}\sum_{k=0}^\infty q_k  (uH_q(x))^k
= \frac{x}{u}G_q (uH_q(x)).
\label{eqn: Hq_PGF}    
\end{equation}
This completes the proof of this theorem.
\end{proof}


\begin{figure}[htbp]
\begin{center}
    \includegraphics[width=0.7\textwidth]{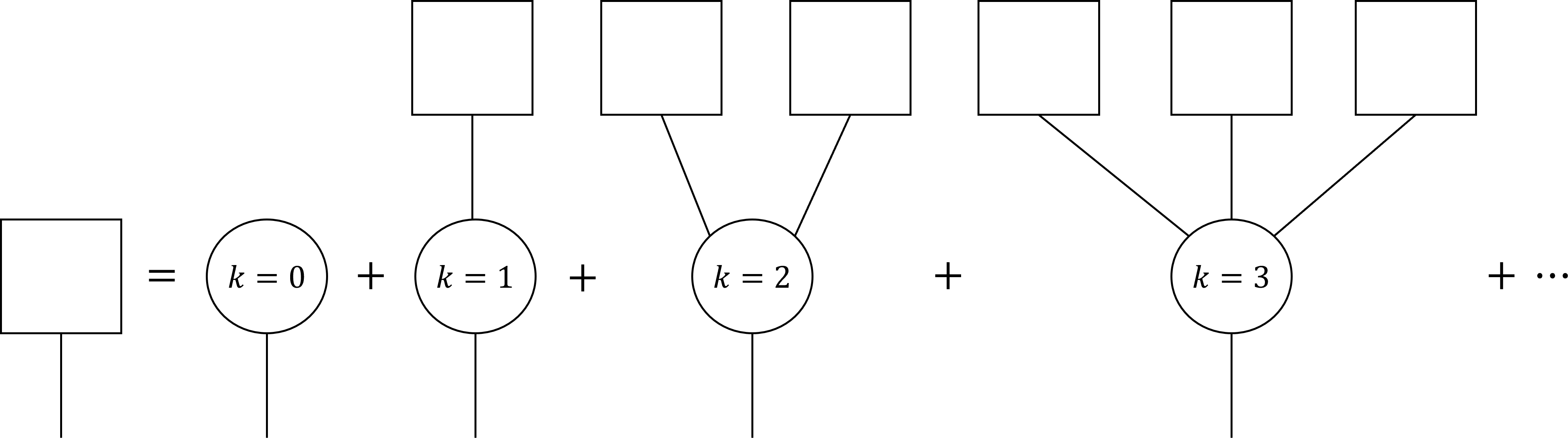}
\end{center}
\caption[Decomposition of $H_q(x)$]{This figure is based on the explanation and figure presented by ~\cite{NewmanStrogatzWatts:2001} to explain the derivation of fixed point equation equivalent to our equation in Corollary~\ref{cor: ueqn}. Here we use the figure as an illustration of \eqref{eq: decomp} in our proof of Theorem~\ref{thm: PGFHq} which we prove first before deriving the equations in Corollary~\ref{cor: ueqn}. 
The square represents an unknown tree cluster and the circle is the vertex connected to the randomly chosen initial edge below.
Figure was using MS Office.
}
\label{fig: H1}
\end{figure}


Evaluating any PGF at $x=1$ should yield $1$ since this is just the sum of the probabilities of a partition of event space.
By setting $H_q(1)=1$ in \eqref{eqn: HqEqn} in Theorem~\ref{thm: PGFHq}, we derive the following corollary.

\begin{corollary}
\label{cor: ueqn}
For a random graph $\mathbb{G}_{N,\textbf{d}}$ that satisfies the conditions of Theorem~\ref{thm: PGFHq}, the following must also be satisfied,
    \begin{equation}
    \label{eqn: u}
        u=G_q(u)
    \end{equation}
\end{corollary}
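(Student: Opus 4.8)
The plan is to evaluate the functional equation \eqref{eqn: HqEqn} from Theorem~\ref{thm: PGFHq} at $x=1$, once we have justified that $H_q$ is a bona fide probability generating function, i.e. that $H_q(1)=1$. First I would recall that $H_q$ was introduced in Definition~\ref{def: Hp} as the generating function of the conditional law of the size of the sub-component of a randomly chosen stub, given that this sub-component is a (finite) tree, and that $u>0$ was established just before Theorem~\ref{thm: PGFHq} using Theorem~\ref{thm: component}, since $\mathcal{S}\in[0,1)$. On the restricted probability space on which we condition, the sub-component is a finite tree by construction, so the conditional masses $\hat{Q}(s)$ form a genuine probability mass function on $\mathbb{Z}_{>0}$ and $H_q(1)=\sum_{s=1}^{\infty}\hat{Q}(s)=1$.

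The one point that deserves care — and which I expect to be the main obstacle — is exactly this claim that the conditional masses sum to $1$, that is, that there is no ``escape of mass to infinity''. Here I would appeal to the branching-process picture underlying the proof of Theorem~\ref{thm: PGFHq}: exploring the sub-component of a stub by repeatedly following excess stubs is, in the $N\to\infty$ limit, a Galton--Watson process with offspring generating function $G_q$, and $u$ is precisely its extinction probability. Conditioned on extinction — which is the event ``the stub has a tree sub-component'' — the total progeny is finite almost surely, so $\hat{Q}$ is a proper distribution and $H_q(1)=1$. When $\Lambda<0$ the process is subcritical, $u=1$, and every sub-component is a finite tree w.h.p. by Theorem~\ref{thm: component}(b); when $\Lambda>0$ it is supercritical and $u\in(0,1)$ is the strictly smaller root.

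Granting $H_q(1)=1$, substituting $x=1$ into \eqref{eqn: HqEqn} gives
\begin{equation}
1=H_q(1)=\frac{1}{u}\,G_q\bigl(u\,H_q(1)\bigr)=\frac{1}{u}\,G_q(u),
\end{equation}
and multiplying through by $u$ yields $u=G_q(u)$, which is the assertion. I would close with a remark that $u=1$ is always a solution of this fixed-point equation (reflecting that a subcritical or critical branching process goes extinct with probability one), so that in the supercritical regime the substantive content is the existence of the smaller root $u\in(0,1)$; identifying which root is the probabilistically correct one is not needed here but will matter in Section~\ref{sec: perco}.
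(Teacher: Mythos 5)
Your proof is correct and follows essentially the same route as the paper: substitute $x=1$ into the functional equation \eqref{eqn: HqEqn} and use $H_q(1)=1$ to obtain $u=G_q(u)$. Your additional care in justifying $H_q(1)=1$ via the conditional (tree sub-component) probability space is a welcome refinement of the paper's one-line appeal to ``any PGF evaluates to $1$ at $x=1$,'' but it does not change the underlying argument.
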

 
Corollary~\ref{cor: ueqn} provides a fixed point equation that we can use to solve for $u$. 
This is the same equation that is derived by~\cite{NewmanStrogatzWatts:2001}, however we believe the proof here is presented in a clearer manner using conditional probabilities that clearly outline when and how the PGF properties from Theorem~\ref{thm: pgf} apply.
For completion we also present the following elementary theorem which fully characterizes the solutions of this fixed point equation.

\begin{theorem}
\label{thm: usol}
    Consider the solution of equation \eqref{eqn: u} in $[0,1]$.
    \begin{enumerate}
        \item $u=1$ is always a solution to \eqref{eqn: u}.
        \item If $p_1=0$ and,
        \begin{enumerate}
            \item $p_2<1$ then $u=0$ is the only solution to \eqref{eqn: u} in $[0,1)$. 
            \item $p_2=1$ then \eqref{eqn: u} is satisfied for all $u\in[0,1]$.
        \end{enumerate}
        \item If $p_1>0$ and,
        \begin{enumerate}
            \item $G_q'(1)\leq 1$ then there is no solution to \eqref{eqn: u} in $[0,1)$.
            \item $G_q'(1)>1$ then there is a solution $u\in(0,1)$ to \eqref{eqn: u}.
        \end{enumerate}
    \end{enumerate}
\end{theorem}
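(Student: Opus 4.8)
The plan is to reduce everything to elementary convexity analysis of the single function $h(u)=G_q(u)-u$ on $[0,1]$. Two facts will be used throughout: $G_q$ is a power series with nonnegative coefficients $q_k$ (so $h$ is continuous and convex on $[0,1]$ and smooth on $[0,1)$), and $G_q(1)=\sum_{k\ge0}q_k=1$, together with the value $q_0=p_1/\langle K\rangle$ read off from \eqref{eqn: qk}. Part (1) is then immediate: $h(1)=G_q(1)-1=0$, so $u=1$ always solves \eqref{eqn: u}.

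For part (2) we have $p_1=0$, so $q_0=G_q(0)=0$ and hence $h(0)=0$; thus $u=0$ is a root and $h$ vanishes at both $0$ and $1$. In subcase (b), $p_2=1$ means every vertex has degree $2$, so $\langle K\rangle=2$ and $G_q(x)=q_1x=x$, whence \eqref{eqn: u} holds for every $u\in[0,1]$. In subcase (a), I would suppose $h$ had a further root $u_0\in(0,1)$: a convex function vanishing at the three points $0,u_0,1$ must vanish on all of $[0,1]$ (sandwich each intermediate value between convex combinations of these zeros), so $G_q(x)=x$ identically; comparing coefficients gives $q_1=1$ and $q_k=0$ otherwise, and since $q_k=(k+1)p_{k+1}/\langle K\rangle$ this forces $p_j=0$ for $j\ge3$, hence $p_2=1$, contradicting $p_2<1$. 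Therefore $u=0$ is the only solution of \eqref{eqn: u} in $[0,1)$.

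For part (3) we have $p_1>0$, so $h(0)=q_0>0$ and $u=0$ is not a solution. Here $G_q'(x)=\sum_{k\ge1}kq_kx^{k-1}$ is nondecreasing on $[0,1)$ with $\lim_{x\to1^-}G_q'(x)=\sum_k kq_k$, which I identify with $G_q'(1)$ (possibly $+\infty$); hence $h'=G_q'-1$ is nondecreasing. In subcase (a), $G_q'(1)\le1$ gives $h'\le0$ on $[0,1)$, so $h$ is nonincreasing and $h(u)\ge h(1)=0$ there; if $h(u_0)=0$ for some $u_0<1$ then $h\equiv0$ on $[u_0,1]$, forcing $G_q'\equiv1$ on an interval and hence $q_1=1$, $q_k=0$ for $k\ne1$, which together with $q_0>0$ contradicts $\sum_k q_k=1$ — so \eqref{eqn: u} has no solution in $[0,1)$. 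In subcase (b), $G_q'(1)>1$ yields $x_0\in(0,1)$ with $h'(x_0)>0$; by monotonicity of $h'$, $h$ is strictly increasing on $[x_0,1]$, so $h(x_0)<h(1)=0$, while $h(0)=q_0>0$, and the intermediate value theorem gives $u_0\in(0,x_0)\subset(0,1)$ with $h(u_0)=0$.

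All of this is routine; the only points demanding care are the equality and degenerate cases, where one must exclude the possibility that $G_q$ is affine (equivalently $G_q(x)=x$) in order to sharpen ``$h\ge0$'' to ``$h>0$ on $[0,1)$'' in parts (2a) and (3a), plus the minor bookkeeping when $G_q'(1)=+\infty$, which simply falls under subcase (3b). I anticipate no genuine obstacle.
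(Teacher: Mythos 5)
Your proof is correct and follows essentially the same route as the paper's: an elementary convexity analysis of $G_q$ (equivalently of $h(u)=G_q(u)-u$) on $[0,1]$, using $G_q(1)=1$, $G_q(0)=p_1/\langle K\rangle$, and the nonnegativity of the coefficients $q_k$. If anything, your treatment is slightly more complete, since you explicitly supply the existence/non-existence argument in part 3 (which the paper dispatches as ``standard arguments'') and handle the degenerate affine cases and $G_q'(1)=+\infty$ carefully.
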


\begin{proof}
By Theorem~\ref{thm: pgfGq}, we know that $G_q(1)=\frac{G_p'(1)}{G_p'(1)}=1$. Thus $u=1$ is always a trivial solution of equation \eqref{eqn: u}. This proves part 1.

We know that all the derivatives of the PGF $G_q$ must be non-negative for $x\in[0,1]$.
Recall also from the proof of Theorem~\ref{thm: pgfGq} that,
\begin{equation}
    G_q(x) =\frac{G_p'(x)}{G_p'(1)}=\frac{\sum_{k=1}^\infty kp_k x^{k-1}}{\langle K\rangle}.
\end{equation}
We also note that $G_q(0)=\frac{p_1}{\langle K\rangle}
=\frac{p_1}{p_1+2p_2+3p_3+\dotsc}\in[0,1]$.

Consider the case $p_1=0$.
Clearly $u=0$ is a solution to \eqref{eqn: u}. 
Suppose first that in addition we have $p_2=1$ then we must have $p_k=0$ for all $k\ne 2$ and $G_q(x)=x$ for all $x\in[0,1]$ which makes all points in that interval a solution of \eqref{eqn: u}. This proves part 2(b).
If instead of $p_2=1$ we assume $p_2<1$ then there exists a $k>2$ such that $p_k>0$ which means both $G_q'(x)$  and $G_q''(x)>0$ for $x\in(0,1]$. Since $G_q(x)$ is increasing and concave up on $(0,1]$. Since we already have $G_q(0)=0$, there is at most only one more solution to \eqref{eqn: u} and this is guaranteed by part 1 to be $u=1$. This proves part 2(a) and completes the proof of part 2.


Now assume that $p_1\in (0,1)$ so that $G_q(0)\in (0,1)$. 
If $p_2=1-p_1$ then for all $x\in[0,1]$, $G_q'(x)=\frac{2p_2}{p_1+2p_2}\in(0,1)$. Thus in this case,
$G_q(x)$ is a linear function with slope less than one so \eqref{eqn: u} can have at most one solution, and from part 1 this solution must be $u=1$. 
If instead we have $p_2<1-p_1$ then there exists a $k>2$ such that $p_k>0$ which means both $G_q'(x)$  and $G_q''(x)>0$ for $x\in(0,1]$.
Thus $G_q(x)$ is increasing and concave up on $(0,1]$. Since at $x=0$ we have $G_q(0)>0$, there can be up to two solutions to \eqref{eqn: u} on $(0,1]$. From part 1 we know that one solution is $u=1$. 
Using standard arguments, it is clear that there is another solution $u\in(0,1)$ to \eqref{eqn: u} if $G_q'(1)>1$ and none if $G_q'(1)\leq 1$. This completes the proof of part 3. We note that the first case $p_2=1-p_1$ is included in part 3(a).
\end{proof}

Using similar ideas to Theorem~\ref{thm: PGFHq}, we can find an alternative expression of $H_p(x)$ in terms of the solution $u$ in Corollary~\ref{cor: ueqn} and $H_q(x)$.
\begin{theorem}
\label{thm: PGFHp}
For a random graph $\mathbb{G}_{N, \textbf{d}}$ that satisfies the conditions of Theorem~\ref{thm: PGFHq}, the PGF $H_p(x)$ must satisfy the following equation:
    \begin{equation}
    \label{eqn: HpEqn}
        H_p(x)=\frac{x}{1-\mathcal{S}} G_p(u H_q(x))
    \end{equation}
\end{theorem}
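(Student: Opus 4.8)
The plan is to follow exactly the decomposition strategy used in the proof of Theorem~\ref{thm: PGFHq}, but starting from a randomly chosen \emph{vertex} instead of a randomly chosen stub. Let $a$ be a uniformly random vertex of $\mathbb{G}_{N,\textbf{d}}$ and work in the restricted probability space in which $a$ belongs to a tree (small) component; by Theorem~\ref{thm: component} this is the same as conditioning on $a$ lying in a small component, and in this space $H_p(x)$ is, by Definition~\ref{def: Hp}, precisely the PGF of the size of $a$'s component.

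First I would compute the conditional degree distribution of $a$. Letting $u$ again denote the probability that a randomly chosen stub has a tree sub-component, a vertex of degree $k$ lies in a tree component exactly when all $k$ of its stubs are paired with stubs carrying tree sub-components; invoking the vertex-level independence of the configuration-model pairing for large $N$ (as in the proof of Theorem~\ref{thm: PGFHq}), this has probability $u^k$. Hence $\mathbb{P}(a\text{ in a tree component})=\sum_{k} p_k u^k = G_p(u)$, which must equal $1-\mathcal{S}$, so the conditional probability that $a$ has degree $k$ is $p_k u^k/(1-\mathcal{S})$.

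Next, conditioned on $a$ having degree $k$ and lying in a tree component, each of its $k$ stubs leads to a sub-component which, being a piece of a tree, is itself a tree and whose size has PGF $H_q(x)$ — this is exactly the object analyzed in Theorem~\ref{thm: PGFHq} — and the $k$ sub-component sizes are independent by the same large-$N$ construction argument. The size of $a$'s component is $1$ (for $a$ itself) plus the sum of these $k$ sub-component sizes, so by properties 2 and 4 of Theorem~\ref{thm: pgf} its conditional PGF is $x\,(H_q(x))^k$. Averaging over the conditional degree distribution then gives
\[
H_p(x)=\sum_{k=0}^\infty \frac{p_k u^k}{1-\mathcal{S}}\,x\,(H_q(x))^k=\frac{x}{1-\mathcal{S}}\sum_{k=0}^\infty p_k\,(uH_q(x))^k=\frac{x}{1-\mathcal{S}}\,G_p(uH_q(x)),
\]
which is the claimed identity.

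The only delicate point, as in Theorem~\ref{thm: PGFHq}, is justifying that the relevant events factorize: that conditioning on $a$ lying in a tree component amounts to requiring each stub of $a$ to \emph{independently} carry a tree sub-component with PGF $H_q$, and that these sub-components do not overlap. This rests on the asymptotic vertex-level independence of the configuration-model pairing as $N\to\infty$ together with the tree structure of small components from Theorem~\ref{thm: component}; everything else is routine PGF bookkeeping. As a byproduct one obtains the identity $1-\mathcal{S}=G_p(u)$, so the prefactor $x/(1-\mathcal{S})$ in the statement is equivalently $x/G_p(u)$.
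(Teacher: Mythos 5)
Your proposal is correct and follows essentially the same route as the paper's own proof: condition on the randomly chosen vertex lying in a tree component, compute the conditional degree distribution $p_ku^k/(1-\mathcal{S})$ via the vertex-level independence of the pairing, and add one to the sum of the independent stub sub-component sizes using the standard PGF composition rules. The only cosmetic difference is that you make the identity $1-\mathcal{S}=G_p(u)$ explicit inside the proof, whereas the paper extracts it afterwards as Corollary~\ref{cor: Seqn} by evaluating at $x=1$.
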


\begin{proof}
Let $a$ be a randomly chosen vertex. 
The probability that $a$ belongs to a tree component is equal to $1-\mathcal{S}$, where $\mathcal{S}$ is the portion of vertices belongs to the giant component.
Using similar steps as in \eqref{eqn: CondProb}, we have the probability that $a$ has degree $k \in \mathbb{Z}_{>0}$ given that $a$ belongs to a small component:
\begin{align}
    \mathbb{P}&(\text{Degree of the chosen vertex }a\text{ is } k | \text{the vertex }a \in \text{ a tree component} )
    \nonumber
\\
    &= \frac{\mathbb{P}(D_G(a)=k \text{ and all stubs of }a \text{ pair to stubs with tree sub-components} ) }{ \mathbb{P}( a \in \text{ a tree component})}
    \nonumber
\\
    &=  \frac{p_k u^k}{1-\mathcal{S}}
\end{align}
Using similar steps as in Theorem~\ref{thm: PGFHq}, we note that the sum of the sizes of all the tree sub-components connected to the stubs of $a$ will have PGF given by,
\begin{equation}
    \label{eqn: Hp}
\sum_{k=1}^\infty \frac{p_ku^k}{1-\mathcal{S}}(H_q(x))^k.
\end{equation}
Since all these sub-components are trees, by Theorem~\ref{thm: component}, to get the size of the sub-component that $a$ belongs to we just need to add one (corresponding to $a$ itself) to the sum of the sizes of the sub-components, resulting in,
\begin{equation}
H_p(x)
=x\sum_{k=1}^\infty \frac{p_ku^k}{1-\mathcal{S}}(H_q(x))^k
=\frac{x}{1-\mathcal{S}}\sum_{k=1}^{\infty} p_k (u H_q(x))^k=\frac{x}{1-\mathcal{S}}G_p(u H_q(x))
    \end{equation}
\end{proof}

The previous theorem leads to the following result which is similar to the result we have in Corollary~\ref{cor: ueqn}.
\begin{corollary}
\label{cor: Seqn}
 For a random graph $\mathbb{G}_{N, \textbf{d}}$ that satisfies the conditions of Theorem~\ref{thm: PGFHq}, the following must also be satisfied,
    \begin{equation}
    \label{eqn: S}
        \mathcal{S}=1-G_p(u)
    \end{equation}
\end{corollary}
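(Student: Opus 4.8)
The plan is to mirror the derivation of Corollary~\ref{cor: ueqn}: evaluate the functional equation for $H_p$ established in Theorem~\ref{thm: PGFHp} at $x=1$ and read off the relation for $\mathcal{S}$. Concretely, from \eqref{eqn: HpEqn} we have $H_p(1)=\frac{1}{1-\mathcal{S}}G_p(u H_q(1))$, so the whole argument reduces to pinning down the values $H_p(1)$ and $H_q(1)$.

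First I would argue that $H_q(1)=1$. This is exactly the fact that was used (implicitly) to obtain Corollary~\ref{cor: ueqn}: $H_q$ is the PGF of the conditional distribution of the size of the sub-component of a randomly chosen stub given that this sub-component is a tree, and setting $H_q(1)=1$ in \eqref{eqn: HqEqn} already produces the fixed point equation $u=G_q(u)$. Likewise, $H_p$ defined in \eqref{eqn: pgfhp} is the PGF of $\hat{P}(\cdot)$, which by \eqref{eqn: hatPs} is a genuine conditional probability mass function on $\mathbb{Z}_{>0}$ — a randomly chosen vertex, conditioned to lie in a small (tree) component, lies in a component of some finite size $s$ — so $\sum_{s\ge 1}\hat{P}(s)=1$ and hence $H_p(1)=1$. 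Substituting these into the displayed identity gives $1=\frac{1}{1-\mathcal{S}}G_p(u)$, i.e. $\mathcal{S}=1-G_p(u)$.

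The only real content — and the place where one must be slightly careful — is the claim that no probability mass ``escapes to infinity'' when we evaluate these generating functions at $x=1$, i.e. that a vertex (resp. stub) lying in a small (tree) component almost surely sits in a component (resp. sub-component) of finite size. Under the hypotheses of Theorem~\ref{thm: PGFHq} this is supplied by Theorem~\ref{thm: component} together with Theorem~\ref{thm: Giant Component Size}: w.h.p. every small component is a tree of size $O(\log N)$, so in the $N\to\infty$ limit the conditional size distributions are proper and their PGFs are continuous up to and including $x=1$ with value $1$ there. Granting this (which is the standard convention in \cite{NewmanStrogatzWatts:2001}), the corollary is immediate; I do not expect any genuine obstacle beyond making this finiteness observation explicit and noting the parallel with Corollary~\ref{cor: ueqn}.
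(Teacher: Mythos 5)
Your proposal is correct and follows essentially the same route as the paper: evaluate \eqref{eqn: HpEqn} at $x=1$, use $H_p(1)=H_q(1)=1$, and rearrange to obtain $\mathcal{S}=1-G_p(u)$. The extra remark that the conditional size distributions are proper (no mass escaping to infinity, justified via Theorem~\ref{thm: component} and Theorem~\ref{thm: Giant Component Size}) is a sensible explicit note on a point the paper leaves implicit, but it does not change the argument.
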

\begin{proof}
From $H_p(1)=1$ and $H_q(1)=1$ we derive $1=\frac{1}{1-s}G_p(uH_q(1))=\frac{G_p(u)}{1-\mathcal{S}}$. Rearranging yields the required result.
\end{proof}

Theorem~\ref{thm: PGFHq} and Theorem~\ref{thm: PGFHp} are refinements of the results of \cite{NewmanStrogatzWatts:2001} and \cite{Newman:2002}.
The final results are given in Corollary~\ref{cor: ueqn} and Corollary~\ref{cor: Seqn} which yield the same expressions as those in \cite{NewmanStrogatzWatts:2001} and \cite{Newman:2002}.
However in the original papers, the PGFs is defined based on $P(s)$ instead of $\hat{P}(s)$, and the equations were based on PGFs that were not explained to be unconditional probabilities.
This leads to confusion: $P(s)$ as a probability mass function is normalized only if $\mathcal{S}=0$, which occurs when all components are small components.
Thus the PGFs they defined would only be PGFs in that case. This makes it confusing to justify the resulting equations as equations for $u$ and $\mathcal{S}$ in the cases where the giant component exists.
Using the conditional probabilities in Definition~\ref{def: Hp} and \eqref{eqn: CondProb}, we obtained the same results for $u$ and $\mathcal{S}$ without confusion.

For a random graph, with large enough $N$ that has a small component, both \eqref{eqn: u} and \eqref{eqn: S} should be satisfied as a system at the same time.
By Theorem~\ref{thm: usol}, we have following observations:
\begin{enumerate}
\item We have $u=1$ if and only if $\mathcal{S}=0$. This pair of values is always a solution to \eqref{eqn: u} and \eqref{eqn: S} so it is always possible for the random graph $\mathbb{G}_{N,\textbf{d}}$ to have no giant component.

\item In the case when there is at most 1 solution of $u \in (0,1)$:
If such a $u$ exists then there is a corresponding $\mathcal{S} \in (0,1)$ that is a solution of the system.

\item For the case $p_1>0$, $G_q'(1)$ gives the threshold for existence of giant component.
If $G_q'(1) >1$, there is a positive probability that $\mathbb{G}_{N,\textbf{d}}$ has a unique giant component.
If $G_q'(1) <1$, there is no giant component in $\mathbb{G}_{N,\textbf{d}}$.
    
\item If a giant component exists, the portion of vertices in $\mathbb{G}_{N,\textbf{d}}$ that belongs to it is given by $\mathcal{S}=1-G_p(u) \in (0,1)$. 
\end{enumerate}

With $H_p(x)$ and $\mathcal{S}$, one can find the probability distribution for sizes of small components $\hat{P}(s)$ by differentiation of $H_p(x)$. 
However,  finding arbitrary derivatives of $H_p(x)$ in closed form is difficult \citep{Newman:2002}. The distribution can instead be evaluated more easily by numerical contour integration with the Cauchy formula.
\begin{equation}
    \hat{P}(s)=\frac{1}{s!}\frac{d^s H_p}{dx^s}|_{x=0}=\frac{1}{2 \pi i} \oint \frac{H_p(\zeta)}{\zeta^{s+1}} d \zeta
    \label{eqn: Contour Ps}
\end{equation}
By \eqref{eqn: hatPs}, one can find $P(s)$ directly from \eqref{eqn: Contour Ps} with $\mathcal{S}$.

The deterministic results we derived depends on the size of the network going to infinity.
In this case the random paring of stubs in Algorithm~\ref{alg: Config} can be seen as independent, since probability of pairing new stubs would not be affected by existing edges.
In practice we work with finite size networks, but it is not hard to see from the results of infinite size network, that these results are still good approximations for majority of the possible graphs in $\mathcal{G}_{N,\textbf{d}}$ for large $N$.
Using simulations, \cite{NewmanStrogatzWatts:2001} and \cite{Newman:2002} showed that the large component size given by $\mathcal{S}$ in this section is numerically the expectation giant component size among all possible graphs.
However, it remains an open question to prove analytically and rigorously that results of $\mathcal{S}$ is the mean-field result and exact among all random $\mathbb{G}_{N,\textbf{d}}$ for any degree distribution $d$.

As a partial proof, here we demonstrate that the threshold given by $G_q'(1)$ and the giant component size estimate $\mathcal{S}=1-G_p(u)$ we derived are equivalent to the results from Theorem~\ref{thm: Giant Component Size}, which show the exactness under some condition on $\textbf{d}$.
The second part of the next theorem is also mentioned in \cite{NewmanStrogatzWatts:2001}, but we provide details of proof here.

\begin{theorem}
For a random graph $\mathbb{G}_{N, \textbf{d}}$ that satisfies the conditions of Theorem~\ref{thm: Giant Component Size}, we have the following:
    \begin{enumerate}
        \item $\Lambda=\sum_{k=1}^\infty k (k-2) p_k=0$ if and only if $G'_q(1)=1$.

        \item $\mathcal{S}=1-G_p(u)=\Theta$ for $u \in (0,1)$ that is a solution of $u=G_q(u)$.
    \end{enumerate}
\end{theorem}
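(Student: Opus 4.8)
The plan is to treat the two parts separately: part 1 is a short generating-function computation, while part 2 rests on identifying an explicit change of variables that turns the fixed-point equation $u=G_q(u)$ of Corollary~\ref{cor: ueqn} into the equation $f(\alpha)=0$ of Theorem~\ref{thm: Giant Component Size}, and the expression $\mathcal{S}=1-G_p(u)$ of Corollary~\ref{cor: Seqn} into the formula for $\Theta$.

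For part 1 I would start from $G_q(x)=G_p'(x)/G_p'(1)$ (Theorem~\ref{thm: pgfGq}), differentiate once, and evaluate at $x=1$ to get $G_q'(1)=G_p''(1)/G_p'(1)=\frac{\sum_k k(k-1)p_k}{\sum_k k p_k}$. Since there are no isolated vertices, $\langle K\rangle=\sum_k k p_k$ is positive (and finite, as $\Delta=O(1)$), so clearing the denominator shows $G_q'(1)=1$ is equivalent to $\sum_k k(k-1)p_k=\sum_k k p_k$, i.e. $\sum_k k(k-2)p_k=0$, which is $\Lambda=0$. The only bookkeeping here is to note that the hypothesis $\Delta=O(1)$ makes all these sums finite and makes the PGF sums (over all $k$) agree with the truncated sums (to $\Delta$) appearing in Theorem~\ref{thm: Giant Component Size}.

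For part 2 the key step is the substitution $u^2=1-\tfrac{2\alpha}{\langle K\rangle}$, i.e. $\alpha=\phi(u):=\tfrac12\langle K\rangle(1-u^2)$, which is a strictly decreasing continuous bijection of $[0,1]$ onto $[0,\tfrac12\langle K\rangle]$. Using $\langle K\rangle=G_p'(1)$ and $\sum_k k p_k u^{k-1}=\langle K\rangle G_q(u)$, substituting $\alpha=\phi(u)$ into $f$ gives $f(\phi(u))=\langle K\rangle u^2-\sum_k k p_k u^k=u\langle K\rangle\,(u-G_q(u))$, so for $u\in(0,1]$ we have $f(\phi(u))=0$ if and only if $u=G_q(u)$. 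The same substitution turns the $\Theta$-formula into $\Theta=1-\sum_k p_k u^k=1-G_p(u)=\mathcal{S}$, so once we match the relevant roots the identity $\mathcal{S}=\Theta$ is immediate.

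The step I expect to require the most care — and the main obstacle — is matching $\psi$, defined as the \emph{smallest positive} root of $f$, with the solution $u^*\in(0,1)$ of $u=G_q(u)$. Here I would argue: since the statement concerns $u\in(0,1)$, Theorem~\ref{thm: usol} forces $p_1>0$ and $G_q'(1)>1$, hence by part 1 $\Lambda>0$ and we are in Case (b) of Theorem~\ref{thm: Giant Component Size}; moreover Theorem~\ref{thm: usol}(3) gives that $u^*$ is the \emph{unique} root of $u=G_q(u)$ in $(0,1)$. Under $\phi$ the trivial root $u=1$ corresponds to $\alpha=0$, the always-present root $\alpha=\tfrac12\langle K\rangle$ of $f$ corresponds to $u=0$, and by the equivalence above the only root of $f$ strictly inside $(0,\tfrac12\langle K\rangle)$ is $\phi(u^*)$; therefore the smallest positive root is $\psi=\phi(u^*)$, i.e. $1-\tfrac{2\psi}{\langle K\rangle}=(u^*)^2$. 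Feeding this into the $\Theta$-formula as above yields $\Theta=1-G_p(u^*)=\mathcal{S}$. I would also remark that the harmless mismatch at $u=0$ (which solves $u=G_q(u)$ only when $p_1=0$, whereas $\phi(0)=\tfrac12\langle K\rangle$ always solves $f(\alpha)=0$) does not affect the argument since here $p_1>0$.
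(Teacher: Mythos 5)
Your proposal is correct and follows essentially the same route as the paper: the same differentiation of $G_q(x)=G_p'(x)/G_p'(1)$ for part 1, and the same substitution $u^2=1-\tfrac{2\alpha}{\langle K\rangle}$ reducing $f(\alpha)=0$ to $u=G_q(u)$ and the $\Theta$-formula to $1-G_p(u)$ for part 2. Your explicit matching of the smallest positive root $\psi$ with the unique $u^*\in(0,1)$ (via uniqueness from Theorem~\ref{thm: usol} and the images of $u=0,1$ under the change of variables) is a welcome detail that the paper's proof leaves implicit.
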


\begin{proof}
To prove part 1, by Theorem~\ref{thm: pgfGq} we have
    \begin{align}
        & G'_q(1)=\frac{G''_p(1)}{\langle K \rangle}=1
    \\
        \Leftrightarrow \quad & \sum_{k=1}^\infty p_k k = \langle K \rangle =G''_p(1)= \sum_{k=2}^\infty p_k k (k-1)
    \\
        \Leftrightarrow \quad & 0 = -p_1 + \sum_{k=2}^\infty p_k (k(k-1)-k)=\sum_{k=1}^\infty p_k k (k-2) = \Lambda
    \end{align}

To prove part 2, let $u(\alpha)=(1-\frac{2 \alpha}{\langle K \rangle})^{\frac{1}{2}}$. This is a bijection and we can solve $\alpha=\frac{1}{2} \langle K \rangle (1-u(\alpha)^2)$. We see that $u=1$ if and only if $\alpha=0$.
Recall the function $f(\alpha)$ from Theorem~\ref{thm: Giant Component Size}.
\begin{align}
    f(\alpha) &=\langle K \rangle-2 \alpha- \sum_{k=1}^{\Delta} k p_k \Big(1-\frac{2 \alpha}{\langle K \rangle}\Big)^{\frac{k}{2}}
    \nonumber
    \\
    &=\langle K \rangle- \langle K \rangle (1-u(\alpha)^2) - \sum_{k=1}^{\Delta} k p_k \Big(1-\frac{ \langle K \rangle (1-u(\alpha)^2)}{\langle K \rangle}\Big)^{\frac{k}{2}}
    \nonumber
    \\
    &= \langle K \rangle u(\alpha)^2 - \sum_{k=1}^{\Delta} k p_k  u(\alpha)^k
\end{align}
Thus, for $u \neq 0$, we have:
\begin{align}
    &  f(\alpha) = 0\
    \\
    \Leftrightarrow \quad & 0= \langle K \rangle u(\alpha)^2 - \sum_{k=1}^{\Delta} k p_k u(\alpha)^k = \langle K \rangle u(\alpha)^2 - u(\alpha) \sum_{k=1}^{\Delta} p_k k u(\alpha)^{k-1}
    \\
    \Leftrightarrow \quad & 0 = u(\alpha) (\langle K \rangle u(\alpha) - G'_p(u(\alpha)))
    \\
    \Leftrightarrow  \quad & u(\alpha) = \frac{G'_p(u(\alpha))}{\langle K \rangle}= G_q(u(\alpha))
\end{align}
So $f(\psi)=0$ if and only if $u(\psi)=G_q(u(\psi))$.
It follows that,
\begin{equation}
\Theta=1-\sum_{k=1}^{\Delta} p_k \Big(1-\frac{2 \psi}{\langle K \rangle}\Big)^{\frac{k}{2}}=1- \sum_{k=1}^{\Delta} p_k u(\psi)^k= 1- G_p(u(\psi))= \mathcal{S}
\end{equation}
This completes the proof of this theorem.
\end{proof}

This result shows that under the conditions of Theorem~\ref{thm: Giant Component Size}, $\mathcal{S}$ provides a good estimate of the giant component size for a random graph $\mathbb{G}_{N, \textbf{d}}$, and $G'_q(1)$ is indeed a threshold for giant component to exist.

\section{Disease transmission on networks}
\label{sec: disease}

In this section we extend the ideas from random graph theory to disease transmission on networks.
The networks we will consider are graphs wherein the vertices represent individuals in the population and edges represent a disease-causing connection between the two vertices it connects to.
With that in mind, all the model assumptions and terminologies in this section will be listed in terms of network (graph) definitions, and one can easily translate them back to its epidemiological meaning. 
We make the following simplifying assumptions about our disease transmission network for any fixed time:
\begin{enumerate}
    \item[(A1)] The disease can be transmitted along an edge from one vertex to another in both directions.
    
    \item[(A2)] There is at most one edge between any pair of vertices.

    \item[(A3)] No edge starts and ends at the same vertex.

    \item[(A4)] There are infinitely many vertices in the network.
\end{enumerate}

A simple, undirected graph with infinite size, like the graphs we discussed in Section~\ref{sec: basic}, can be used to model a disease network that satisfies (A1)--(A4).
While in practice the network has to be finite, if it is large enough, we can observe results approximated by the results we obtain for $N\to\infty$ from Section~\ref{sec: randomgraphs}.

As we previously discussed, an exact graph of any network is usually hard to determine and thus we represent them instead by random graphs generated and characterized by some given degree sequence. 
Thus we will call a random graph a network for convenience, so that all definitions for graphs and random graphs in Section~\ref{sec: review} now also apply to the network.

To begin, we only consider a network that does not change over time by making the following additional assumption on the disease network:
\begin{enumerate}
    \item[(A5)] The edges between vertices do not change over time.
\end{enumerate}
Since such random graphs are generated using the configuration model like in Algorithm~\ref{alg: Config}, in network epidemiology terminology we refer to such a network following (A1)-(A5) also as a configuration model (CM) network.
We will focus on these types of networks for the remainder of this paper except for Section~\ref{sec: MFSH} where we introduce the mean-field social heterogeneity model.

Configuration networks, or more general network models can be used with many types of epidemiological models, but we will focus on its interaction with the SIR compartmental model, which is how network epidemiology methods were originally developed and mostly used. 
We refer to this as a network-SIR model and employ the assumptions from the SIR compartmental model, so they would share common assumptions and parameters as much as possible.
Thus, other than (A1)--(A5), we have several more assumptions to make. The first one splits the vertices into different compartments according to disease status.
\begin{enumerate}
 \item [(A6)] At all possible times $t$, all vertices in the network can be classified into only one of the tree compartments: susceptible, infectious and recovered.
Let $S(t)$, $I(t)$ and $R(t)$ to be the proportion of the vertices in the network belongs to susceptible, infectious and recovered compartment at time $t$ and these functions are found in the simplex given by,
    \begin{equation}
        (S(t),I(t),R(t))\in \mathbb{S} := \{(S,I,R)|S, I, R \in[0,1], S+I+R=1\}, \quad \; \forall t\geq 0.
    \label{eqn: simplex}
    \end{equation}
\end{enumerate} 

The MA-SIR model we discussed in the introduction is usually applied to a large enough population so that $S(t)$, $I(t)$ and $R(t) \in [0,1] \subset \mathbb{R}$ can be approximated by continuous functions defined on a continuous time interval.
Using such a setting, one can derive ODE systems to describe the disease spreading and use techniques from ODE theory.
This is a common approach applied by many continuous models with large discrete quantities so that more mathematical tools from continuous mathematics and differential equations can be used.
Using the same large population assumption in (A4), we also apply same continuous dimensionless approach in (A6), even for typical percolation process, ODE and dynamical information are not required.
In network models, these proportions can also be interpreted as the probability that a randomly chosen individuals belongs to each compartment at time $t$.
Later on in Section~\ref{sec: dyn}, we reconsider the dynamics with ODEs in network models, so such continuity in quantities is required and we just keep it as our default assumption for consistency.

In the SIR model, the disease transmission is represented by flows between compartments.
Similarly in network-SIR model, we assume:
\begin{enumerate}
    \item [(A7)]Only susceptible vertices can get infected and change to infectious vertices, but never change into recovered vertices directly.
    Moreover, any susceptible vertex $j$ can only get infected from an edge $e$ connecting itself to an infectious vertex $i$ with a probability $\beta_{ij}$.
    
    \item [(A8)]Any infectious vertex $i$ remains infectious for a period of time $\tau_i$, then changes to a recovered vertex.
    
    \item [(A9)]Once a vertex becomes recovered, it would remain in that compartment forever and never change back to susceptible or infectious. 
\end{enumerate} 

In extensions of the SIR compartmental models, there are multiple different detailed assumptions about how the probability $\beta_{ij}$ in (A7) and function of recovery time $\tau_i$ in (A8) is defined, which would also lead to slightly different models.
We will discuss some of the detailed assumptions for these functions in Section~\ref{sec: Trans}.

The assumption (A9) is also known as the assumption of perfect immunity.
This protection remains during the single epidemic outbreak.
\cite{BansalMeyers:2012} has extend the model and percolation method to the case considering imperfect immunity between multiple outbreaks, however this is beyond the scope of our current study.

Both MA-SIR models and network-SIR models need assumptions for the initial state of the disease.  
In MA-SIR models, such assumptions relate to the initial condition in order to set up the ODE system as an initial value problem. To model a new epidemic outbreak, we usually assume that we initialize from only one infectious individual at time $t=0$, while all others in the population are susceptible and there is no recovered individual.
In MA-SIR models, this becomes the initial condition: $S(0)=1-I(0), I(0)>0, R(0)=0$, where $I(0)$ is assumed to be some small amount close to $0$.
If there is a specified population size $N \in \mathbb{Z}_{>0}$, we can assume that $I(0)=\frac{1}{N}$.

For the network-SIR model, a discrete perspective of the graph needs to 
be considered to apply theories in Section~\ref{sec: basic}: the numbers of vertices, edges and components in networks must have positive integer values.
Therefore, we have the following assumption for initial stage of disease:
\begin{enumerate}
    \item [(A10)] At time $t=0$, exactly one uniformly chosen vertex is infectious and all other vertices are susceptible.
\end{enumerate} 

In the MA-SIR compartmental models we can compute then next-generation matrix method to derive the basic reproduction number ($R_0$) to determine the stability of the disease-free equilibrium~\citep{diekmann_heesterbeek_metz_1990,vandendriessche_watmough_2002}. 
In most cases with small initial infection amount, when $R_0<1$ the disease will only infect a small portion of the population and asymptotically go extinct. 
If $R_0>1$ there is a significant outbreak.
This transcritical bifurcation at $R_0=1$ is similar to the critical transition threshold from the network consisting only of small components to the emergence of the giant component as we have discussed in Section~\ref{sec: pgfs}.

We will derive threshold results for the network-SIR model using percolation methods which is a common approach to analytically connect networks with compartmental dynamics.
We will derive threshold results for the network-SIR model using percolation methods which is a common approach to analytically connect networks with compartmental dynamics.
Percolation process on a network is that some of its vertices or edges are randomly assigned ``occupied'' status.
Then one can explores various properties of the resulting patterns of occupation status.
The percolation model was first developed in the 1950s and one of its main application was the modeling of the disease spreading on networks.
As showing in Figure~\ref{fig: percolationtypes}, there are two types of percolation methods: site percolation and bond percolation.

\begin{figure}[hbt]
    \begin{center}
        \includegraphics[width=0.6\textwidth]{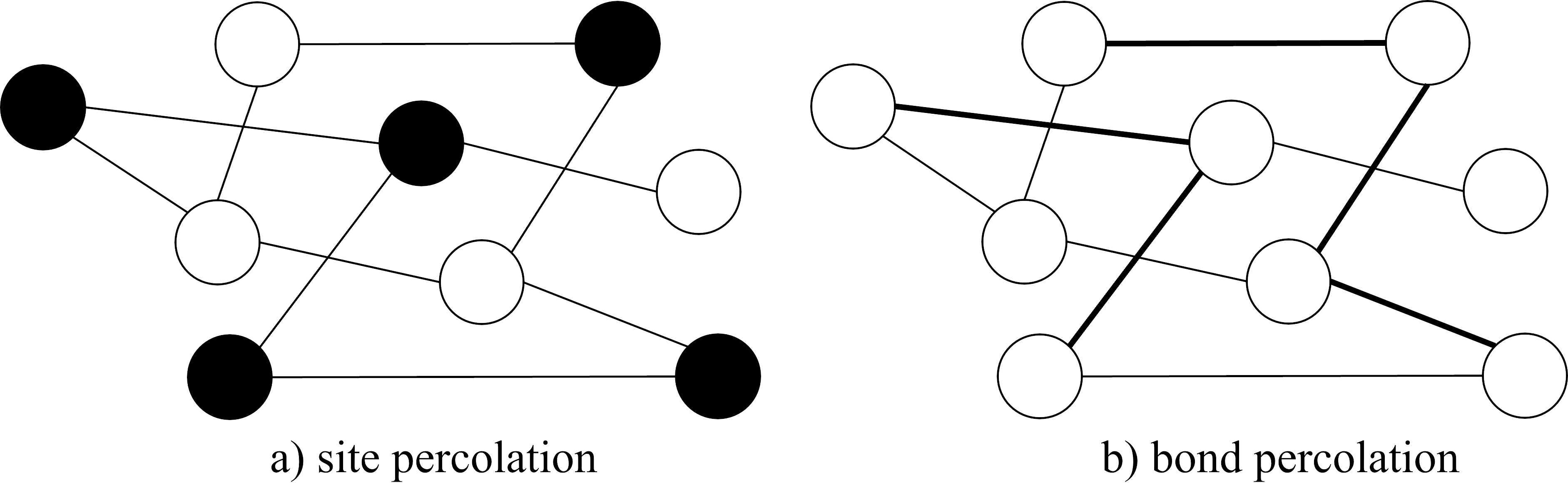}
    \end{center}
    \caption[Site and bond percolation on a network]{Site percolation a) focus on whether vertices (sites) are occupied (solid circles) or not (open circles). Bond percolation b) focus on whether the edges (bonds) are occupied (thick lines) or not (thin lines).
    Figure was created using MS Office.}
\label{fig: percolationtypes}
\end{figure}

Under assumptions (A1)-(A4), since edges and vertices are random with no specified structure, the occupation status of both edges and vertices are random events under some probability depends on the degree distribution.
Not every disease-causing connection, or edge, in the network will transmit the infection.
The typical bond percolation process developed by \cite{Newman:2002} only tracks the overall probability of a randomly chosen edges actually transmitting infection if it is joined with an infected vertex.
In Section~\ref{sec: Trans}, we will see that with further assumptions, such probability can be determined as a function of disease parameters.

We use percolation process by defining the occupation status of vertices and edges in bond percolation process in the following manner:
\begin{definition}
In the bond percolation process of the network-SIR model,
\label{def: occupation}
    \begin{enumerate}
        \item An edge is considered to be \textbf{occupied} if it can transmit the infection when any of its ends is infected.
        \item A vertex is considered to be \textbf{occupied} if it is incident with an occupied edge. 
    \end{enumerate}
\end{definition}
Our process focuses on the occupation pattern of edges, and the proportion of occupied edges among all edges is the same as the overall prior probability.
This does not provide information on the time evolution of the epidemic as it spreads through the network.

A common misunderstanding when using bond percolation processes to model disease transmission is that occupied vertices is the same as infected vertices, and occupied edges are the edges that actually transmit the infections during the outbreak.
However, if both ends of an edge are infected, then the infection cannot transmit between them through the edge.
So there is no loop if we just consider infected vertices and edges that actually transmit the infection.
On the other hand, occupied vertices and occupied edges can form loops since no evolutionary information is considered when defining occupational status.
This difference is important since the existence of loops is equivalent to existence of a giant component w.h.p. as we explained in Section~\ref{sec: basic}.
We will discuss details on how infection and occupation are related in Section~\ref{sec: perco}, so that the result of percolation can be connected with epidemic results.
This also explains the necessity of (A10) for the network-SIR model.

In Section~\ref{sec: Trans} we present a mathematical description of the prior probability for disease spreading within network based on assumption (A1)-(A10) and some further detailed assumptions added on to (A7) and (A8) that would affect the spreading procedure. 
In Section~\ref{sec: perco} we derive methods to derive the expectation of the epidemic size on the network over all possible networks with the given degree distribution, based on (A1)-(A10) and Section~\ref{sec: Trans}.

\subsection{Transmissibility} 
\label{sec: Trans}




In the network-SIR model, the existence of an edge between two vertices does not mean the disease will be transmitted even if one end is infected.
Newman's percolation method, like many other disease models, assumes the probability that a disease is transmitted between a connected pair of infected and susceptible individuals is identically distributed for mathematical simplicity.
The transmissibility of the disease is the key idea behind this.

\begin{definition}[Transmissibility]
The \textbf{transmissibility} $T$ of a disease, also known as uniform edge occupation probability, is the mean (across all edges) probability that a randomly chosen edge is capable of transmitting the disease per unit time.
\label{def: Transmissibility}
\end{definition}

In order to determine the transmissibility $T$, we need to first determine the probability of transmission $T_{ij}$ between from an infected vertex $i$ to a susceptible vertex $j$. This relies on both $\beta_{ij}$ from (A7) and the recovery rate $\gamma_i$ from (A8).


\begin{theorem}[\cite{Newman:2002}]
\label{thm: not trans rate}
Assume (A1)--(A10). Let $T_{i j}$ be the probability of the infection transmitting from infected vertex $i$ to the susceptible vertex $j$ along edge $e$ per unit time.
Then,
    \begin{equation}
        T_{ij}=1-e^{-\beta_{ij} \tau_{i}}.
    \label{eqn: Tij}
    \end{equation}
\end{theorem}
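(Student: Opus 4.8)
The plan is to model transmission along the single edge $e$ as a continuous-time stochastic process and reduce the claim to solving a one-dimensional linear ODE. First I would fix the vertex $i$ at the instant it becomes infectious and reset the clock, so that by (A8) the vertex $i$ is infectious precisely on $[0,\tau_i]$, and by (A7) and (A9) the vertex $j$ is susceptible at time $0$ and can leave the susceptible class only by becoming infected (after which it stays infected). Interpreting $\beta_{ij}$ from (A7) as the instantaneous hazard rate of transmission along $e$ — i.e.\ in an infinitesimal interval of length $dt$ the probability that $e$ passes the infection from $i$ to $j$ is $\beta_{ij}\,dt + o(dt)$, independently of the past — the transmission events along $e$ form a Poisson process of rate $\beta_{ij}$ for as long as $i$ is infectious.

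Next I would let $q(t)$ denote the probability that no transmission has occurred along $e$ during $[0,t]$, for $0\le t\le\tau_i$. The independence built into the hazard-rate assumption gives $q(t+dt) = q(t)\,(1-\beta_{ij}\,dt) + o(dt)$, hence $q'(t) = -\beta_{ij}\,q(t)$ with $q(0)=1$, so $q(t) = e^{-\beta_{ij}t}$; equivalently, the number of transmission events in $[0,\tau_i]$ is Poisson with mean $\beta_{ij}\tau_i$ and $q(\tau_i)$ is the probability it equals zero. The edge $e$ succeeds in passing the infection to $j$ exactly when at least one transmission event occurs before $i$ recovers, i.e.\ within $[0,\tau_i]$, and a single event suffices since $j$ is still susceptible up to the first such event. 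Therefore $T_{ij} = 1 - q(\tau_i) = 1 - e^{-\beta_{ij}\tau_i}$, which is \eqref{eqn: Tij}.

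The only real subtlety is the modelling justification rather than the computation: one must argue that for the purpose of computing $T_{ij}$ it is legitimate to treat $e$ in isolation — that whether $e$ transmits depends only on $i$'s infectious status and the independent Poisson clock carried by $e$, not on the rest of the network — and that $j$ being susceptible ``at the right times'' is automatic. Under (A1)--(A10) this is fine: if $j$ happened to be infected via some other edge before $\tau_i$, that only makes the event ``$e$ would have transmitted'' redundant for the epidemic but does not change the probability $1-e^{-\beta_{ij}\tau_i}$ that $e$ is occupied in the sense of Definition~\ref{def: occupation}. I would also remark that $\tau_i$ is here regarded as given; when $\tau_i$ is random under (A8), \eqref{eqn: Tij} is the conditional transmission probability given $\tau_i$, and the (unconditional) transmissibility $T$ of Definition~\ref{def: Transmissibility} is recovered by averaging $1-e^{-\beta_{ij}\tau_i}$ over the joint distribution of $\beta_{ij}$ and $\tau_i$.
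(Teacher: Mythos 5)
Your proof is correct and is essentially the same argument as the paper's: both treat transmission along $e$ as a constant-hazard process at rate $\beta_{ij}$ over the infectious window $[0,\tau_i]$ and compute the survival probability $e^{-\beta_{ij}\tau_i}$, the paper via the discrete-product limit $\lim_{\Delta t\to 0}(1-\beta_{ij}\Delta t)^{\tau_i/\Delta t}$ and you via the equivalent ODE $q'=-\beta_{ij}q$. Your closing remarks on treating the edge in isolation and on $\tau_i$ being conditioned upon are consistent with how the paper uses this result in Theorem~\ref{thm: genT}.
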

\begin{proof}
Any randomly chosen edge $e$ has infectious probability equal to $\beta_{i j}$ and the infectious individual $i$ remains infectious for a time $\tau_i$.
Over a small time period $\Delta t$, the probability of transmission is $\beta_{ij} \Delta t$. 
Since the duration of the infected individual remains infected is $\tau_i$ during which time we have $\frac{\tau_i}{\Delta t}$ intervals of length $\Delta t$.
The probability of no transmission is given by,
\begin{equation}
        1-T_{i j} = \lim_{\Delta t\to 0} (1- \beta_{i j} \Delta t)^{\frac{\tau_i}{\Delta t}}=e^{-\beta_{i j} \tau_i}.
          \label{eqn: probtransmit}
\end{equation}
Rearranging yields the result of the theorem.
\end{proof}

To find an expression for $T=\langle T_{ij} \rangle$ (where the average is taken over all pairs $i$ and $j$), we need further assumptions on $\beta_{ij}$ and $\tau_i$.
The different forms that these assumptions can take is the focus of this section. 

In the MA-SIR model, the mass action assumption averages out the individual differences and assuming homogeneity not only in individual contact rates but also in disease transmission process. 
From the network point of view, the homogeneity in contact rate means each vertex has the same degree $N-1$, where $N$ is the network size.
The homogeneity assumption can be written as an additional assumption to (A7) and we denote this by (A7a).
\begin{enumerate}
\item [(A7a)] For any possible edge $e$ connecting susceptible vertex $j$ and infectious vertex $i$, the probability of the infection transmitting has the same uniformly constant value $\beta_{ij} = \hat{\beta} \geq 0$.
\end{enumerate}
Here $\hat{\beta}$ is the constant per-infected transmission rate in Figure~\ref{fig: MA-SIR}.

To demonstrate the dynamics of disease spreading as an ODE system, the MA-SIR model assumes that the recovery time of each infectious individual is independent and identically distributed (i.i.d.) exponential random variables with fixed recovery rate parameter.
This parameter, the constant per-capita recovery rate, is denoted by $\hat{\gamma}$ in Figure~\ref{fig: MA-SIR} for the ODE and the expectation of recovery time among all infectious individual is also required to be $\frac{1}{\hat{\gamma}}$.
This assumption can be written as an additional assumption to (A8) and we denote this by (A8a).
\begin{enumerate}
    \item [(A8a)] For any infectious vertex $i$, the recovery time $\tau_i$ is an i.i.d. exponential random variable (among all infectious vertices) with probability density function given by,
    \begin{equation}
    \label{eqn: exp tau}
        f_{\tau}(t)=\hat{\gamma}e^{-\hat{\gamma} t}   
    \end{equation}
\end{enumerate}
Here $\hat{\gamma}>0$ is uniform constant recovery rate among all infectious vertices in Figure~\ref{fig: MA-SIR}.
This density function results in  $\mathbb{E}_{ij} [\tau_{ij}]=\frac{1}{\hat{\gamma}}$.

The network models allow for social heterogeneity by removing the mass action assumption, so that transmissibility for the edges might vary if we have different assumptions about the disease spreading process.
The easiest way to include the heterogeneity is to consider different contact rates, which is now represented by degree sequences in the network, and keep assumptions (A7a) and (A8a) about disease transmission procedure. 
If so, the mathematical simplicity is kept mostly, therefore these assumptions are used by many of the compartmental network models, including those developed by \cite{MillerSlimVolz:2012} in Section~\ref{sec: dyn}.

To understand the transmissibility idea and the capabilities of the network-SIR model better, we can also derive $T$ in a more general case.
By allowing weaker assumption than (A7a) and (A8a), the typical bond percolation method could apply with higher level heterogeneity in disease transmission while maintaining reasonable tractability.
So instead of using identical constant parameters $\hat{\beta}$ and $\hat{\gamma}$ for all individuals, we can make alternative new assumptions (A7b) and (A8b) that respectively provide more restrictions than (A7) and (A8):
\begin{enumerate}
\item [(A7b)] For any possible edge $e$ connecting susceptible vertex $j$ and infectious vertex $i$, the probability $\beta_{i j}$ of infection transmitting  is an i.i.d. random variable among all such edges, chosen from a distribution with arbitrary probability density function $f_{\beta}(b)$.

\item [(A8b)] For any infectious vertex $i$, the recovery time $\tau_i$ is an i.i.d. random variable among all infectious vertices, chosen from a distribution with arbitrary probability density function $f_{\tau}(t)$.
\end{enumerate}


Since $\beta_{ij}$ and $\tau_{i}$ i are i.i.d. random variables, so is $T_{ij}$.
Therefore, the disease transmission probability between any infected-susceptible pair of individuals is simply the expectation of $T_{ij}$ with probability density function $f_{\tau}(t)$ and $f_{\beta}(b)$.

\begin{theorem} [\cite{Newman:2002}]
\label{thm: genT}
Under assumptions (A1)--(A10), (A7b) and (A8b), the transmissibility $T$ defined in Definition~\ref{def: Transmissibility} is given by,
    \begin{equation} 
        T=\mathbb{E}_{ij} [T_{ij}] =1-\int_{0}^{\infty} \int_{0}^{\infty} f_{\beta}(b) f_{\tau}(t) e^{-bt} db\, dt.
    \label{eqn: prob gen T}
    \end{equation}        
Clearly $T \in [0,1)$.        
\end{theorem}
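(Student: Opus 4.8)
The plan is to build directly on Theorem~\ref{thm: not trans rate}, which already expresses the per-pair transmission probability as the deterministic function $T_{ij} = 1 - e^{-\beta_{ij}\tau_i}$ of the realized values $\beta_{ij}$ and $\tau_i$. Under (A7b) and (A8b) the quantities $\beta_{ij}$ and $\tau_i$ are themselves i.i.d.\ random variables with densities $f_\beta$ and $f_\tau$, drawn independently of one another; consequently $T_{ij}$, as a fixed measurable function of the pair $(\beta_{ij},\tau_i)$, is also i.i.d.\ across edges. By Definition~\ref{def: Transmissibility}, $T$ is the mean over all edges of this occupation probability, so the first step is to identify that edge-average with the common expectation $\mathbb{E}_{ij}[T_{ij}]$: since the $T_{ij}$ are identically distributed, the empirical edge-average equals $\mathbb{E}_{ij}[T_{ij}]$ in expectation exactly, and w.h.p.\ in the $N\to\infty$ limit of (A4) by concentration.

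Next I would compute that expectation. Writing $T = \mathbb{E}_{ij}[1 - e^{-\beta_{ij}\tau_i}] = 1 - \mathbb{E}_{ij}[e^{-\beta_{ij}\tau_i}]$ is legitimate because the integrand lies in $[0,1]$, so the expectation is finite. Using the independence of $\beta_{ij}$ and $\tau_i$ from (A7b)--(A8b), the joint density of $(\beta_{ij},\tau_i)$ factors as $f_\beta(b)f_\tau(t)$, and since $e^{-bt}f_\beta(b)f_\tau(t)$ is nonnegative, Tonelli's theorem lets me write the expectation as the iterated integral $\int_0^\infty\int_0^\infty f_\beta(b)f_\tau(t)e^{-bt}\,db\,dt$. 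Substituting back yields exactly \eqref{eqn: prob gen T}.

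For the bound $T\in[0,1)$: the integrand is nonnegative and $e^{-bt}>0$ everywhere, so the double integral is strictly positive and hence $T<1$; conversely $e^{-bt}\le 1$ gives $\int_0^\infty\int_0^\infty f_\beta(b)f_\tau(t)e^{-bt}\,db\,dt \le \int_0^\infty\int_0^\infty f_\beta(b)f_\tau(t)\,db\,dt = 1$, so $T\ge 0$ (and in fact $T>0$ whenever the densities are supported on $(0,\infty)$, since then $\{bt=0\}$ is $f_\beta f_\tau$-null).

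The only genuinely delicate point I expect is the first step: making precise the passage from ``$T$ is the mean edge occupation probability'' in Definition~\ref{def: Transmissibility} to ``$T=\mathbb{E}_{ij}[T_{ij}]$''. For a finite network this is an identity only in expectation, and in the $N\to\infty$ setting of (A4) it holds w.h.p.\ by the identical distribution of the $T_{ij}$ together with a law-of-large-numbers argument; I would state it at that level rather than belabor the measure-theoretic bookkeeping. Everything after that---linearity of expectation, factoring the joint density, Tonelli, and the elementary estimates on $e^{-bt}$---is routine.
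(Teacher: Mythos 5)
Your proposal is correct and follows essentially the same route as the paper, which states this result without a formal proof and instead relies on the preceding remark that $T_{ij}=1-e^{-\beta_{ij}\tau_i}$ is i.i.d.\ under (A7b)--(A8b), so that $T$ is just its expectation over the product density $f_\beta(b)f_\tau(t)$. Your write-up merely fills in the routine details (Tonelli, the law-of-large-numbers identification of the edge average with $\mathbb{E}_{ij}[T_{ij}]$, and the elementary bounds giving $T\in[0,1)$), all of which are consistent with the paper's intent.
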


Individual differences with this assumption has no impact on final result: from whole population level, the disease propagate as if all transmission always has same probabilities as overall probability $T$~\citep{Newman:2002}.
The typical bond percolation process requires (A7b) and (A8b) to get a uniformly constant $T$ to make the models solvable. 
If $\beta_{ij}$ and $\gamma_i$ are not i.i.d. then we would have distinguishable types of edges in the network, and thus different type of vertices and corresponding vary transmissibility.
This is beyond the scope of this paper, but we note that the idea behind the percolation process still works if we have a limited number of edge types, but this will need some modifications of the percolation process~\citep{BansalMeyers:2012}.


In order to compare the use of the percolation method on the network-SIR model with the MA-SIR model and other models to be discussed in Section~\ref{sec: dyn}, we use (A7a) and (A8a) as the default assumption.
By doing so, these models share the same assumption on disease transmitting procedure, thus the same parameters.
To derive $T$ under (A7a) and (A8a), we observe that (A8a) is a special case of (A8b) with $f_{\tau}(t)=\hat{\gamma}e^{-\hat{\gamma} t}$, and (A7a) is a special sub-case of (A7b) using a distribution that puts all the probability entirely at $\beta=\hat{\beta}$. 


\begin{corollary} 
\label{cor: const T}
Under assumptions (A1)--(A10), (A7a) and (A8a), the transmissibility $T$ defined in Definition~\ref{def: Transmissibility} is given by,
\begin{equation} 
        T=\frac{\hat{\beta}}{\hat{\beta}+\hat{\gamma}}
    \label{eqn: const T}
    \end{equation}  
Clearly $T \in [0,1)$.        
\end{corollary}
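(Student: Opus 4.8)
The plan is to specialize the already-proved formula for the transmissibility to the two extra hypotheses (A7a) and (A8a). Note that (A8a) is exactly the case of (A8b) with $f_\tau(t)=\hat\gamma e^{-\hat\gamma t}$, and (A7a) is the degenerate case of (A7b) in which the law of $\beta_{ij}$ is the point mass concentrated at $\hat\beta$. So the cleanest route is to feed these two choices into the identity $T=1-\int_0^\infty\int_0^\infty f_\beta(b)f_\tau(t)e^{-bt}\,db\,dt$ from Theorem~\ref{thm: genT} and evaluate; or, to sidestep the point-mass distribution altogether, to argue directly from Theorem~\ref{thm: not trans rate}.

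First I would observe that under (A7a) the quantity $\beta_{ij}\equiv\hat\beta$ is deterministic, so Theorem~\ref{thm: not trans rate} gives $T_{ij}=1-e^{-\beta_{ij}\tau_i}=1-e^{-\hat\beta\tau_i}$, and then by Definition~\ref{def: Transmissibility} (the mean over edges) and the i.i.d.\ structure, $T=\mathbb{E}_{ij}[T_{ij}]=1-\mathbb{E}\!\left[e^{-\hat\beta\tau_i}\right]$, where $\tau_i$ is exponential with rate $\hat\gamma$ by (A8a). Equivalently, one reaches the same expression by replacing $f_\beta$ with the Dirac mass at $\hat\beta$ in Theorem~\ref{thm: genT}, collapsing the inner integral to leave $T=1-\int_0^\infty f_\tau(t)e^{-\hat\beta t}\,dt$. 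Either way the whole problem reduces to computing $\mathbb{E}\!\left[e^{-\hat\beta\tau_i}\right]$.

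Next I would recognize $\mathbb{E}\!\left[e^{-\hat\beta\tau_i}\right]$ as the moment generating function of the $\mathrm{Exp}(\hat\gamma)$ law evaluated at $-\hat\beta$, and compute the one-line integral $\int_0^\infty \hat\gamma e^{-\hat\gamma t}e^{-\hat\beta t}\,dt=\hat\gamma\int_0^\infty e^{-(\hat\beta+\hat\gamma)t}\,dt=\frac{\hat\gamma}{\hat\beta+\hat\gamma}$, the integral converging because $\hat\gamma>0$. Substituting gives $T=1-\frac{\hat\gamma}{\hat\beta+\hat\gamma}=\frac{\hat\beta}{\hat\beta+\hat\gamma}$, which is the claimed formula in \eqref{eqn: const T}. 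The bound $T\in[0,1)$ is then immediate: $\hat\beta\geq 0$ forces $T\geq 0$, while $\hat\gamma>0$ keeps the denominator strictly larger than the numerator, so $T<1$ (with $T=0$ precisely when $\hat\beta=0$, and $T\to 1$ only in the limit $\hat\beta\to\infty$).

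There is essentially no obstacle here, since this is a routine specialization of Theorem~\ref{thm: genT}; the only point requiring a little care is the passage from the two-variable integral to the one-variable integral when the $\beta$-distribution degenerates to a point mass, and for that reason I would prefer to argue straight from Theorem~\ref{thm: not trans rate} (taking the expectation over $\tau_i$ only) rather than manipulate a Dirac delta inside the integral of Theorem~\ref{thm: genT}.
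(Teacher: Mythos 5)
Your proposal is correct and matches the paper's own proof, which likewise substitutes $f_\beta(b)=\delta_{\hat\beta}(b)$ and $f_\tau(t)=\hat\gamma e^{-\hat\gamma t}$ into Theorem~\ref{thm: genT}, collapses the inner integral, and evaluates $1-\hat\gamma\int_0^\infty e^{-(\hat\beta+\hat\gamma)t}\,dt=\frac{\hat\beta}{\hat\beta+\hat\gamma}$. Your preference for taking the expectation over $\tau_i$ directly from Theorem~\ref{thm: not trans rate} is only a cosmetic variant of the same computation.
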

\begin{proof}
Using \eqref{eqn: exp tau} and $f_{\beta}(b)=\delta_{\hat{\beta}}({b})$ (the Dirac delta function with mass located at $\hat\beta$) in Theorem~\ref{thm: genT}, 
\begin{align}
    T & =\mathbb{E}_{ij} [T_{ij}] =1-\int_{0}^{\infty}\int_{0}^{\infty} \delta_{\hat{\beta}}(b)\hat{\gamma}e^{-\hat{\gamma} t}e^{-{\beta} t}  db\, dt
    \nonumber
    \\
    &=1-\int_{0}^{\infty} e^{-\hat{\beta} t} \hat{\gamma}e^{-\hat{\gamma} t} dt =1-\hat{\gamma}\int_{0}^{\infty} e^{-(\hat{\beta}+\hat{\gamma})\tau} d\tau=\frac{\hat{\beta}}{\hat{\beta}+\hat{\gamma}}
    \end{align} 
\end{proof}

Although we have already explained that (A7a) and (A8a) will be our default assumptions, we take the opportunity of this section to clarify the consequences of other assumptions that have been a cause for confusion in some of the early work on network epidemiology.
Between \cite{Newman:2002} and \cite{Newman:2003,Newman:2010} there is an assumption change for recovery time, which is only briefly discussed in \cite{Newman:2010}.
In the original derivation of transmissibility~\citep{Newman:2002} like in Corollary~\ref{thm: genT}, $T$ is derived from the MA-SIR compartmental model by replacing homogeneous assumption (A7a)-(A8a) with heterogeneous assumption (A7b)-(A8b).
However, this result does not include a clear assumption to specify the distribution of $\beta_{ij}$ and $\tau_i$, and there is no stated relationship between $\tau_i$ and the recovery rate $\gamma$ in the SIR model.
In \cite{Newman:2003}, the same assumption (A7b) is used but the other assumption changes from (A8b) to the next assumption (A8c) which we discuss below. 
\begin{enumerate}
    \item [(A8c)] For any infectious vertex $i$, the recovery time $\tau_i$ is given by 
    \begin{equation} 
        \tau_i= \frac{1}{\gamma_i}
    \label{eqn: taugamma A8c}
    \end{equation} 
    where $\gamma_i$ is the recovery rate of each vertex $i$. 
    Furthermore, $\gamma_i$ is i.i.d. random variable among all infectious vertices, chosen from some appropriate distributions with probability density function $f_{\gamma}(r)$.
\end{enumerate}
Under assumption (A8c), we rewrite the result of Theorem~\ref{thm: not trans rate} to,
\begin{equation}
    T_{ij}=1-e^{-\frac{\beta_{ij}}{\gamma_i}}
\label{eqn: Tij newman}
\end{equation}
Following the same ideas from Theorem~\ref{thm: genT}, we can derive the following result consistent with \cite{Newman:2003}:
\begin{theorem} [\cite{Newman:2003}]
\label{thm: genT newman}
Under assumptions (A1)-(A10), (A7b) and (A8c), the transmissibility $T$ defined in  Definition~\ref{def: Transmissibility} is given by,
\begin{equation} 
T=\mathbb{E}_{ij} [T_{ij}] = 1-\int_{0}^{+\infty} f_{\beta}(b) f_{\gamma}(r) e^{-\frac{b}{r}} db \, dr.
\label{eqn: prob gen T newman}
\end{equation}        
Clearly $T \in [0,1)$.        
\end{theorem}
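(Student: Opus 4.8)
The plan is to reproduce the argument used for Theorem~\ref{thm: genT} almost verbatim, changing only the ingredient that links the recovery time to the recovery rate. First I would start from Theorem~\ref{thm: not trans rate}, which under (A1)--(A10) gives the per-pair transmission probability $T_{ij}=1-e^{-\beta_{ij}\tau_i}$, and substitute the new relation $\tau_i=1/\gamma_i$ from (A8c) (equation~\eqref{eqn: taugamma A8c}) to land on \eqref{eqn: Tij newman}, namely $T_{ij}=1-e^{-\beta_{ij}/\gamma_i}$. This is the only structural change relative to the derivation of Theorem~\ref{thm: genT}.

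Next I would observe that (A7b) makes the $\beta_{ij}$ i.i.d.\ with density $f_\beta$, (A8c) makes the $\gamma_i$ i.i.d.\ with density $f_\gamma$, and — because the draws attached to distinct edges and vertices in the configuration model are taken independently — $\beta_{ij}$ and $\gamma_i$ are independent of each other. Consequently $T_{ij}$ is itself an i.i.d.\ random variable across infected--susceptible pairs whose law does not depend on the particular pair $(i,j)$. By Definition~\ref{def: Transmissibility}, the transmissibility $T$ is the mean edge occupation probability across all edges, and in the infinite-network limit (A4) this empirical average equals the common expectation $\mathbb{E}_{ij}[T_{ij}]$. This identification of the across-edge average with the probabilistic expectation is really the only step that requires care; it is inherited unchanged from the proof of Theorem~\ref{thm: genT} and rests on a law-of-large-numbers / exchangeability argument, not on anything specific to (A8c).

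It then remains to evaluate the expectation, which is routine: using independence to factor the joint density as $f_\beta(b)f_\gamma(r)$ and applying linearity of expectation,
\begin{equation}
T=\mathbb{E}_{ij}[T_{ij}]=\mathbb{E}_{ij}\!\left[1-e^{-\beta_{ij}/\gamma_i}\right]=1-\int_{0}^{\infty}\!\!\int_{0}^{\infty} f_\beta(b)\,f_\gamma(r)\,e^{-b/r}\,db\,dr,
\end{equation}
which is exactly \eqref{eqn: prob gen T newman}. For the bound $T\in[0,1)$, note that $e^{-\beta_{ij}/\gamma_i}\in(0,1]$ is bounded and strictly positive, so its expectation — the double integral — lies in $(0,1]$, giving $T\in[0,1)$. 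I do not expect a genuine obstacle here; the calculation is short once (A8c) has been plugged in, and the one point worth stating explicitly is why averaging over edges can be replaced by the probabilistic expectation, exactly as in Theorem~\ref{thm: genT}.
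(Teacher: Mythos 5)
Your proposal is correct and follows essentially the same route as the paper, which itself derives Theorem~\ref{thm: genT newman} by repeating the argument of Theorem~\ref{thm: genT} with $\tau_i=1/\gamma_i$ substituted from (A8c) and then taking the expectation of $T_{ij}=1-e^{-\beta_{ij}/\gamma_i}$ over the independent densities $f_\beta$ and $f_\gamma$. Your explicit remark on why the across-edge average can be identified with $\mathbb{E}_{ij}[T_{ij}]$ is a reasonable clarification of a step the paper leaves implicit, but it does not change the substance of the argument.
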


Other than the heterogeneity on $\gamma_i$, (A8c) has another key difference about the relationship between recovery time and the recovery rate, comparing to the common assumption (A8a) for most compartmental model.
To better illustrate the difference and its impact, we again restrict (A8c) to have no individual difference on $\gamma_i$:
\begin{enumerate}
    \item [(A8d)] For any infectious vertex $i$, the recovery time $\tau_i$ is given by, 
    \begin{equation} 
        \tau_i= \frac{1}{\hat{\gamma}}
    \label{eqn: taugamma A8d}
    \end{equation} 
\end{enumerate}
Again, $\hat{\gamma}$ is uniform constant recovery rate among all infectious vertices in Figure~\ref{fig: MA-SIR}, so now (A8d) has the same parameter as (A8a).
Moreover, they both have the same expected recovery time $\mathbb{E}_{i,j}[\tau]=\frac{1}{\hat{\gamma}}$, which provide a better comparison.

\begin{figure}[htbp]
    \begin{center}
        \includegraphics[width=0.5\textwidth]{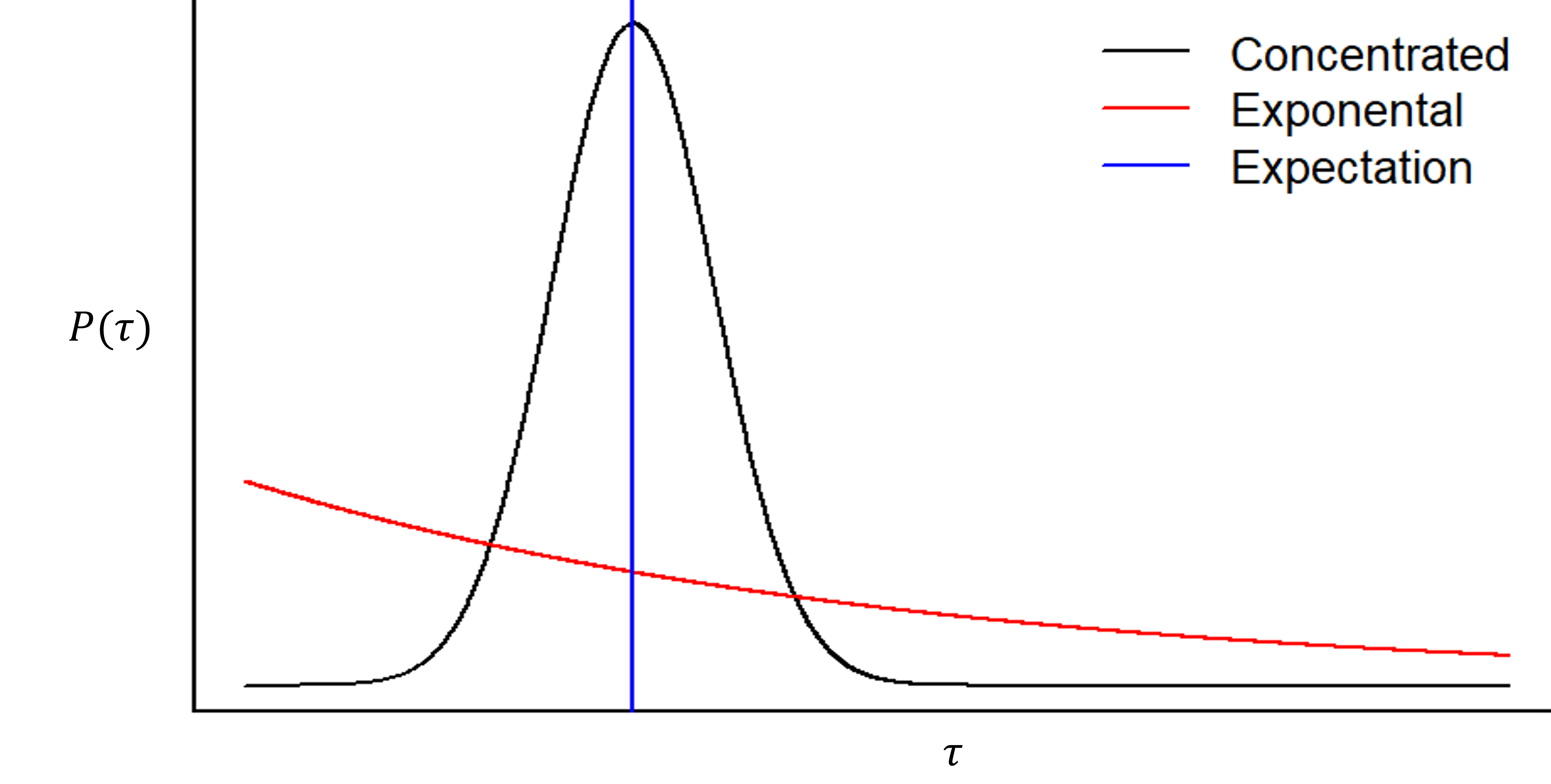}
    \end{center}
    \caption{Probability density function $P(\tau)$ with respect to recovery time $\tau$ with the same expectation $\mathbb{E}_{i,j}[\tau]=\frac{1}{\hat{\gamma}}$. The black is curve is the concentrated distribution in \cite{Newman:2010}. The red curve is the exponential distribution in (A8a). The blue line is the constant recovery time $\tau=\frac{1}{\hat{\gamma}}$ in (A8d).
    Figure was created using R}
\label{fig: recovery time}
\end{figure}

\cite{Newman:2010} explains the motivation and justification of (A8d), which extend to (A8c) with heterogeneity.
Unlike in (A8a) where individuals are assumed to recover after an exponentially distributed time, mainly for simplicity in ODEs, it is claimed that typically diseases have more concentrated recovery times, with a density function more similar to the black curve in Figure~\ref{fig: recovery time}.
Under such circumstance, (A8d) serves as a better simplification of the concentrated distribution than exponential distribution in (A8a).
Then based on Theorem~\ref{thm: genT newman} for (A8c), we have the following result: 
\begin{corollary} 
\label{cor: const T Newman}
Under assumptions (A1)--(A10), (A7a) and (A8d), the transmissibility $T$ defined in Definition~\ref{def: Transmissibility} is given by,
    \begin{equation} 
        T= T_{ij} = \mathbb{E}_{ij}[T_{ij}]= 1-e^{-\frac{\hat{\beta}}{\hat{\gamma}}}
    \label{eqn: const T Newman}
    \end{equation}  
Clearly $T \in [0,1)$.  
\end{corollary}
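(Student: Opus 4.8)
The plan is to derive Corollary~\ref{cor: const T Newman} as an immediate specialization of Theorem~\ref{thm: genT newman}, in exact analogy with the way Corollary~\ref{cor: const T} was obtained from Theorem~\ref{thm: genT}. First I would note that (A7a) is the sub-case of (A7b) in which the distribution of $\beta_{ij}$ is degenerate and concentrated at $\hat{\beta}$, i.e.\ $f_{\beta}(b)=\delta_{\hat{\beta}}(b)$, the Dirac delta with unit mass at $\hat{\beta}$. Likewise, (A8d) is the sub-case of (A8c) in which the recovery rate is deterministic, $\gamma_i=\hat{\gamma}$ for every infectious vertex, so $f_{\gamma}(r)=\delta_{\hat{\gamma}}(r)$ and the relation $\tau_i=1/\gamma_i$ in (A8c) reduces to $\tau_i=1/\hat{\gamma}$, matching \eqref{eqn: taugamma A8d}.

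With these identifications in hand, I would substitute $f_{\beta}(b)=\delta_{\hat{\beta}}(b)$ and $f_{\gamma}(r)=\delta_{\hat{\gamma}}(r)$ into the transmissibility formula \eqref{eqn: prob gen T newman}. The double integral collapses by the sifting property of the deltas:
\begin{equation}
T=1-\int_{0}^{\infty}\int_{0}^{\infty}\delta_{\hat{\beta}}(b)\,\delta_{\hat{\gamma}}(r)\,e^{-b/r}\,db\,dr=1-e^{-\hat{\beta}/\hat{\gamma}},
\end{equation}
which is the claimed value. Since $\hat{\beta}\ge 0$ and $\hat{\gamma}>0$ we have $-\hat{\beta}/\hat{\gamma}\le 0$, hence $T\in[0,1)$, which gives the final assertion of the statement.

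An alternative route avoids the deltas altogether: under (A7a) and (A8d) both $\beta_{ij}=\hat{\beta}$ and $\tau_i=1/\hat{\gamma}$ are non-random, so equation \eqref{eqn: Tij newman} (the form Theorem~\ref{thm: not trans rate} takes under assumption (A8c)) already gives $T_{ij}=1-e^{-\beta_{ij}/\gamma_i}=1-e^{-\hat{\beta}/\hat{\gamma}}$ for every infected-susceptible pair. The expectation of this constant over $i,j$ is the same constant, so $T=\mathbb{E}_{ij}[T_{ij}]=T_{ij}=1-e^{-\hat{\beta}/\hat{\gamma}}$, which simultaneously verifies the full chain of equalities displayed in the corollary.

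I do not expect a genuine obstacle here. The only points needing a little care are the bookkeeping that casts (A7a) and (A8d) as degenerate instances of (A7b) and (A8c), and---if one prefers the integral derivation---the routine justification for integrating against a Dirac delta (equivalently, reading $\delta$ as a point mass in the expectation, or as a limit of increasingly concentrated densities). Everything else is a one-line computation.
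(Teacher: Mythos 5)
Your proposal is correct and follows essentially the same route the paper intends: the corollary is obtained by specializing Theorem~\ref{thm: genT newman} to the degenerate densities $f_{\beta}=\delta_{\hat{\beta}}$ and $f_{\gamma}=\delta_{\hat{\gamma}}$, exactly mirroring how Corollary~\ref{cor: const T} is derived from Theorem~\ref{thm: genT}. Your alternative one-line argument via \eqref{eqn: Tij newman} (noting $T_{ij}$ is a deterministic constant, so its expectation is itself) is also valid and arguably cleaner.
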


\begin{figure}[t]
    \begin{center}
        \includegraphics[width=1\textwidth]{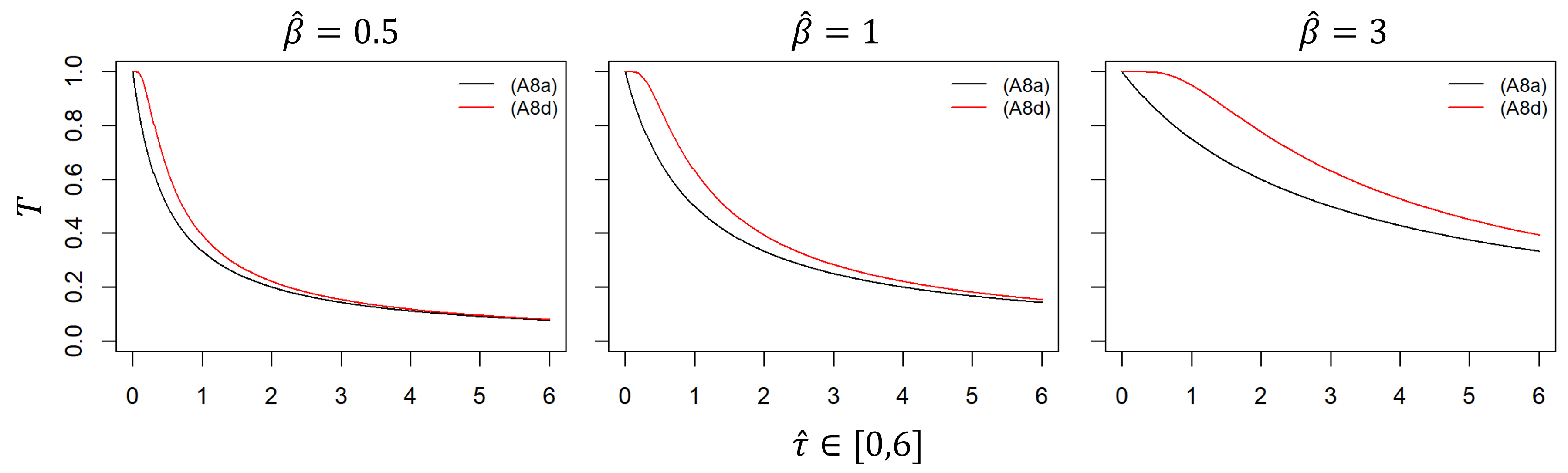}
    \end{center}
    \caption{Transmissibility $T$ for $\hat{\beta}=0.5$, $1$, $3$ and $\hat{\tau}\in (0,6)$ under different assumption.
    Figures were created using our R Package}
\label{fig: T compare}
\end{figure}

The transmissibility $T$ in Corollary~\ref{cor: const T} and Corollary~\ref{cor: const T Newman} have mostly the same assumptions and parameters other than the difference in recovery time illustrated in Figure~\ref{fig: recovery time}.
We can compare these expressions under different value of parameters in Figure~\ref{fig: T compare}.
Here the black curve is the case with (A8a) and Corollary~\ref{cor: const T}, while the red curve is the case with (A8d) and Corollary~\ref{cor: const T Newman}.
We see from the graphs and the series expansion of \eqref{eqn: const T Newman} that (A8d) leads to higher transmissibility than (A8a), even with same parameter $\hat{\beta}$ and $\hat{\gamma}$.

As we discussed earlier, moving forward we always use assumptions (A7a)--(A8a), in conjunction with (A1)--(A10), to determine the transmissibility $T$ unless otherwise specified.
Under this scenario, $T$ is given by Corollary~\ref{cor: const T}.
However, it is also worth to mention that, all general result of the typical bond percolation model presented in next Section~\ref{sec: perco} is presented in terms of $T$.
Since the more detailed assumptions that we discussed in this section for (A7) and (A8) would only affect $T$, such general results are independent with these detailed assumptions and just require (A1)--(A10).

\subsection{Percolation Process and Result}
\label{sec: perco}

Having now defined the transmissibility $T$, we can continue applying bond percolation to find the occupation status of edges and vertices.
Important symbols from previous sections that are relevant to the next sections are summarized in Table~\ref{table: parameters} along with their interpretations for modeling disease networks and mathematical definitions.

\begin{center}
    \begin{table}[htbp]
    \caption{Important notation referring to the network}
    \label{table: parameters}
        \begin{tabular}{|C{0.5in}|L{1.8in}|L{1.8in}|L{1.2in}|}
            \hline
            Symbol & Interpretation & Mathematical Definition & Range of values
            \\
            \hline
            $K$ & Number of connections of a randomly selected individual & Random variable denoting the degree of a randomly chosen vertex & Each evaluation has a value in $\mathbb{Z}_{\geq 0}$
            \\
            \hline
            $p_k$ & Proportion of individuals with degree $k$ & Probability mass function of $K$ &  $p_k\in[0,1]$, $\sum_{k=0}^\infty p_k=1$
            \\
            \hline
            $G_p(x)$ &  \multicolumn{2}{c|}{Probability generating function of $K$} & Refer to \eqref{eqn: pgfgp}
            \\
            \hline
            $G_q(x)$ &  \multicolumn{2}{c|}{Probability generating function of excess degree} & Refer to \eqref{eqn: pgfGq}
            \\
            \hline
            $S(t)$ & Proportion of the population that is susceptible at time $t$. & Probability that a randomly selected vertex is susceptible. & Refer to \eqref{eqn: simplex}
            \\
            \hline
            $I(t)$ & Proportion of the population that is infectious at time $t$. & Probability that a randomly selected vertex is infectious. &  Refer to \eqref{eqn: simplex}
            \\
            \hline
            $R(t)$ & Proportion of the population that is recovered at time $t$. & Probability that a randomly selected vertex is recovered. &  Refer to \eqref{eqn: simplex}
            \\
            \hline
            $\hat \beta$ & \multicolumn{2}{c|} {Uniform per-infected transmission rate}  & $[0,\infty)$
            \\
            \hline
            $\hat \gamma$ & \multicolumn{2}{c|}{Uniform per-infected recovery rate}  & $(0,\infty)$
            \\
            \hline
            $T$ & {Transmissibility refer to Definition~\ref{def: Transmissibility}} & Refer to Corollary~\ref{cor: const T} & $[0,1)$
            \\
            \hline
        \end{tabular}
    \end{table}
\end{center}

As in Definition~\ref{def: occupation}, for a randomly chosen edge $T$ is the probability it is occupied, and all vertices that connect to occupied edges are occupied as well.
Since within existing components, not every edge and vertex are occupied, a small initial number of infected vertices will not necessarily infect the whole component.
However, if the spreading initiates from one occupied vertex, then all the occupied vertices and the occupied edges connected to the initial infected occupied vertex will form a connected subgraph.
The structure of this subgraph will be determined by the underlying component that the initial vertex belongs to.
If we only focus on occupied edges and discard those that are not, we derive a new network based on the previous one, where the maximum occupied connected subgraph is now a component for the new graph.
\begin{definition}[Occupied Component]
    In a network with occupation defined in Definition~\ref{def: occupation}, 
    \begin{itemize}
        \item An \textbf{occupied component} is the maximal subset of occupied vertices that are connected by occupied edges through the network.
        \item The \textbf{occupied component size} is the number of occupied vertices belongs to such components.
    \end{itemize}  
    \label{def: Occupied Component}
\end{definition}

\begin{figure}[htbp]
    \begin{center}
        \includegraphics[width=1\textwidth]{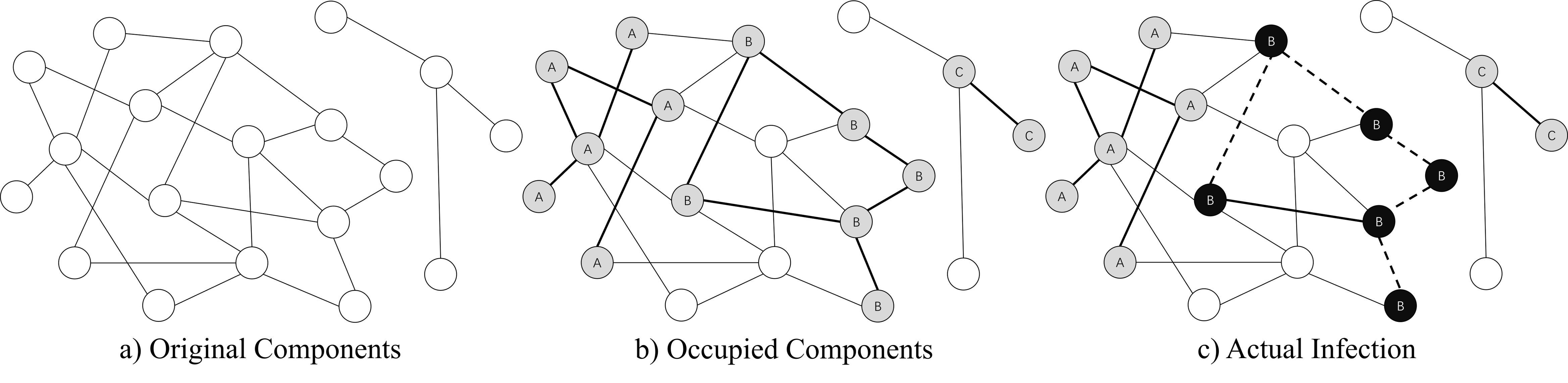}
    \end{center}
    \caption{Relationship between original components and occupied components. a) Two components from original network before assigning occupation. b) Three occupied components after assignment. Gray vertices and thickened solid lines are occupied vertices and edges accordingly. Each vertex in the occupied components is labeled by letters A, B, C. c) The actual infection initialized from one of the occupied vertices in occupied component B. Black vertices and dashed thickened lines are respectively infected vertices and edges that transmitted the infection.
    Figures were created using MS Office}
\label{fig: occupied components}
\end{figure}

We show a simple example on a small specified network in Figure~\ref{fig: occupied components} to illustrate the relationship between occupied component and the underlying original component.
Here are several important observations for the occupied components:
\begin{enumerate}
    \item [(a)] Since the occupation of edges is based on original edges, no new edges are created and the occupied component is a subgraph on original component.
    More precisely, size of occupied component is always no larger than the components it belongs to, so only occupied components that belong to a giant original component can be a giant occupied component.
    
    \item [(b)] The infection will never spread across different occupied components. 
    The infection cannot be transmitted between different occupied components even if they are connected by an unoccupied edge like A and B, since by definition, only occupied edges can actually transmit the disease.
    
    \item [(c)] Not every occupied vertex is actually infected eventually and not every occupied edge will transmit the infection.
    If the initial infectious vertex does not belong to the occupied component, like A and C, then none of its occupied vertices will be infected and none of its occupied edges will transmit the disease.
    
    \item [(d)] If any of the occupied vertices in an occupied component is the initial infectious vertex, like in B, then every occupied vertex in this occupied component will be eventually infected.
    However, if there are loops then there is always an occupied edge in each loop that does not actually transmit the disease. 
\end{enumerate}

This example is just for understanding. In the percolation process for large random networks, there is no structure specified thus all the occupation status and infection status is randomly assigned with some probability.
In the next results we denote vertices that have been infected to be infected vertices. We note that this includes vertices representing individuals that are (currently) infectious and those that have recovered.

\begin{theorem}
\label{thm: infection occupation}
    Under assumption (A1)-(A10), all infectious vertices belong to the same single occupied component that contains the initial single infected vertex.
    Moreover, all occupied vertices in such occupied component are infected vertices. 
\end{theorem}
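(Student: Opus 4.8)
\emph{Plan.} The plan is to make precise the coupling between the pre-assigned percolation status and the epidemic dynamics that underlies observations (a)--(d): by Definition~\ref{def: occupation}, an edge is occupied exactly when it ``can transmit the infection when one of its ends is infected'', and a vertex is occupied exactly when it is incident to an occupied edge. I would treat the two assertions in turn, working in the non-trivial case where the seed vertex $v_0$ from (A10) is itself occupied; if $v_0$ is not occupied then none of its edges transmits by Definition~\ref{def: occupation}, so the epidemic consists of the single vertex $v_0$ and the statement is vacuous, a case I would dispose of in one line.

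For the first assertion, take any infected vertex $w$. If $w = v_0$ it lies in its own occupied component $C$; otherwise, by (A7)--(A9) the event that $w$ became infected was a transmission along an edge from a previously-infectious neighbour, so tracing transmissions back to the seed yields a finite chain $v_0 = u_0, u_1, \dotsc, u_n = w$ in which each $u_i$ transmitted the infection to $u_{i+1}$ across the edge $e_i = \{u_i, u_{i+1}\}$. Since $e_i$ actually transmitted, it is occupied by Definition~\ref{def: occupation}, and then both of its ends $u_i, u_{i+1}$ are occupied vertices. Hence $u_0, \dotsc, u_n$ is a path in the subgraph of occupied vertices and occupied edges, so $w$ lies in the occupied component $C$ of $v_0$. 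As infectious vertices are in particular infected, this proves that all infectious vertices belong to the single occupied component containing $v_0$.

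For the second assertion I would induct along occupied paths inside $C$. Fix an occupied vertex $w \in C$; since $v_0 \in C$, pick a path $v_0 = u_0, u_1, \dotsc, u_n = w$ all of whose vertices and edges $e_i = \{u_i, u_{i+1}\}$ are occupied. I claim each $u_i$ is infected: the base case $u_0 = v_0$ is (A10); for the inductive step, if $u_i$ is infected then by (A7)--(A8) it is infectious for the positive time $\tau_{u_i}$, and either $u_{i+1}$ was already infected, or $u_{i+1}$ is still susceptible while $u_i$ is infectious, in which case the occupied edge $e_i$ transmits the infection and $u_{i+1}$ becomes infected. Either way $u_{i+1}$ is infected, completing the induction, so $w = u_n$ is infected. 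Combined with the first part, the infected vertices are exactly the occupied vertices of $C$.

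The step I expect to need the most care is the coupling logic itself: justifying that ``actually transmitted along $e$'' $\Rightarrow$ ``$e$ occupied'' and that ``$e$ occupied'' $\Rightarrow$ ``$e$ transmits once one of its ends is infectious, unless the other end is already infected''. This is where the i.i.d. construction behind the transmissibility $T$ from Section~\ref{sec: Trans} is invoked, and one must be careful that what matters is only the \emph{existence} of a firing step along each occupied path, not the order in which vertices are infected --- which is exactly the distinction between the (loop-free) true transmission tree and the (possibly cyclic) occupied component noted in observation (d).
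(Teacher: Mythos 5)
Your proposal is correct and follows essentially the same route as the paper, whose proof is a one-line appeal to observations (c)--(d) and (A10); your transmission-chain argument for the first assertion and your induction along occupied paths for the second are precisely the rigorous content behind those informal observations. If anything, your version is more careful than the paper's, since you also dispose of the degenerate case where the seed vertex is not itself occupied and you explicitly flag the coupling between ``actually transmitted'' and ``occupied'' that the paper leaves implicit.
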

\begin{proof}
This follows from observations (c)--(d) and assumption (A10) that exactly only one vertex is initially infectious.
\end{proof}

We call such occupied component the infected component, and we have further corollary as directly result from Theorem~\ref{thm: infection occupation}:
\begin{corollary}
\label{cor: infection occupation}
    Under assumption (A1)-(A10), we have the following statements are true:
    \begin{enumerate}
        \item The total number of infected vertices after the outbreak is the same as the occupied component size of the infected component.
        \item The probability that an occupied component is the infected component is the same probability that a randomly chosen vertex belongs to it. 
    \end{enumerate}
\end{corollary}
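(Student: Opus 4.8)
The plan is to obtain both parts as essentially immediate consequences of Theorem~\ref{thm: infection occupation} and the uniform-initialization assumption (A10). Throughout, fix one realization of the network together with its percolation pattern, and write $C^{*}$ for the \emph{infected component}, i.e.\ the occupied component containing the unique initially infectious vertex $v_{0}$ supplied by (A10).

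For Part 1, I would first note that Theorem~\ref{thm: infection occupation} asserts two inclusions simultaneously: every vertex that is ever infected lies in $C^{*}$, and conversely every occupied vertex of $C^{*}$ is eventually infected. (Here assumptions (A8)--(A9) are what make ``the final infected set'' well posed: infectious vertices recover in finite time but, by (A9), never leave the set of ever-infected vertices, so this set stabilizes.) Consequently the set of infected vertices \emph{coincides} with the set of occupied vertices of $C^{*}$; taking cardinalities and reading off the definition of occupied component size in Definition~\ref{def: Occupied Component} gives Part 1.

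For Part 2, the key step is the equivalence, for a fixed occupied component $C$, that $C$ is the infected component if and only if $v_{0}\in C$: the forward direction is by definition of $C^{*}$, and the reverse holds because distinct occupied components are disjoint (they are \emph{maximal} connected sets of occupied vertices by Definition~\ref{def: Occupied Component}), so $v_{0}$ lies in at most one of them. Since (A10) makes $v_{0}$ uniform over the $N$ vertices,
\[
\mathbb{P}\bigl(C \text{ is the infected component}\bigr)=\mathbb{P}(v_{0}\in C)=\frac{|V_{C}|}{N},
\]
which is exactly the probability that a uniformly chosen vertex belongs to $C$, as claimed.

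There is no real analytic obstacle here; the one place I would be careful is the degenerate case where $v_{0}$ is incident to no occupied edge, so that by Definition~\ref{def: occupation} it is not occupied and hence belongs to no occupied component in the strict sense. I would dispose of this by the natural convention (consistent with Theorem~\ref{thm: infection occupation}) that such a $v_{0}$ constitutes a trivial infected component $\{v_{0}\}$ of size one with no occupied edges. Part 1 is then unaffected (one infected vertex), and in Part 2 the occupied components together with these singletons partition $V$, so the events ``$C$ is the infected component'' are mutually exclusive with total probability one, in agreement with the fact that the corresponding vertex-proportions sum to $1$.
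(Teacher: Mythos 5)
Your proof is correct and follows essentially the same route as the paper, which derives both parts directly from Theorem~\ref{thm: infection occupation} together with the uniform single-seed assumption (A10); your explicit treatment of the degenerate case where the initial vertex is unoccupied is a welcome extra precaution but does not change the argument.
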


With Theorem~\ref{thm: infection occupation} and Corollary \ref{cor: infection occupation}, the percolation process considers occupied components and its size in a probability level to illustrate the condition for epidemic outbreaks to occur.
From the point of view of the new network constructed only by occupied edges, almost all of the results about components from Section~\ref{sec: pgfs} can be applied to occupied components with some modification. 
Since $T$ is uniform for every edge and all occupied component is a subset of components in the original network, this modification can be done as long as we modify all functions as functions also of $T$. 

Suppose that the disease originates from a single randomly chosen vertex.
We want to derive the PGF for the number of occupied edges attached to this vertex, as a function of $T$.
The probability of such a vertex having exactly $m$ occupied edges emerging from it, given its degree $k$, is $\binom{k}{m} T^m (1-T)^{k-m}$ (using the binomial distribution).
Using PGF $G_p$ in \eqref{eqn: pgfgp}, the required PGF is given by,
\begin{align}
\mathcal{G}_p(x;T) & = \sum_{m=0}^{\infty}  \sum_{k=m}^{\infty}p_k \binom{k}{m}  T^m (1-T)^{k-m}x^m  \label{eqn: G0T eqn1} 
    \\
& = \sum_{k=0}^{\infty} p_k \sum_{m=0}^{k} \binom{k}{m}  (xT)^m (1-T)^{k-m}  \label{eqn: G0T eqn1b} 
    \\
& = \sum_{k=0}^{\infty} p_k (1-T+x T)^k \label{eqn: G0T eqn2}
    \\
& = G_p(1+(x-1)T)
        \label{eqn: G0T eqn3}
\end{align}
Observe that if $T=1$, this is the same as the PGFs in \eqref{eqn: pgfgp}. 
This specification works for all the results below, which indicates they are actually generalization of the original network results.
 
Using theorem~\ref{thm: pgf} and \eqref{eqn: pgfGq}, the PGF of the number of occupied edges of a vertex connected to the randomly chosen initial vertex, excluding the confirmed edge, is given by,
\begin{equation}
   \mathcal{G}_q(x;T)=G_q(1+(x-1)T)
    \label{eqn: G1T}
\end{equation}

Let us now consider the outbreak size resulting from infecting a randomly chosen node, which can be interpreted to be the size of occupied component.
If we consider a subgraph only formed by all occupied components, similar threshold for giant and small component in Theorem~\ref{thm: Giant Component Threshold} would also exist for occupied component.
Moreover, Theorem~\ref{thm: component} also applies to the giant occupied component for its uniqueness and existence condition.

With network with large enough size, an outbreak is defined to be an epidemic outbreak if it leads to a significant portion of the population being infected at the end of the outbreak.
This corresponds to the infected component being a giant occupied component. With previous results, we have:
\begin{corollary}
\label{cor: epidemic outbreak}
Under assumption (A1)-(A10), an epidemic outbreak appears if and only if both of the following conditions are satisfied:
    \begin{enumerate}
        \item There exists a unique giant occupied component in the network.

        \item The initial infection happens at one of the occupied vertices in the giant occupied component.
    \end{enumerate}
\end{corollary}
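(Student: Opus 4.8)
The plan is to obtain the statement as an almost immediate consequence of Theorem~\ref{thm: infection occupation}, Corollary~\ref{cor: infection occupation}, and the remark preceding the corollary that Theorem~\ref{thm: component} transfers to the subnetwork formed by the occupied edges. Recall that, by definition, an outbreak is an \emph{epidemic} outbreak precisely when the infected component is a giant occupied component, and that by Theorem~\ref{thm: infection occupation} the infected component is exactly the occupied component containing the single initially infectious vertex guaranteed by (A10). So the whole argument reduces to unpacking when the occupied component of this uniformly chosen vertex is a giant occupied component, and I would do so by proving the two implications separately.

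For the ``if'' direction I would assume (1) a unique giant occupied component $C^{*}$ exists and (2) the initial vertex lies among the occupied vertices of $C^{*}$. Then the occupied component containing the initial vertex is $C^{*}$ itself, so by Theorem~\ref{thm: infection occupation} every occupied vertex of $C^{*}$ is eventually infected, and by Corollary~\ref{cor: infection occupation}(1) the final number of infected vertices equals the occupied component size of $C^{*}$; since $C^{*}$ is giant --- in fact of linear size $\Theta N$ by Theorem~\ref{thm: Giant Component Size} applied to the occupied subnetwork --- a significant portion of the population is infected and the outbreak is an epidemic outbreak. For the ``only if'' direction I would assume an epidemic outbreak occurs. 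By definition the infected component is then a giant occupied component, so a giant occupied component exists; applying Theorem~\ref{thm: component}(a) (equivalently Corollary~\ref{cor: component structure}) to the occupied subnetwork gives w.h.p.\ that this giant occupied component is unique, which establishes (1). By Theorem~\ref{thm: infection occupation} the infected component contains the initial infectious vertex, so the initial infection occurred at an occupied vertex of the (now unique) giant occupied component, which establishes (2).

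The step I expect to require the most care is justifying that Theorem~\ref{thm: component}, and the underlying Theorem~\ref{thm: Giant Component Size} and Theorem~\ref{thm: Giant Component Threshold}, really do apply to the occupied subnetwork rather than to $\mathbb{G}_{N,\textbf{d}}$ itself. For this I would note that the occupied subnetwork is again distributed as a configuration-model random graph, now with degree PGF $\mathcal{G}_p(x;T)=G_p(1+(x-1)T)$ from \eqref{eqn: G0T eqn3}, and that its maximum degree never exceeds that of $\mathbb{G}_{N,\textbf{d}}$, hence is $O(1)$, so the hypotheses of those theorems are inherited. I would also flag the degenerate boundary case in which the occupied-subnetwork analogue of $\Lambda$ vanishes, where no conclusion is available; this is exactly why the statement is understood w.h.p.\ and away from criticality, in parallel with Corollary~\ref{cor: component structure}. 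Away from that edge case, both implications follow directly from the cited results.
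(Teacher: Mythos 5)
Your proof is correct and follows essentially the same route as the paper, which justifies this corollary informally via Theorem~\ref{thm: infection occupation}, Corollary~\ref{cor: infection occupation}, and the transfer of Theorem~\ref{thm: component} to the occupied subnetwork. The only difference is that you make explicit (and correctly justify) the step the paper leaves implicit, namely that the occupied subnetwork inherits the hypotheses of the component theorems via the modified PGF $\mathcal{G}_p(x;T)$ and the bounded maximum degree.
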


This tells us that the existence of the unique giant occupied component is a necessary condition, but not sufficient.
If (a) is satisfied, then by Corollary~\ref{cor: infection occupation}, the outbreak is epidemic if the giant occupied component is the infected component, with probability equals to the portion of vertices in giant occupied component size among all vertices in the network. 
Moreover, if the epidemic outbreak happened, its final infection size is same as the size of the giant occupied component. 
If there is no giant occupied component, then by Theorem~\ref{thm: component}, all occupied component is small component, thus there is no epidemic regardless of where the infection is initialized.
This is very similar with what we did in Section~\ref{sec: pgfs} for original components, but now we need to modify the results for occupied components.

We again use similar definition of $H_p(x)$ and $H_q(x)$, but focusing on occupied stubs. 
Given a transmissibility of $T$, since every occupied edge is a pair of 2 stubs which can also be labelled as occupied, $T$ is also the prior probability that a randomly chosen stub is occupied. Let $\mathcal{S}_T$ be the portion of vertices belongs to a giant occupied component. 
For applications with limited network size, $\mathcal{S}_T$ is also the ratio of giant occupied component size over the network size. 
Similar to $\mathcal{S}$ in Section~\ref{sec: randomgraphs} as the portion corresponding to giant component, we now consider the case that $\mathcal{S}_T \in [0,1)$, so the giant occupied does not extend to all vertices and there are small occupied components and unoccupied vertices, edges and components exist.
Let $P_s(T)$ be the probability density function of outbreak size $s_T$ when corresponding occupied component is a small component.
Again, we use the idea of conditional probability like \eqref{eqn: hatPs}, let $\hat{P}_s(t)$ as the probability density function of $s_T$ conditioned on it belongs to small component, such that
\begin{equation}
\label{eqn: PsT}
    \hat{P}_s(T)=\frac{1}{1-\mathcal{S}_T}P_s(T)
\end{equation}
Following the definition of $H_p(x)$ in \eqref{eqn: Hp}, we denote the PGF of the conditional small occupied component size by $\mathcal{H}_p(x;T)$, such that
\begin{equation}
    \mathcal{H}_p(x;T)=\sum_{s=0}^\infty \hat{P}_s(T)x^s.
\end{equation}
Analogous to $H_q(x)$ in Section~\ref{sec: pgfs}, we also define $\mathcal{H}_q(x;T)$ to be the PGF for the conditional occupied component size we can reach by following a randomly chosen stubs, given the stubs does not belongs to the giant occupied component.
Analogous to $u$, we define $u_T$ as the the probability that a randomly chosen stub does not attached to a vertex belong to the giant occupied component.
Another interpretation of $u$ is that it is the probability that the vertex attached to the random chosen stub remains uninfected, either in the case that an epidemic outbreak occurred so that the giant occupied component is initially infected or if there is no giant occupied component at all.

We can obtain parallel results to  Theorem~\ref{thm: PGFHq} and Theorem~\ref{thm: PGFHp}, together with the modified PGFs in \eqref{eqn: G0T eqn3} and \eqref{eqn: G1T}, $\mathcal{H}_q(x;T)$ and $\mathcal{H}_p(x;T)$ to derive the following equations:
\begin{align}
\mathcal{H}_p(x;T)&=\frac{x}{1-\mathcal{S}_T} \mathcal{G}_p(u_T \mathcal{H}_q(x;T);T)\label{eqn: H0T}
\\
\mathcal{H}_q(x;T)&=\frac{x}{u_T} \mathcal{G}_q(u_T \mathcal{H}_q(x;T);T)\label{eqn: H1T}
\end{align}
The PGFs $\mathcal{H}_q$ can be found by solving ~\eqref{eqn: H1T}, and $\mathcal{H}_p$ can be found by using the solution of $\mathcal{H}_q$ into \eqref{eqn: H0T}. 
Since $\mathcal{H}_q(x;T)$ and $\mathcal{H}_p(x;T)$ are both PGFs, we can derive the analogous result to Corollary~\ref{cor: ueqn}, and show that $u_T$ should be a solution of,
\begin{equation}
    u_T=\mathcal{G}_q(u_T;T)
    \label{eqn: uT}
\end{equation}
Using Corollary~\ref{cor: Seqn} but changing from $\mathcal{S}$ to  $\mathcal{S}_T$, the probability that a randomly chosen vertex belongs to the giant occupied component, is now given by
\begin{equation}
    \mathcal{S}_T=1-\mathcal{G}_p(u_T;T)
    \label{eqn: UT}
\end{equation}

Once again, $u_T=1$ and $\mathcal{S}_T=0$ together form a trivial solution of \eqref{eqn: uT} and \eqref{eqn: UT}, and of \eqref{eqn: H0T} and \eqref{eqn: H1T}. In this case there is no giant occupied component appears in the network.
If there exists a solution of \eqref{eqn: uT} such that $u_T\in [0,1)$, then there could be a giant occupied component, which means an epidemic does occur.
We can derive an explicit threshold condition based on $T$ that determines whether an epidemic occurs or not.

\begin{theorem} [Adapted from \cite{Newman:2002}]
    If there is no giant component in a large-scale network, the mean component size $\langle s \rangle$ of the network is given by 
    \begin{equation}
        \langle s_T \rangle=1+\frac{\mathcal{G}'_p(1;T)}{1-\mathcal{G}'_q(1;T)}=1+\frac{T G'_p(1)}{1-T G'_q(1)}
    \end{equation}
    \label{thm: sT eqn}
\end{theorem}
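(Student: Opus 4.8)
The plan is to read off $\langle s_T\rangle$ as $\mathcal{H}'_p(1;T)$ and then differentiate the functional equations \eqref{eqn: H0T}--\eqref{eqn: H1T}. First I would observe that the hypothesis ``no giant component'' forces $\mathcal{S}_T=0$: by the occupied-component analogue of Corollary~\ref{cor: component structure}, if there is no giant occupied component then every vertex lies in a small component, so $\mathcal{S}_T$, the fraction of vertices in the giant occupied component, is zero. Consequently the conditioning in \eqref{eqn: PsT} is vacuous, $\hat P_s(T)=P_s(T)$, and $\mathcal{H}_p(\cdot;T)$ is a genuine PGF for the size of the occupied component containing a uniformly chosen vertex; in particular $\langle s_T\rangle=\mathcal{H}'_p(1;T)$. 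Likewise, with $\mathcal{S}_T=0$ the only consistent solution of \eqref{eqn: uT}--\eqref{eqn: UT} is $u_T=1$ --- this is the trivial solution already noted after \eqref{eqn: UT}, and Theorem~\ref{thm: usol} applied to $\mathcal{G}_q(\cdot;T)$ rules out any $u_T\in[0,1)$ in the no-giant-component regime. Substituting $u_T=1$ and $\mathcal{S}_T=0$ turns \eqref{eqn: H0T}--\eqref{eqn: H1T} into $\mathcal{H}_p(x;T)=x\,\mathcal{G}_p(\mathcal{H}_q(x;T);T)$ and $\mathcal{H}_q(x;T)=x\,\mathcal{G}_q(\mathcal{H}_q(x;T);T)$.

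Next I would differentiate each equation in $x$ and evaluate at $x=1$, using $\mathcal{H}_q(1;T)=\mathcal{H}_p(1;T)=1$ and $\mathcal{G}_p(1;T)=\mathcal{G}_q(1;T)=1$ (both are PGFs evaluated at $1$). From the second equation, $\mathcal{H}'_q(1;T)=\mathcal{G}_q(1;T)+\mathcal{G}'_q(1;T)\mathcal{H}'_q(1;T)$, hence $\mathcal{H}'_q(1;T)=1/\bigl(1-\mathcal{G}'_q(1;T)\bigr)$. From the first, $\mathcal{H}'_p(1;T)=\mathcal{G}_p(1;T)+\mathcal{G}'_p(1;T)\mathcal{H}'_q(1;T)=1+\mathcal{G}'_p(1;T)/\bigl(1-\mathcal{G}'_q(1;T)\bigr)$, which is the first claimed expression. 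The second expression follows from \eqref{eqn: G0T eqn3} and \eqref{eqn: G1T}: since $\mathcal{G}_p(x;T)=G_p(1+(x-1)T)$ and $\mathcal{G}_q(x;T)=G_q(1+(x-1)T)$, the chain rule gives $\mathcal{G}'_p(1;T)=T\,G'_p(1)$ and $\mathcal{G}'_q(1;T)=T\,G'_q(1)$, and substituting finishes the computation.

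The main obstacle is not the algebra but the analytic justification of differentiation at the boundary point $x=1$: the series defining $\mathcal{H}_p$ and $\mathcal{H}_q$ need only converge on $[0,1)$ a priori, so I would invoke an Abel / monotone-convergence argument --- the coefficients are nonnegative, so $\mathcal{H}'_q(1^-;T)=\sum_s s\,\hat P_s(T)$ exists in $[0,\infty]$, and the functional equation forces it to equal the finite value $1/\bigl(1-\mathcal{G}'_q(1;T)\bigr)$ precisely when $\mathcal{G}'_q(1;T)=T\,G'_q(1)<1$, i.e. below the epidemic threshold, which is exactly the no-giant-component regime. If instead $T\,G'_q(1)\ge 1$ a giant occupied component is present, contradicting the hypothesis, so this case does not arise. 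I would also note that the standing assumption $\Delta=O(1)$ guarantees $G'_p(1)=\langle K\rangle<\infty$, so all derivatives above are finite. A secondary point to state carefully is the precise meaning of ``mean component size'': here it is the expected size of the occupied component reached through a randomly chosen vertex, namely $\mathcal{H}'_p(1;T)$, rather than a size-biased mean.
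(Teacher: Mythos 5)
Your proposal is correct and follows essentially the same route as the paper's proof: identify $\langle s_T\rangle$ with $\mathcal{H}'_p(1;T)$ in the regime $u_T=1$, $\mathcal{S}_T=0$, differentiate the functional equations \eqref{eqn: H0T}--\eqref{eqn: H1T} at $x=1$, and solve the resulting linear relation for $\mathcal{H}'_q(1;T)$. Your added remarks --- the Abel/monotone-convergence justification for differentiating the PGFs at the boundary point $x=1$ and the explicit chain-rule computation giving $\mathcal{G}'_p(1;T)=T\,G'_p(1)$ and $\mathcal{G}'_q(1;T)=T\,G'_q(1)$ --- are refinements the paper leaves implicit, not a different argument.
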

\begin{proof}
Suppose that $T$ is such that $u_T={H}_q(1;T)=1$ is the only solution, then there are only small outbreaks with finite size, and no epidemic.
In this case, numerical differentiation of the PGF $\mathcal{H}_p(x;T)$ can be used to determine the specific probability $P_s(T)$ that the outbreak size is $s$. 
Similar contour integration method in \eqref{eqn: Contour Ps} is also suggested as a better option for computing derivatives rather than direct numerical differentiation.
We can always find the mean size $\langle s_T \rangle$ of the small component in the closed form if no giant component exists in the network.
Since $G'_p(1)=\sum_{k=1}^{\infty}k p_k=\langle K \rangle$ and by differential of \eqref{eqn: H0T}, under the situation that $u_T=1, \mathcal{S}_T=0$ we always have:
\begin{equation}
   \langle s_T \rangle=\mathcal{H}'_p(1;T)=\mathcal{G}_p(\mathcal{H}_q(1;T);T)+\mathcal{G}'_p(1;T)\mathcal{H}'_q(x;T)
   \label{eqn: sT1}
\end{equation}
Here we also use the PGF based on normalized degree distribution $p_k$, $\mathcal{G}_p(1;T)=G_p(1)=\mathcal{G}_q(1;T)=G_q(1)=1$, thus we have
\begin{equation}
   \langle s_T \rangle=1+\mathcal{G}'_p(1;T)\mathcal{H}'_q(1;T)
   \label{eqn: sT2}
\end{equation}
Differentiating \eqref{eqn: H1T} and substituting again that $u_T=1, \mathcal{S}_T=0$, we have
\begin{equation}
   \mathcal{H}'_q(1;T)=1+\mathcal{G}'_q(1;T)\mathcal{H}'_q(x;T)  \qquad \Leftrightarrow 
   \qquad
   \mathcal{H}'_q(1;T)=\frac{1}{1-\mathcal{G}'_q(1;T)}
   \label{eqn: H1'T}
\end{equation}
Take \eqref{eqn: H1'T} into \eqref{eqn: sT2} yields the required result.
\end{proof}

Note that $\langle s_T \rangle$ in theorem~\ref{thm: sT eqn} diverges when $T G'_P(1)=1$, this actually provide a transition point about the transmissibility $T_c$.
\begin{theorem} [Adapted from \cite{Newman:2002}]
    The transition threshold $T_c$ of transmissibility $T$ where the finite size occupied component transit into a unique giant occupied component in large-scale network, is given by 
    \begin{equation}
        T_c=\frac{1}{G'_q(1)}=\frac{G'_p(1)}{G''_p(1)}=\frac{\langle K \rangle}{G''_p(1)}
    \end{equation}
    \label{thm: Tc}
\end{theorem}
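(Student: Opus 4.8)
The plan is to combine the divergence of the mean small occupied-component size from Theorem~\ref{thm: sT eqn} with a fixed-point analysis of \eqref{eqn: uT} modeled on the proof of Theorem~\ref{thm: usol}. The heuristic is immediate: Theorem~\ref{thm: sT eqn} gives $\langle s_T\rangle = 1 + \frac{T G'_p(1)}{1-T G'_q(1)}$ whenever no giant occupied component is present, and since $\langle s_T\rangle$ is finite exactly when every occupied component has finite size, the quantity $\langle s_T\rangle$ blows up precisely as $T G'_q(1)\uparrow 1$. Hence the candidate threshold is the $T$ solving $T G'_q(1)=1$, and the remaining work is to confirm that this is genuinely where the giant occupied component is born, and then to rewrite it in the three stated closed forms.

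First I would make the "born here" claim precise by studying \eqref{eqn: uT}, i.e. $u_T=\mathcal{G}_q(u_T;T)$. Using \eqref{eqn: G1T}, $\mathcal{G}_q(x;T)=G_q(1+(x-1)T)$, the map $x\mapsto\mathcal{G}_q(x;T)$ is non-decreasing and convex on $[0,1]$ (as a composition of the non-decreasing convex $G_q$ with an increasing affine map), with $\mathcal{G}_q(1;T)=G_q(1)=1$ and $\mathcal{G}_q(0;T)=G_q(1-T)\ge 0$. So $u_T=1$ is always a solution, and exactly as in the proof of Theorem~\ref{thm: usol}(3), convexity forces \eqref{eqn: uT} to have at most one further solution in $[0,1)$; such a solution exists if and only if the slope of $\mathcal{G}_q(\cdot;T)$ at $x=1$ exceeds $1$. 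By the chain rule this slope is $\mathcal{G}'_q(1;T)=T\,G'_q(1)$. When $\mathcal{G}'_q(1;T)>1$ the extra solution $u_T\in[0,1)$ yields, via \eqref{eqn: UT}, a giant occupied component with $\mathcal{S}_T\in(0,1]$ and hence (by Corollary~\ref{cor: epidemic outbreak}) the possibility of an epidemic; when $\mathcal{G}'_q(1;T)\le 1$ only $u_T=1$, $\mathcal{S}_T=0$ survives. This identifies the transition as the value $T_c$ with $T_c G'_q(1)=1$, consistent with the divergence computed above.

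Finally I would translate this into the stated expressions: from \eqref{eqn: pgfGq}, $G'_q(1)=G''_p(1)/G'_p(1)$, and $G'_p(1)=\sum_{k\ge1}k p_k=\langle K\rangle$, so $T_c = 1/G'_q(1) = G'_p(1)/G''_p(1) = \langle K\rangle/G''_p(1)$.

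The main obstacle I anticipate is pinning down the precise meaning of "transition threshold" so that the two viewpoints (divergence of $\langle s_T\rangle$ and emergence of a nontrivial root of \eqref{eqn: uT}) are seen to coincide, together with the degenerate cases: $G''_p(1)=0$ (equivalently $p_k=0$ for all $k\ge 2$), where $T_c$ is formally $+\infty$ and no epidemic can ever occur; the boundary $T=1$, where $\mathcal{G}_q(\cdot;T)$ reduces to $G_q$ and one must recover the original-graph analysis of Section~\ref{sec: pgfs}; and the constraint $T\in[0,1)$, which caps the attainable slope at $G'_q(1)$, so that $T_c$ lies in the physical range only when $G'_q(1)>1$.
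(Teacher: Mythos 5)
Your proposal is correct and follows essentially the same route as the paper: the paper likewise motivates $T_c$ by the divergence of $\langle s_T\rangle$ from Theorem~\ref{thm: sT eqn} and then establishes the threshold by a convexity argument on $f(u_T)=\mathcal{G}_q(u_T;T)-u_T$, showing a unique root in $(0,1)$ exists precisely when the slope $\mathcal{G}'_q(1;T)=T\,G'_q(1)$ exceeds $1$ (under the nondegeneracy assumption $p_k>0$ for some $k\geq 3$, which corresponds to the degenerate cases you flag). The algebraic identities $G'_q(1)=G''_p(1)/G'_p(1)$ and $G'_p(1)=\langle K\rangle$ complete the statement exactly as you indicate.
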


We will prove that for a network with reasonable complex degree distribution and $T \in (0,1)$, this threshold guarantees the existence and uniqueness of solution of $u_T$ in $(0,1)$ for equation \eqref{eqn: uT}.
For any network with degree distribution satisfied that $\exists k \in \mathbb{Z}^{+} \text{and } k \geq 3$ such that $p_k>0$. Otherwise $p_0+p_1+p_2=1$ give us a graph only with simple structure which we don't always apply percolation method on them. 

Now consider the equation \eqref{eqn: uT}, which equivalent to 
\begin{align}
    f(u_T) & =\mathcal{G}_q(u_T;T)-u_T=G_q(1+(u_T-1)T)-u_T
    \nonumber
    \\
    & =  \frac{1}{\langle K \rangle} G'_p(1+(u_T-1)T)-u_T=0
\end{align}

A solution of \eqref{eqn: uT} is the same as a root of $f(u_T)$.
Let $x=1+(u_T-1)T$.
Since $T \in (0,1)$ and $u_T \in [0,1]$, then $|u_T-1|=1-u_T\in [0,1] \Rightarrow x=1+(u_T-1)T \geq 0$, the equal only happens when $u_T=1$.

Because $G_p(x)= \sum_{k=0}^{\infty}p_k x^k$, then it is a continuous polynomial with non-negative coefficient, so does its finite derivatives.
$p_0 \neq 1 \Rightarrow \langle K \rangle$ is a positive constant.
Then $f(u_T)$ is a polynomial of $u_T$.
Because $1-T>0$ and $G'_p(x)$ has non negative coefficient, $f(0)=\frac{1}{\langle K \rangle} G'_p(1-T) >0$.
Clearly, we also have $f(1)=G_q(1)-1=0$.

Now consider first and second derivative of $f(u_T)$
\begin{align}
    f'(u_T) & =\frac{T}{\langle K \rangle} G''_p(1+(u_T-1)T)-1
    \\
    f''(u_T) & = \frac{T^2}{\langle K \rangle} G'''_p(1+(u_T-1)T)
\end{align}
Thus, $f''(u_T)>0$ for all $u_T \in (0,1)$ since there is at least one positive coefficient in $G'''_p(x)$, which means $f'(u_T)$ increasing strictly in $(0,1)$.

If $T>T_c=\frac{\langle K \rangle}{G''_p(1)}$, then $f'(1)=T \frac{G''_p(1)}{\langle K \rangle}-1>0$.
By $f(1)=0$, $\exists \epsilon \in (0,1)$ s.t. $f(1-\epsilon)<0$.
Since $f$ is continuous and $f(0)>0$, there must be a root of $f(u_T)$ in $(0, 1-\epsilon) \subset (0,1)$ by the first Bolzano-Cauchy theorem.
Uniqueness of the solution is obvious, since $f''(u_T)>0$ indicates that there is at most 1 critical point. 

If $T<T_c$ ,then $f'(1)<0$, $f'(u_T)<0$ for all $u_T \in (0,1)$ since $f''(u_T)>0$. 
Then $f(u,T)$ is strictly decreasing in $[0,1]$, but $f(1)=0$ means there is no solution other than $u_T=1$. 

When $T>T_c$, it is possible that $u_T<1$, which means $\mathcal{S}_T\in (0,1)$ and there is a giant occupied component in the network. 
As discussed in \eqref{eqn: UT}, the proportion of the giant component compared to the whole network is given by:
\begin{theorem} (Adapted from \cite{Newman:2002})
    If there is a giant occupied component in the network, the probability of a randomly chosen vertex belongs to such component with probability:
    \begin{equation}
    \mathcal{S}_T=1-\mathcal{G}_p(u_T;T)
    \end{equation}
\label{thm: ST eqn}
\end{theorem}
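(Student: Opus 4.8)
The plan is to mirror the derivation of Corollary~\ref{cor: Seqn}, replacing the original component generating functions by their occupied-component analogues $\mathcal{H}_p(x;T)$ and $\mathcal{H}_q(x;T)$, and then exploiting the fixed-point relations \eqref{eqn: H0T}--\eqref{eqn: H1T} together with the defining property of a probability generating function, namely that it equals $1$ at $x=1$. So the algebra is short; the work is in setting up the right structural picture.

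First I would make precise the object we are working with: delete from $\mathbb{G}_{N,\textbf{d}}$ every unoccupied edge, obtaining a subgraph $G_T$ whose components are exactly the occupied components of Definition~\ref{def: Occupied Component}. Because each edge (equivalently, each of its two stubs) is occupied independently with probability $T$, bond percolation on the configuration model produces a graph $G_T$ that is again configuration-model-like, now with the thinned degree distribution whose PGF is $\mathcal{G}_p(x;T)=G_p(1+(x-1)T)$ from \eqref{eqn: G0T eqn3} and whose excess-degree PGF is $\mathcal{G}_q(x;T)=G_q(1+(x-1)T)$ from \eqref{eqn: G1T}. Since $\Delta=O(1)$ is preserved under thinning, Theorems~\ref{thm: Giant Component Threshold}, \ref{thm: Giant Component Size} and \ref{thm: component} apply to $G_T$: if a giant occupied component exists it is unique w.h.p., and every small occupied component is a tree w.h.p.

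Next I would repeat, for $G_T$, the conditional-probability construction of Section~\ref{sec: pgfs}. Writing $u_T$ for the probability that a randomly chosen stub does not lead to the giant occupied component and $\mathcal{S}_T$ for the fraction of vertices in the giant occupied component, the same independence-at-the-vertex-level argument used in the proofs of Theorems~\ref{thm: PGFHq} and \ref{thm: PGFHp} --- now applied to occupied stubs and their tree sub-components --- yields \eqref{eqn: H0T} and \eqref{eqn: H1T}. Evaluating \eqref{eqn: H0T} at $x=1$ and using $\mathcal{H}_p(1;T)=\mathcal{H}_q(1;T)=1$ gives
\begin{equation}
1=\frac{1}{1-\mathcal{S}_T}\,\mathcal{G}_p\big(u_T\,\mathcal{H}_q(1;T);T\big)=\frac{\mathcal{G}_p(u_T;T)}{1-\mathcal{S}_T},
\end{equation}
and rearranging gives $\mathcal{S}_T=1-\mathcal{G}_p(u_T;T)$, where $u_T$ is the non-trivial solution of \eqref{eqn: uT} (the case $u_T=1$, $\mathcal{S}_T=0$ being excluded by the hypothesis that a giant occupied component exists). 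Finally, $1-\mathcal{S}_T$ is by construction the probability that a uniformly chosen vertex lies in a small (hence, by Theorem~\ref{thm: component}, tree) occupied component, so $\mathcal{S}_T$ is indeed the probability that such a vertex belongs to the giant occupied component, matching the statement.

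The main obstacle is the structural input rather than the computation: one must justify that $G_T$ genuinely behaves like a configuration-model graph, so that the tree-decomposition underlying \eqref{eqn: H0T}--\eqref{eqn: H1T} is valid, and that the relevant $u_T$ is the correct root of \eqref{eqn: uT} (the smallest solution in $[0,1)$, the analogue of $\psi$ in Theorem~\ref{thm: Giant Component Size}). The second point is already handled by the analysis following Theorem~\ref{thm: Tc}, which shows that for $T>T_c$ there is a unique $u_T\in(0,1)$ solving \eqref{eqn: uT}. The first point is the genuine technical content and is precisely where, as the paper already notes, a fully rigorous mean-field justification over all $\mathbb{G}_{N,\textbf{d}}$ remains open; so one should be content here with the w.h.p. statements inherited from Theorems~\ref{thm: Giant Component Size} and \ref{thm: component} applied to the thinned graph $G_T$.
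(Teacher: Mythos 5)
Your proposal is correct and follows essentially the same route as the paper: the paper obtains this result by forming the occupied-component analogues $\mathcal{H}_p(x;T)$, $\mathcal{H}_q(x;T)$ of Theorems~\ref{thm: PGFHq} and~\ref{thm: PGFHp}, i.e.\ equations \eqref{eqn: H0T}--\eqref{eqn: H1T}, and then evaluating at $x=1$ exactly as in Corollary~\ref{cor: Seqn} to get \eqref{eqn: uT} and \eqref{eqn: UT}, with $u_T\in(0,1)$ the non-trivial root when a giant occupied component exists. Your added remarks on the thinned graph behaving like a configuration model and on which root of \eqref{eqn: uT} to take match the paper's own (informal) justification and its acknowledged open gap.
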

In such case, $u_T \in (0,1)$ is the non-trivial solution of equation \eqref{eqn: uT}.

In Corollary~\ref{cor: epidemic outbreak}, $T>Tc$ just indicates (a) the existence of giant occupied component, so the appearance of epidemic outbreak is not guaranteed.
But with Corollary~\ref{cor: infection occupation}, we can conclude that if $T>T_c$, the probability that epidemic outbreak happens is the same as $\mathcal{S}_T$.  
This explains the necessity of (A10) for percolation process: if more than one vertex is initially infectious, then there could be more than one infected component.
Because of the randomness of initial infection, it is hard to conclude the relationship between final infectious number and occupied component size.
Thus we cannot use the existence threshold of giant component to determine the appearance of the epidemic outbreak and approximate the expectation infectious size for small outbreaks.

\section{Dynamics of network models}
\label{sec: dyn}
Newman's typical percolation process discussed in Section~\ref{sec: disease}  focuses on predictions at the end of an epidemic and does not provide any dynamical information about the disease spreading.
It is also limited to configuration network models with assumption (A5) that a network is static so that edges do not change over time.
This is reasonable if the epidemic duration is relatively short so that relationships will remain the same during the interested time period, or the population is a controllable community with highly stable social structure.

In this section we review the modified percolation approach of \cite{MillerSlimVolz:2012} and relate some of the results to the previous sections.
Compared to typical percolation process covered by \cite{Newman:2002}, their modified percolation method~\cite{MillerSlimVolz:2012} is more connected to the original SIR models set up as ODE systems. 
This makes analysis of the models more tractable and provide dynamical information that the typical percolation process lacks.

\cite{MillerSlimVolz:2012} introduced their method on a configuration model with static network first. 
They then allowed the edges change with negligible partnership duration, which leads to Mean-Field Social Heterogeneity (MFSH) model.
These methods are extended to dynamic network models, where edges can change at a constant rate without waiting, and discussed constant changing edges with waiting time. 
They also considered more flexible vertices degree for networks, which assign expected degree to each vertex from a distribution instead of actual degrees.
Edges of these expected degree models are formed with probability proportional to product of 2 expected degree from the 2 vertices on each side of the edge.
This modification allows certain level of randomness for total number of edges and number of edges connect to each vertex. Finally, they also discussed different edge duration models where each time the edges reform with some probability, the degree of related vertices would change accordingly.
These models are more complicated but still mathematical tractable using the modified percolation method.
    
In Section~\ref{sec: CM} we discuss the modified percolation method of~\cite{MillerSlimVolz:2012} applied to CM models with assumption (A5) in order to demonstrate the similarity and difference in its approach compared to typical percolation.
In Section~\ref{sec: MFSH} we review the MFSH model to interpret how modified percolation works with dynamic edges as an example. 
    
\subsection{Configuration Model}\label{sec: CM}
Recall that the configuration model are the static networks that are generated using Algorithm~\ref{alg: Config}.
This is exactly the network used in Newman's typical percolation method that we discuss in the previous sections. Here we use modified percolation and again assume (A1)--(A10). However we will now also rely more on the SIR ODE system which requires more specific assumptions on (A7) and (A8). As previously discussed, our default is (A7a) and (A8a).
We will show that this provides equivalent threshold conditions to the typical percolation process in Section~\ref{sec: perco} under these common assumptions.

Modified percolation uses the same probability generating function $G_p$ for degree distribution of vertices, as defined in \eqref{eqn: pgfgp}.
However, instead of using the average transmissibility $T$, which is averaged over the network, the modified method relates the dynamics on the network to the dynamics of an SIR model set up as an ODE system.
Also, the modified process follows assumption (A6) and treats $S(t)$, $I(t)$ and $R(t)$ as the probability at time $t$ which a random chosen vertex $a$ belongs to the susceptible, infectious and recovered compartments respectively.

In the same way that the transmissibility $T$ in Definition~\ref{def: Transmissibility} relates the transmission process to the network for the typical percolation process, the modified percolation method uses a similar but different probability argument $\theta$ to describe the transmission probability between edges.
But unlike the pre-determined fixed value of $T$, $\theta$ is a function of time $t$ to consider the dynamic of transmission process.

\begin{definition}[Adapted from \cite{MillerSlimVolz:2012}]
\label{def: Theta}
The probability that a randomly chosen partner $b$ of a randomly chosen vertex $a$ in the network has not transmitted to $a$ at time $t$ is given by $\theta(t)$.
Initially, $\theta(0)$ should be close to 1.
\end{definition}

For a reasonably large population size $N$, as in the typical percolation process, it is reasonable to make the simplifying assumption that partners of $a$ are independent.
Therefore, given the degree $K_a=k$ of $a$, this vertex is susceptible at time $t$ with probability 
\begin{equation}
    \mathbb{P}_t(a \in S|K_a=k)=s(k,\theta(t))=\theta(t)^k
\label{eqn: CM prob s condition on k}
\end{equation}
Using the law of total probability, we have $S(t)$ defined by $\theta$ from the network so that,
\begin{equation}
    S(t)=\mathbb{P}_t(a \in S)=\sum_{k=0}^{\infty} \mathbb{P}_t(a \in S|K_a=k) \mathbb{P}(K_a=k)=\sum_{k=0}^{\infty} p_k \theta(t)^k=G_p(\theta(t)).
\label{eqn: CM S(t)}
\end{equation}
Thus we derive the following system of equations based on ODE system of MA-SIR model,
\begin{equation}
\label{eqn: CM ODE}
    \begin{cases}
        S(t) & = G_p(\theta(t)) \\
        I(t) & = 1-S(t)-R(t) \\
        \dot{R}(t) & = \hat{\gamma} I(t) 
    \end{cases} 
\end{equation}
The problem now is to describe dynamic of $\theta(t)$.    
In general, $\theta(t)$ depends on the degree of $a$, changing rate of partners and probability that a random partner $b$ of vertex $a$ is infected. 
As discussed in Section~\ref{sec: pgfs}, because a random partner $b$ of $a$ has a confirmed edge, $b$ is likely to have more partners than any randomly chosen vertex. 
So the infected fraction $I(t)$ does not provide the infection probability of $b$.
    
In the configuration model, to describe $\theta(t)$, we separate a general neighbour vertex $b$ into four different classes with corresponding proportion of all partners, which also equivalent to the probability that a random $b$ belongs to each class.
By definition of $\theta$ partners that have already been infected and have already transmitted to $a$ will have probability $1-\theta$.
The rest of the probability $\theta$ now can be broken down into three classes with probability $\Phi_S$, $\Phi_I$ and $\Phi_R$, which represent the probability that a random partner of vertex $a$ belongs to each corresponding compartment ($S$, $I$ and $R$), and has not transmitted infection to $a$. The sum of $1-\theta$, $\Phi_S$, $\Phi_I$ and $\Phi_R$ must equal one yielding the relationship,
\begin{equation}
    \theta=\Phi_S+\Phi_I+\Phi_R.
\label{eqn: CM theta breakdown}
\end{equation}

Since the edges do not change with time in configuration models, the rate of change of the compartments only depends on the flow of neighbourhood vertices between compartments. 
This system of equations is illustrated in Figure~\ref{fig:CM}.
The flow from $\Phi_I$ to $1-\theta$ is $\hat{\beta}\Phi_I$, where the the per-infected transmission rate $\hat \beta$ is the rate an infected partner transmits to $a$. 
Then,
\begin{equation}
    \dot{\theta}(t)=-\hat{\beta} \Phi_I
\label{eqn: CM theta PhiI ODE}
\end{equation}
To find $\Phi_I$, we use \eqref{eqn: CM theta breakdown} to get $\Phi_I=\theta-\Phi_S-\Phi_R$. We then compute $\Phi_S$, and $\Phi_R$ explicitly.
First, we note that the flow from $\Phi_I$ to $\Phi_R$ is $\hat{\gamma} \Phi_I$, where $\hat{\gamma}$ is the rate of an infected partner recovers.
Therefore this rate is proportional to the rate going into $1-\theta$ with the constant factor $\hat{\gamma} /\hat{\beta}$.
As $\Phi_R$ and $1-\theta$ both have approximately zero initial values then,
\begin{equation}
    \Phi_R=\frac{\hat{\gamma}}{\hat{\beta}}(1-\theta)
\label{eqn: CM PhiR eqn}
\end{equation}

\begin{figure}[htbp]
    \begin{center}
        \begin{tikzpicture}[->, >=stealth',shorten >=1pt,auto, node distance=3cm, thick, main node/.style={rectangle,draw,minimum size=1cm, font = \sffamily\bfseries}]

            \node[main node] (S) {$\Phi_S=\frac{G'_p(\theta)}{G'_p(1)}$};
            \node[main node] (I) [right of=S] {$\Phi_I$};
            \node[main node] (R) [right of=I] {$\Phi_R$};
            \node[main node] (Q) [below =1cm of I] {$1-\theta$};
  
            \draw[thick,->] (S) -- (I) node[midway,sloped,above,rotate=0]{};
            \draw[thick,->] (I) -- (R) node[midway,sloped,above,rotate=0]{$\hat{\gamma}\Phi_I$};
            \draw[thick,->] (I.south) -- (Q.north) node[midway,sloped,right,rotate=90] {$\hat{\beta}\Phi_I$};
	
        \end{tikzpicture}
        \caption[Flow for the configuration model]{Flow for the configuration model. 
        Figure was created using Latex Tikz Picture}
        \label{fig:CM}
    \end{center}
\end{figure}
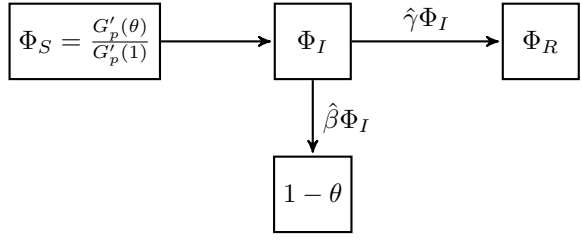

Now consider the probability that a random partner $b$ of the chosen vertex $a$ has degree $K_b=k$. This probability is equal to the total number of stubs belongs to all degree $k$ vertices divide by number of all stubs. We use the notation,
\begin{equation}
    Q_b(k)=\mathbb{P}(K_b=k)=\frac{N p_k k}{\sum_{n=0}^{\infty} N p_n n}=\frac{k p_k}{\langle K \rangle}
\label{eqn: CM G1pdf}
\end{equation}
where $p_k$ is the probability mass function of the vertex degree distribution.
Given $k$, the random partner $b$ is susceptible with probability $\mathbb{P}(b \in S|K_b=k)=\theta^{k-1}$, since $a$ cannot transmit to $b$.
Thus, by law of total probability, we have
\begin{equation}
    \Phi_S=\sum_{k=0}^{\infty} \mathbb{P}(K_b=k) \mathbb{P}(b \in S|K_b=k)=\sum_{k=0}^{\infty}\frac{k p_k \theta^{k-1}}{\langle K \rangle}=\frac{G'_p(\theta)}{G'_p(1)}
\label{eqn: CM PhiS eqn}
\end{equation}

Equations \eqref{eqn: CM theta breakdown}, \eqref{eqn: CM PhiR eqn} and \eqref{eqn: CM PhiS eqn} combined with ODE \eqref{eqn: CM theta PhiI ODE} yields a closed form ODE system for $\theta(t)$ given by,
\begin{equation}
    \dot{\theta}(t)=-\hat{\beta} \theta +\hat{\beta} \frac{G'_p(\theta)}{G'_p(1)}+\hat{\gamma} (1-\theta).
    \label{eqn: CM theta ODE}
\end{equation}
    
By solving \eqref{eqn: CM theta ODE} together with \eqref{eqn: CM ODE}, we can get similar results as we found in Section~\ref{sec: perco}.
An example of the trajectory of the infectious compartment $I(t)$ is shown in Figure~\ref{fig: epidemic_curves}.

Like the epidemic threshold $T_c$ in Theorem~\ref{thm: Tc} of transmissibility $T$, we can find a similar threshold for epidemics, but using the basic reproductive number as in the MA-SIR model.

\begin{definition}[\cite{MillerSlimVolz:2012}]
    The \textbf{basic reproduction number} $\mathcal{R}_0$ is the expected number of infections from a single infected vertex.
\end{definition}
When $\mathcal{R}_0 < 1$, epidemic outbreaks (leading to a giant component) are impossible and the method breaks down. If $\mathcal{R}_0 > 1$, epidemic outbreaks are possible but not guaranteed.
We have the following result for $\mathcal{R}_0$
\begin{theorem}[Adapted from \cite{MillerSlimVolz:2012}]
\label{thm: R0 CM}
    For the configuration model under assumption (A1)-(A10) together with (A7a) and (A8a), the basic reproductive number is given by
    \begin{equation}
        \mathcal{R}_0=\sum_{k=0}^{\infty} Q_b(k) (k-1) \frac{\hat{\beta}}{\hat{\beta}+\hat{\gamma}}=\frac{\hat{\beta}}{\hat{\beta}+\hat{\gamma}} \frac{\langle K^2-K \rangle}{\langle K \rangle}=\frac{\hat{\beta}}{\hat{\beta}+\hat{\gamma}}\frac{G''_p(1)}{G'_p(1)}
        \label{eqn: CM R0}
    \end{equation}
    where $\frac{\hat{\beta}}{\hat{\beta}+\hat{\gamma}}$ is the probability a vertex infects a neighbor prior to recovering.
\end{theorem}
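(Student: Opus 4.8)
The plan is to count directly the expected number of secondary infections produced by a \emph{typical} infected vertex during the initial phase of the epidemic, when the infected component is still a small tree and essentially the whole neighbourhood of any infected vertex is susceptible. First I would observe that the relevant typical infected vertex is not a uniformly chosen vertex (that is the index case in (A10)) but a vertex reached by following an edge from its infector: as in Section~\ref{sec: pgfs}, such a vertex has degree $k$ with probability $Q_b(k)=kp_k/\langle K\rangle$ (equation \eqref{eqn: CM G1pdf}), and among its $k$ edges one leads back to the infector, leaving $k-1$ edges along which it may transmit onward. This is exactly why the sum in the statement is weighted by $Q_b(k)$ and by the excess degree $k-1$ rather than by $p_k$ and $k$.

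Next I would compute the probability that one such edge actually transmits before the vertex recovers. Under (A7a)--(A8a), transmission across a given edge occurs at constant rate $\hat\beta$ while the infectious period is Exponential($\hat\gamma$); by the competing-exponentials argument already used in Corollary~\ref{cor: const T}, the probability that transmission wins this race is $\hat\beta/(\hat\beta+\hat\gamma)$, i.e.\ the transmissibility $T$. Because the $k-1$ transmission clocks and the recovery clock are independent, and because for large $N$ the $k-1$ neighbours are independent and w.h.p.\ still susceptible at the early stage (the same vertex-level independence invoked in Theorem~\ref{thm: PGFHq} and used to derive \eqref{eqn: CM S(t)} and \eqref{eqn: CM PhiS eqn}), the number of onward infections from a degree-$k$ vertex is Binomial$(k-1,\hat\beta/(\hat\beta+\hat\gamma))$, with mean $(k-1)\hat\beta/(\hat\beta+\hat\gamma)$. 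Averaging over $k$ with weights $Q_b(k)$ gives $\mathcal{R}_0=\frac{\hat\beta}{\hat\beta+\hat\gamma}\sum_{k=0}^\infty(k-1)Q_b(k)$, the first expression claimed.

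It then remains to simplify the sum. Substituting $Q_b(k)=kp_k/\langle K\rangle$,
\[
\sum_{k=0}^\infty (k-1)Q_b(k)=\frac{1}{\langle K\rangle}\sum_{k=0}^\infty k(k-1)p_k=\frac{\langle K^2-K\rangle}{\langle K\rangle},
\]
and recalling the generating-function identities $G_p'(1)=\sum_k kp_k=\langle K\rangle$ and $G_p''(1)=\sum_k k(k-1)p_k=\langle K^2-K\rangle$ from Appendix~\ref{sec: pgf} (Theorem~\ref{thm: pgf}) yields the remaining two forms $\frac{\hat\beta}{\hat\beta+\hat\gamma}\frac{\langle K^2-K\rangle}{\langle K\rangle}=\frac{\hat\beta}{\hat\beta+\hat\gamma}\frac{G_p''(1)}{G_p'(1)}$. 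As a consistency check I would linearise the scalar ODE \eqref{eqn: CM theta ODE} at the disease-free state $\theta=1$: its right-hand side vanishes there and has derivative $-\hat\beta-\hat\gamma+\hat\beta\,G_p''(1)/G_p'(1)$, so $\theta=1$ is unstable precisely when $\mathcal{R}_0>1$, matching the interpretation of $\mathcal{R}_0$ as the epidemic threshold.

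The main obstacle is not the algebra but justifying the probabilistic reduction: that in the $N\to\infty$ limit the early spread is genuinely a branching process whose offspring counts are independent Binomial$(k-1,T)$ variables with $k$ distributed as $Q_b$. This needs the size-biasing of the degree seen along a followed edge, the asymptotic vertex-level independence of the stub pairing in the configuration model (Algorithm~\ref{alg: Config}), and the fact that at the early stage the chance of an edge leading to an already infected or recovered vertex is $o(1)$. Rather than re-proving these, I would invoke the same large-$N$ independence assumptions attributed to \cite{MillerSlimVolz:2012} that were already used to obtain \eqref{eqn: CM prob s condition on k}--\eqref{eqn: CM PhiS eqn}.
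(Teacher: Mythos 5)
Your proposal is correct and follows the same reasoning the paper itself uses: the paper gives only a one-line justification of the $(k-1)$ factor (a newly infected vertex has one confirmed edge back to its infector) and identifies $\hat\beta/(\hat\beta+\hat\gamma)$ as the per-edge infection probability in the theorem statement, and your argument simply fills in these same steps (size-biased degree $Q_b(k)$, excess degree $k-1$, per-edge transmission probability $T$, and the generating-function identities $G_p'(1)=\langle K\rangle$, $G_p''(1)=\langle K^2-K\rangle$) in full detail, plus a useful consistency check via linearising \eqref{eqn: CM theta ODE} at $\theta=1$. One minor imprecision: the number of onward transmissions from a degree-$k$ vertex is not Binomial$(k-1,T)$, since the $k-1$ transmission events share the same infectious period $\tau_i$ and are therefore positively correlated; however, by linearity of expectation the mean is still $(k-1)\hat\beta/(\hat\beta+\hat\gamma)$, which is all that is needed for $\mathcal{R}_0$.
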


The $(k-1)$ term exist since newly infected vertex must have a confirmed partner who is not susceptible as its infection source.

Compare this to the transmissibility $T=\frac{\hat{\beta}}{\hat{\beta}+\hat{\gamma}}$ in Corollary~\ref{cor: const T} with same assumption, we can easily show the equivalence of the threshold property in both cases in the following corollary:

\begin{corollary}
\label{cor: threshold equiv}
Under same assumptions (A1)-(A10), (A7a) and (A8a), the threshold of $\mathcal{R}_0=1$ given by Theorem~\ref{thm: R0 CM} for the modified percolation model is equivalent to the threshold $T=T_c$ given by Theorem~\ref{thm: Tc}.
\end{corollary}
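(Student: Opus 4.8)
The plan is to show that the three quantities being compared are linked by a single algebraic identity, namely $\mathcal{R}_0 = T/T_c$, from which the equivalence of the thresholds is immediate.

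First I would rewrite the bookkeeping factor $G'_q(1)$ in terms of the vertex PGF. Differentiating the relation $G_q(x) = G'_p(x)/G'_p(1)$ from Theorem~\ref{thm: pgfGq} gives $G'_q(x) = G''_p(x)/G'_p(1)$, hence $G'_q(1) = G''_p(1)/G'_p(1) = \langle K^2 - K\rangle/\langle K\rangle$. Under the standing assumption $\Delta = O(1)$ together with the nondegeneracy hypothesis used just before Theorem~\ref{thm: Tc} (there is some $k \geq 3$ with $p_k > 0$), this is a finite, strictly positive constant, so $T_c = 1/G'_q(1)$ is well defined and, as shown in the derivation preceding Theorem~\ref{thm: ST eqn}, lies in $(0,1)$.

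Next I would simply combine the closed forms from the two results being compared. By Corollary~\ref{cor: const T}, $T = \hat\beta/(\hat\beta + \hat\gamma)$, and by Theorem~\ref{thm: R0 CM},
\begin{equation*}
\mathcal{R}_0 = \frac{\hat\beta}{\hat\beta+\hat\gamma}\,\frac{G''_p(1)}{G'_p(1)} = T\, G'_q(1) = \frac{T}{T_c}.
\end{equation*}
Since $T_c > 0$, this identity shows at once that $\mathcal{R}_0 = 1$ exactly when $T = T_c$; in fact it gives the sharper statement $\mathcal{R}_0 > 1 \iff T > T_c$ and $\mathcal{R}_0 < 1 \iff T < T_c$, so the subcritical and supercritical regimes predicted by the modified percolation method of \cite{MillerSlimVolz:2012} coincide precisely with those predicted by the transmissibility threshold of Theorem~\ref{thm: Tc}.

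I expect essentially no obstacle here: the whole content of the corollary is that the factor $G''_p(1)/G'_p(1) = G'_q(1)$ appearing in $\mathcal{R}_0$ is the reciprocal of $T_c$, and that the per-edge infection probability entering $\mathcal{R}_0$ is exactly the transmissibility $T$. The only point needing a word of care is the degenerate case $G'_q(1) = 0$ (i.e.\ $p_k = 0$ for all $k \geq 2$), where $T_c$ would be infinite; this is excluded by the hypotheses under which Theorem~\ref{thm: Tc} was stated, and even then it is consistent with the identity, since there $\mathcal{R}_0 = 0 < 1$ for every $\hat\beta, \hat\gamma$, matching $T < T_c = \infty$.
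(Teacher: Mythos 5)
Your proof is correct and follows essentially the same route as the paper: both establish the identity $\mathcal{R}_0 = T/T_c$ by combining $T=\hat\beta/(\hat\beta+\hat\gamma)$ from Corollary~\ref{cor: const T} with the closed forms in Theorem~\ref{thm: R0 CM} and Theorem~\ref{thm: Tc}, and read off the equivalence of the thresholds. Your additional observations (the matching of the sub- and supercritical regimes, and the degenerate case $G'_q(1)=0$) are correct refinements but not needed for the statement.
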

\begin{proof}
    Since $T=\frac{\hat{\beta}}{\hat{\beta}+\hat{\gamma}}$, and $T_c=\frac{G'_p(1)}{G''_p(1)}$, we have
    \begin{equation}
    \label{eqn: threshold equiv}
    \mathcal{R}_0=1 =\frac{\hat{\beta}}{\hat{\beta}+\hat{\gamma}}\frac{G''_p(1)}{G'_p(1)}= T \times \frac{1}{T_c}
    \quad \Leftrightarrow \quad T=T_c
    \end{equation}
\end{proof}

To calculate the final size of the epidemic, we solve for the equilibria of the ODE system \eqref{eqn: CM ODE}.
If $\mathcal{R}_0 > 1$ and the epidemic does happen, there are two possible equilibrium solutions of \eqref{eqn: CM theta ODE}.
One of them will be $\theta = 1$ which occurs when the disease has not been introduced.
At another condition where $\theta<1$, the disease has spread first and then dies out leading to a a nontrivial trajectory of the epidemic with $\theta$ going to,
\begin{equation}
    \theta(\infty)=\frac{\hat{\gamma}}{\hat{\beta}+\hat{\gamma}}+\frac{\hat{\beta}}{\hat{\beta}+\hat{\gamma}}\frac{G'_p(\theta(\infty))}{G'_p(1)}.
\label{eqn: CM theta infty}
\end{equation}
    
\begin{theorem}[Adapted from \cite{MillerSlimVolz:2012}]
\label{thm: Rinfty CM}
     For the configuration model under assumption (A1)-(A10) together with (A7a) and (A8a),the total fraction of the population infected at the end of epidemic is,
    \begin{equation}
        \mathcal{R}(\infty)=1-G_p(\theta(\infty)).
        \label{eqn: CM final size}
    \end{equation}
\end{theorem}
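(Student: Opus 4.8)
The plan is to read off the final epidemic size $\mathcal{R}(\infty)$ as the limit $\lim_{t\to\infty}R(t)$ of the recovered fraction, and to extract that limit directly from the structure of the ODE system \eqref{eqn: CM ODE} rather than from any closed-form solution. The first step is to establish that $\theta$ and $R$ each converge as $t\to\infty$. Equation \eqref{eqn: CM theta PhiI ODE} gives $\dot\theta=-\hat\beta\Phi_I\le 0$, since $\Phi_I$ is a probability and hence non-negative, so $\theta(t)$ is non-increasing; the third line of \eqref{eqn: CM ODE} gives $\dot R=\hat\gamma I\ge 0$, so $R(t)$ is non-decreasing. By Definition~\ref{def: Theta} together with \eqref{eqn: CM prob s condition on k} we have $\theta(t)\in[0,1]$, and by \eqref{eqn: simplex} we have $R(t)\in[0,1]$; thus both are bounded monotone functions and the limits $\theta(\infty):=\lim_{t\to\infty}\theta(t)$ and $R(\infty):=\lim_{t\to\infty}R(t)$ exist. (Global existence of the trajectory for all $t\ge 0$ is automatic, since every right-hand side in the system is a polynomial in quantities confined to $[0,1]$.)

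Next I would pin down $I(\infty)$. Continuity of $G_p$ on $[0,1]$ applied to the first line of \eqref{eqn: CM ODE} gives $S(t)=G_p(\theta(t))\to G_p(\theta(\infty))$, so $S(\infty)$ exists; then from $I(t)=1-S(t)-R(t)$ the limit $I(\infty)=1-G_p(\theta(\infty))-R(\infty)$ exists as well. Integrating the flow $\dot R=\hat\gamma I$ over $[0,\infty)$ yields
\[
\int_0^\infty I(t)\,dt=\frac{R(\infty)-R(0)}{\hat\gamma}<\infty ,
\]
and a non-negative function that is integrable on $[0,\infty)$ and possesses a limit there must have that limit equal to zero, so $I(\infty)=0$. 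Substituting back into the expression for $I(\infty)$ gives $R(\infty)=1-G_p(\theta(\infty))$. Finally, by (A8)--(A9) every vertex that is ever infected has passed into the recovered compartment by $t=\infty$, and $I(\infty)=0$, so the cumulative fraction of the population infected over the course of the epidemic equals $R(\infty)$; this is exactly $\mathcal{R}(\infty)$, which establishes \eqref{eqn: CM final size}.

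For completeness I would then recall that $\theta(\infty)$ is the relevant root of the fixed-point equation \eqref{eqn: CM theta infty}, obtained by letting $t\to\infty$ in the closed ODE \eqref{eqn: CM theta ODE} and using $\dot\theta(t)\to 0$ (valid because $\theta$ converges and $\dot\theta$ is uniformly continuous, its derivative being bounded since the vector field is bounded); when the epidemic is seeded with $\theta(0)$ near $1$ and $\mathcal{R}_0>1$, $\theta$ decreases monotonically to the appropriate solution of \eqref{eqn: CM theta infty} in $[0,1)$, whereas if $\mathcal{R}_0\le 1$ it remains at $\theta(\infty)=1$ and the formula correctly returns $\mathcal{R}(\infty)=1-G_p(1)=0$. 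The only real obstacle is this convergence bookkeeping --- showing $I(\infty)=0$, i.e. ruling out persistent oscillation --- and here the monotonicity of $R$ and $\theta$ together with boundedness of the vector field do all the work; everything else is continuity of $G_p$ and the simplex identity \eqref{eqn: simplex}.
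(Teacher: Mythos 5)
Your proof is correct. The paper itself gives no proof of Theorem~\ref{thm: Rinfty CM} --- it simply states the result as adapted from \cite{MillerSlimVolz:2012} and relies on the preceding informal discussion that the epidemic ``dies out'' at the equilibrium $\theta(\infty)$ of \eqref{eqn: CM theta ODE}, so that $R(\infty)=1-S(\infty)=1-G_p(\theta(\infty))$. Your write-up follows the same underlying idea but supplies the analytic bookkeeping the paper omits: monotonicity and boundedness of $\theta$ and $R$ give existence of the limits, continuity of $G_p$ on $[0,1]$ gives $S(\infty)=G_p(\theta(\infty))$, and the integrability of $I$ obtained from $\dot R=\hat\gamma I$ forces $I(\infty)=0$, which is exactly the step the paper takes for granted. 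The closing identification of the cumulative infected fraction with $R(\infty)$ via (A8)--(A9) is also the right way to justify calling this quantity $\mathcal{R}(\infty)$.
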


\cite{MillerSlimVolz:2012} stated that $\mathcal{R}(\infty)$ in Theorem~\ref{thm: Rinfty CM} is in agreement with $\mathcal{S}_T$ in Theorem~\ref{thm: ST eqn}.
It is evident that they are similar in structure, and here we provide the details of the proof.

\begin{theorem}[Connecting Theorem~\ref{thm: ST eqn} and Theorem~\ref{thm: Rinfty CM}]
\label{thm:equiv}
For the CM model under assumption (A1)-(A10) together with (A7a) and (A8a), we have,
\begin{enumerate}
\item $\theta(\infty)=1+(u_T-1)T$
\item $\mathcal{R}(\infty)=\mathcal{S}_T$
\end{enumerate}
\end{theorem}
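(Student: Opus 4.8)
The plan is to reduce the self-consistency equation \eqref{eqn: CM theta infty} for $\theta(\infty)$ to the percolation fixed-point equation \eqref{eqn: uT} for $u_T$ by an affine change of variables, and then transport the final-size formula across. First I would use $T=\hat\beta/(\hat\beta+\hat\gamma)$ from Corollary~\ref{cor: const T}, so that $\hat\gamma/(\hat\beta+\hat\gamma)=1-T$, and rewrite \eqref{eqn: CM theta infty} with $G_q=G_p'/G_p'(1)$ (Theorem~\ref{thm: pgfGq}) as
\[
\theta(\infty)=(1-T)+T\,G_q\big(\theta(\infty)\big).
\]
Introducing $w$ via $\theta(\infty)=1+(w-1)T$ (equivalently $w=1+(\theta(\infty)-1)/T$, well defined since $T\in(0,1)$) and substituting, the constant terms cancel and one is left with $w=G_q\big(1+(w-1)T\big)=\mathcal{G}_q(w;T)$, which is exactly \eqref{eqn: uT}. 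Hence $w$ solves the same fixed-point equation that $u_T$ solves.

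Next I would pin down which root $w$ is. Since $\theta(\infty)$ is a probability we have $\theta(\infty)\le 1$, and from the displayed equation $\theta(\infty)=(1-T)+T\,G_q(\theta(\infty))\ge 1-T$ because $G_q\ge 0$ on $[0,1]$; hence $w\in[0,1]$. If $T\le T_c$ there is no epidemic and $\theta(\infty)=1$, while by the strict-convexity argument following Theorem~\ref{thm: Tc} the only root of \eqref{eqn: uT} in $[0,1]$ is $u_T=1$, so part~1 reads $1=1+(1-1)T$ and holds. If instead $T>T_c$ and an epidemic occurs then $\theta(\infty)<1$, so $w\in[0,1)$; but the same argument after Theorem~\ref{thm: Tc} shows \eqref{eqn: uT} has a \emph{unique} root in $[0,1)$, namely $u_T$. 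Therefore $w=u_T$ and $\theta(\infty)=1+(u_T-1)T$, which is part~1.

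For part~2 I would substitute part~1 into the final-size formula \eqref{eqn: CM final size}: $\mathcal{R}(\infty)=1-G_p(\theta(\infty))=1-G_p\big(1+(u_T-1)T\big)$, and recognize $G_p\big(1+(u_T-1)T\big)=\mathcal{G}_p(u_T;T)$ by \eqref{eqn: G0T eqn3}; hence $\mathcal{R}(\infty)=1-\mathcal{G}_p(u_T;T)=\mathcal{S}_T$ by \eqref{eqn: UT}.

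The computation itself is short; the one genuine subtlety --- and the step I would be most careful about --- is identifying the variable $w$ coming from $\theta(\infty)$ with the nontrivial percolation root $u_T$ rather than with the trivial root $1$. This is exactly why the case split on $T$ versus $T_c$ and the strict-convexity uniqueness argument from the discussion after Theorem~\ref{thm: Tc} are needed; the remainder is bookkeeping that translates between the edge-occupation variable $u_T\in[0,1]$ and the Miller--Slim--Volz ``neighbour has not transmitted'' variable $\theta(t)\in[0,1]$, the two being related by the affine map $x\mapsto 1+(x-1)T$ that already appears in \eqref{eqn: G0T eqn3} and \eqref{eqn: G1T}.
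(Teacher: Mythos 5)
Your proposal is correct and follows essentially the same route as the paper's proof: both reduce the Miller--Slim--Volz equation \eqref{eqn: CM theta infty} to the percolation fixed point \eqref{eqn: uT} via the affine substitution $x\mapsto 1+(x-1)T$ together with $T=\hat\beta/(\hat\beta+\hat\gamma)$, and then deduce part~2 by composing with $G_p$. The one substantive difference is that you explicitly justify why the solution $w$ obtained from $\theta(\infty)$ must coincide with the root $u_T$ (the case split on $T$ versus $T_c$ and the uniqueness of the nontrivial root in $[0,1)$); the paper's proof stops at showing that $1+(u_T-1)T$ satisfies the same equation as $\theta(\infty)$, which is not by itself enough since that equation always admits the trivial root $1$. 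Your extra step is therefore not redundant --- it closes a genuine (if small) gap in the published argument.
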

\begin{proof} To prove part 1 we first recall the expression we need to solve for $u_T$ in \eqref{eqn: uT},
\[
u_T=\mathcal{G}_q(u_T;T)=G_q(1+(u_T-1)T)
=\frac{G_p'(1+(u_T-1)T)}{G_p'(1)}.\]
Thus,
\[
1+(u_T-1)T=1+\Big[\frac{G_p'(1+(u_T-1)T)}{G_p'(1)}-1\Big]T   .
\]
From Corollary~\ref{cor: const T} we have $T=\frac{\hat\beta}{\hat \beta +\hat \gamma}$ which leads to,
\begin{align*}
1+(u_T-1)T
&=1+\Big[\frac{G_p'(1+(u_T-1)T)}{G_p'(1)}-1\Big]\frac{\hat\beta}{\hat \beta +\hat \gamma}\\
1+(u_T-1)T&=\frac{\hat\gamma}{\hat \beta +\hat \gamma}+\frac{G_p'(1+(u_T-1)T)}{G_p'(1)}\frac{\hat\beta}{\hat \beta +\hat \gamma}\\
\end{align*}
This expression for $1+(u_T-1)T$ is exactly the same expression in \eqref{eqn: CM theta infty} that $\theta(\infty)$ has to satisfy.
This completes the proof of part 1.
Part 2 easily follows from part 1, Theorem~\ref{thm: ST eqn} and \eqref{eqn: CM final size}.
\end{proof}

In Figure~\ref{fig: MSV vs NSW}, we also plot $\mathcal{R}(\infty)$ and $\mathcal{S}_T$ numerically under same given parameters, for the following commonly used degree distributions: (a) discrete Poisson $p_k=e^{-\lambda} \lambda^k/k!$, (b) discrete exponential $p_k=(1-e^{-\alpha}) e^{-\alpha (k-1)}$ and (c) truncated power-law $p_k=k^{-\gamma}/\zeta(\gamma)$.
\begin{figure}[htbp]
    \begin{center}
        \includegraphics[width=1\textwidth]{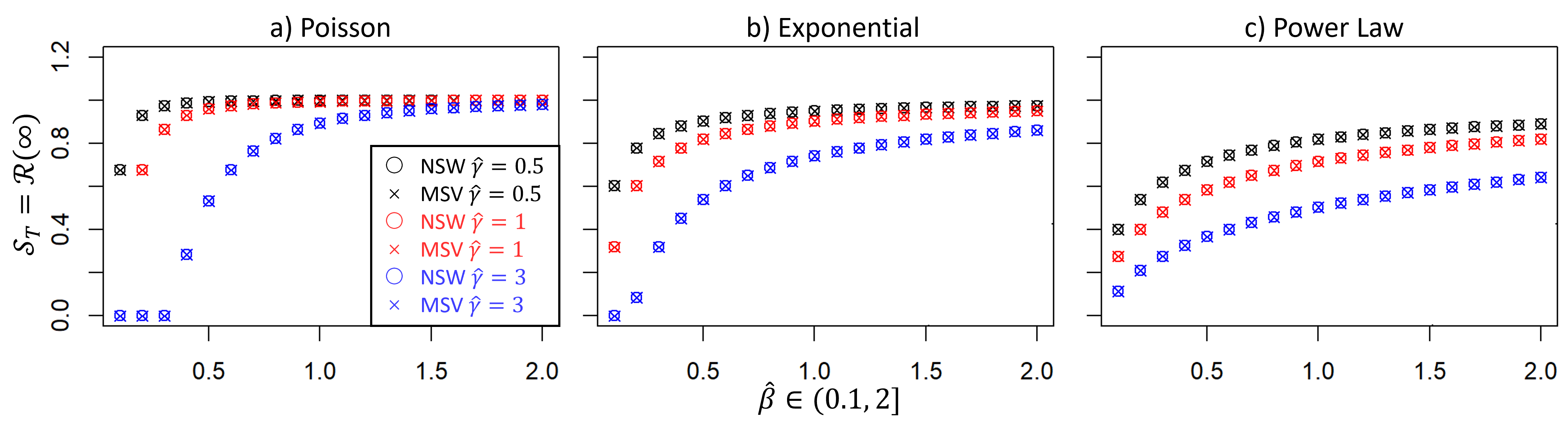}
    \end{center}
    \caption[Comparison of MSV and NSW]{Comparison of $\mathcal{S}_T$ and $\mathcal{R}(\infty)$: numerical result of typical percolation models and modified percolation models for a) discrete Poisson b) discrete exponential and c) truncated power-law distribution, all with mean degree $\langle K \rangle =10$ and bounded maximum degree $\Delta \leq 200$.
    The $X$-axis is the value of $\hat{\beta}$, where different color represent different fixed value for $\hat{\gamma}$: black is $\hat{\gamma}=0.5$, red is $\hat{\gamma}=1$ and blue is $\hat{\gamma}=3$.
    Circles denote results  for the \cite{NewmanStrogatzWatts:2001} model (NSW) and crosses are results using the \cite{MillerSlimVolz:2012} model (MSV). The agreement in the results show their equivalence, as proven in Theorem~\ref{thm:equiv}.
    Figures were created using our R packages.
    }
    \label{fig: MSV vs NSW}
\end{figure}

Compared to MA-SIR models, network-SIR models can capture substantially more population structure. 
The edge-based compartmental modeling approach allows us to do this with just marginal more complexity since we end up with an ODE system.
Compared to the typical percolation models, modified percolation also provides more dynamic information, like the epidemic trajectory. 

\subsection{Mean Field Social Heterogeneity Model}\label{sec: MFSH}
In this section we briefly  mention how the modified percolation process can further extend application of edge-based compartmental modeling to non-static dynamic networks in which the structure of network change with time.
As an example, we present the MFSH network model presented by \cite{MillerSlimVolz:2012}.
In such a model we introduce simple dynamics based on configuration networks, and alter the assumption (A5) with negligible partnership duration as showed in Figure~\ref{fig:MFSH networks}.

\begin{figure}[htbp]
    \begin{center}
        \includegraphics[width=0.7\textwidth]{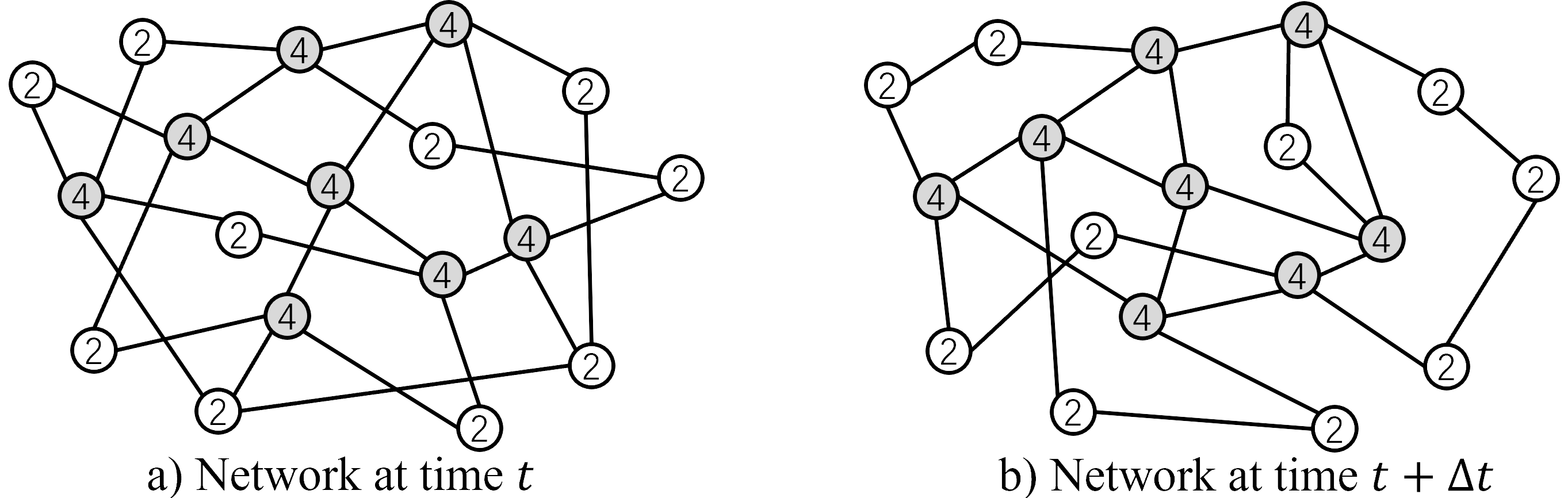}
    \end{center}
    \caption[Example of MFSH networks]{Example of MFSH networks: Given degree sequence such that half vertices(grey) have degree 4 and half vertices(white) have degree 2. 
    a) is network at random time $t$. b) is network at $t+\Delta t$ with dynamics on network structure, after an arbitrarily small time period $\Delta t$. 
    Degree of each vertex would not change with time, but edges always immediately break into stubs and reform new edges at any time.
    Figures were created using MS Office}
    \label{fig:MFSH networks}
\end{figure}

For MFSH networks, the degree assignment, compartment flow and corresponding PGF will be exactly the same as in CM networks. 
However, the stubs assigned to each vertex are reformed independently into edges at every moment.
Since the edges are no longer static, the definition of $\theta(t)$ is slightly modified but with similar approach as in the previous section and some modification, one can obtain a modified SIR system with respect to $\theta$ that has the exact same form with \eqref{eqn: CM ODE}.
The full details of this system are given in \cite{MillerSlimVolz:2012}.
This is a very interesting direction for further study and applications.

\section{Summary and Future Work}
\label{sec:summary}
In this paper we reviewed network models for epidemiology, connected them to ideas from bond percolation process, and filled in some of the gaps in their relationship with their theoretical basis in random graph theory. The network models we reviewed were the typical bond percolation model (NSW) created by \cite{NewmanStrogatzWatts:2001} and more widely-used modified bond percolation model (MSV) presented by \cite{MillerSlimVolz:2012}.

In Section \ref{sec: basic} we began by reviewing the definitions and theorems about components and thresholds in random graph theory.
Their original description in the pioneering network epidemiology papers were not detailed and this led to some ambiguity in the literature and confusion for researchers looking to apply the given models. 
Here we presented a systematic discussion about the network generation methods modified from the well-known configuration model, and considered the realization condition for the degree sequences.
Moreover, in Section~\ref{sec: pgfs} we present a detailed discussion about the equivalence result of giant component size and thresholds using alternative PGFs method, which is the core methodology of the bond percolation model.

In Section~\ref{sec: disease} we provided a list of fundamental assumptions for the edge-based random network models for epidemiology.
This helped clarify issues between the different assumptions made in network epidemiology.
In Section~\ref{sec: Trans}, we also illustrated how random graph theory is related to disease transmission, using the assumptions and percolation process with transmissibility $T$.
We also discussed the impact of assumptions to the model by comparing a series of different $T$ expression under different assumptions, this also provide an example framework for modification of the network model to fit vary cases depends on complexity in reality.
Later in Section~\ref{sec: perco}, we derived an equivalent result to \cite{Newman:2002}, but now on a more rigorous basis.
We also provided a simple proof about the uniqueness of the solution corresponding to epidemic outbreak and showed its existence condition is equivalent of the epidemic threshold generate by the model. 

In Section~\ref{sec: dyn}, we briefly discussed the modified percolation MSV model, which  follows similar edge-based and PGF techniques, but in addition provides dynamical information about the disease transmission using ODE system.
We start with presenting the MSV model result under same static network assumptions of the NSW model in Section~\ref{sec: CM}, and show the equivalence of the two model by showing the conditions for which they have the same thresholds and final epidemic sizes.
Then in Section~\ref{sec: MFSH}, we briefly introduce the MFSH network model to show the potential of the modified MSV model on non-static networks.

As discussed in both Section~\ref{sec: basic} and Section~\ref{sec: disease}, The random network models use mean-field ideas, which approximate the stochastic network construction and disease transmission by deterministic result of expectations of all possible results.
A natural question is the exactness of the mean field approximation in these models, which often represented by the large $N$ limit of the model prediction: the stochastic result should converge w.h.p. to the deterministic expectation result.

Both the NSW model in \cite{NewmanStrogatzWatts:2001} and the MSV model in \cite{MillerSlimVolz:2012} provide a good numerical performance in exactness from stochastic simulation of large network.
\cite{Decreusefond:2012} has proved that MSV model satisfy the large $N$ limit for configuration models, but analysis proof for NSW model requires further investigation.
But, by Corollary~\ref{cor: threshold equiv} and Figure~\ref{fig: MSV vs NSW}, we partially proved the equivalence of the two models in CM networks, thus the exactness of the result from MSV model can be converted to typical model indirectly.
Moreover, in Section~\ref{sec: pgfs}, we show that the result in Corollary~\ref{cor: Seqn} provide by PGFs method in NSW model is equivalent to the result in Theorem~\ref{thm: Giant Component Size} by \cite{MolloyReed:1995}, which provide the exactness of mean-field result for stochastic in network construction.

Later results by \cite{KissKenahRempala:2023} also proved the equivalence of MSV model for CM network with other random network approaches, like vertex pairwise closure model by \cite{Keeling:1999} under specific type of degree distribution, and more recently dynamical survival analysis model by \cite{Jacobsen:2016}.
These approaches are generated separately from different mathematical theorems and ideas, and each has their own advantages in different applications of network models.
The equivalence reveals both the hidden mathematical relations in different fields and the huge potential of edge-based models in application.
Our work further linked these models to earlier basic models and the theoretical foundations of graph theory.
Therefore, we hope it could help researchers to understand and use the models in both theoretical and application level.

Here we also present a package EpiNetPerco in R (https://github.com/RichardSichengZhao/R-EpiNetPerco), which automates the calculation and simulation of bond-percolation network models for epidemiology.
The package contains the prediction function for NSW and MSV models, network generation realizations based on configuration model algorithm and some sequential generation algorithms inspired by \cite{BlitzsteinDiaconis:2011} and \cite{BayatiMohsenKim:2010}, and the stochastic transmission simulator on network based on Doob–Gillespie algorithm (also known as Stochastic Simulation Algorithm, SSA).
We are also looking to improve the performance of the package for large scale network generation and simulation, by optimizing data structure and parallel computing, so that users can apply edge-based epidemic modeling to their research more easily. 

\section*{Statements and Declarations}

\subsubsection*{Data Availability} \quad The simulated datasets and code used for the study will be available on Github with the package EpiNetPerco.

\subsubsection*{Conflict of Interests} \quad The authors have no relevant financial or non-financial interests to disclose.

\begin{appendices}
\section{Probability generating functions}
\label{sec: pgf}
Suppose that a random variable $X$ has probability mass function given by $p_k=P(X=k)$ for $k\in\mathbb{Z}_{\geq 0}$. This random variable has probability generating function given by,
\begin{equation}
\label{eqn:pgfdef}
    G_X(x)=E(x^X)=\sum_{k=0}^{\infty}p_k x^k.
\end{equation}

\begin{theorem}\label{thm: pgf}
Suppose that $X$ has PGF given by $G_X(x)$, and that $X_i$ are independent and identically distributed random variables with the same distribution as $X$. 
The following statements hold:
\begin{enumerate}
    \item The moments of $X$ is given by $E (X^n )= \sum_{k=0}^{\infty} p_k X^n = \lim_{x\to 1^-}\Big[(x \frac{d}{dx})^n G(x)\Big]$.
    \item The PGF of the random variable $X+1$ is $G_{X+1} (x)=x G_x(x)$.
    \item If $N \in \mathbb{Z}_{>0}$ is fixed then the PGF of $S_N=\sum_{n=1}^N X_i$ is given by $G_{S_N}=(G_X(x))^N$.
\item If $N$ is a discrete random variable with probability generating function $G_N(x)$ then the PGF of $S_N=\sum_{n=1}^N X_i$ is given by $G_{S_N}=G_N(G_X(x))$.
\end{enumerate}
\end{theorem}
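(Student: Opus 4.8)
The plan is to dispatch the four parts in order, since each later part leans on an earlier one, and to isolate the only genuine analytic subtlety (the limit in part 1).

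I would begin with parts 2 and 3, which are immediate from the definition $G_X(x)=E(x^X)$. For part 2, $G_{X+1}(x)=E(x^{X+1})=xE(x^X)=xG_X(x)$. For part 3, with $N$ fixed, expand $G_{S_N}(x)=E\big(x^{X_1+\cdots+X_N}\big)=E\big(\prod_{i=1}^N x^{X_i}\big)$ and use the assumed independence of the $X_i$ to factor the expectation as $\prod_{i=1}^N E(x^{X_i})=(G_X(x))^N$; this is valid for $x$ in the common radius of convergence, which contains $[0,1]$ since each $G_{X_i}$ is a PGF. I would note the boundary case $N=0$, where the empty product $1$ matches the empty sum $S_0=0$ with $x^0=1$.

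For part 1, I would use that $G_X(x)=\sum_{k\ge 0}p_k x^k$ is a power series of radius of convergence at least $1$, so it may be differentiated term by term on $(-1,1)$. Applying the Euler operator $\vartheta=x\frac{d}{dx}$ once gives $\vartheta G_X(x)=\sum_{k\ge 0}k p_k x^k$, and by induction $\vartheta^{\,n}G_X(x)=\sum_{k\ge 0}k^n p_k x^k$. Letting $x\to 1^-$, since every coefficient $k^n p_k$ is nonnegative, Abel's theorem (equivalently monotone convergence along an increasing sequence $x_j\uparrow 1$) gives $\lim_{x\to 1^-}\vartheta^{\,n}G_X(x)=\sum_{k\ge 0}k^n p_k=E(X^n)$, with the convention that both sides are $+\infty$ simultaneously when the $n$-th moment fails to exist. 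This is the one place a convergence argument is needed, so the interchange of limit and summation (and the care required for possibly infinite moments) is the main, though mild, obstacle.

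Finally, for part 4 I would condition on $N$ via the tower property: $G_{S_N}(x)=E(x^{S_N})=E\big(E(x^{S_N}\mid N)\big)$. Conditional on $\{N=n\}$, the sum $S_N=X_1+\cdots+X_n$ is a sum of $n$ i.i.d. copies of $X$ independent of $N$, so part 3 gives $E(x^{S_N}\mid N=n)=(G_X(x))^n$; hence $E(x^{S_N}\mid N)=(G_X(x))^N$ and $G_{S_N}(x)=E\big((G_X(x))^N\big)=G_N(G_X(x))$, the last equality being the definition of $G_N$ evaluated at the argument $G_X(x)\in[0,1]$, which lies in the domain of $G_N$ for $x\in[0,1]$. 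Once more the atom $N=0$ contributes $(G_X(x))^0=1$, consistent with $S_0=0$, so the formula holds without exception.
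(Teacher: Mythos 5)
Your proposal is correct, and it is worth noting that the paper does not actually prove this theorem at all: its ``proof'' consists of the single sentence that these are standard properties of PGFs, with a citation to Johnson, Kemp and Kotz. So you have supplied the details that the paper delegates to the literature, and the arguments you give are indeed the standard ones: parts 2 and 3 directly from $G_X(x)=E(x^X)$ and independence, part 4 by conditioning on $N$ and reusing part 3, and part 1 by term-by-term application of the Euler operator $x\frac{d}{dx}$ followed by monotone convergence (or Abel's theorem) to justify the limit $x\to 1^-$, which you correctly identify as the only step requiring genuine care, including the case of an infinite $n$-th moment. Two small remarks: in part 3 the theorem restricts to $N\in\mathbb{Z}_{>0}$, so the $N=0$ boundary case you discuss is not needed there (though it is relevant in part 4, where you handle it correctly); and the paper's displayed formula $\sum_{k=0}^{\infty}p_kX^n$ contains a typo for $\sum_{k=0}^{\infty}p_k k^n$, which your argument silently and correctly repairs.
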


\begin{proof}
These are standard properties of PGFs. See \cite{JohnsonKempKotz:2005}
\end{proof}
\end{appendices}

\newpage
\bibliography{Bibliography.bib}

\begin{thebibliography}{27}
\providecommand{\natexlab}[1]{#1}
\providecommand{\url}[1]{{#1}}
\providecommand{\urlprefix}{URL }
\providecommand{\doi}[1]{\url{https://doi.org/#1}}
\providecommand{\eprint}[2][]{\url{#2}}
 \bibcommenthead

\bibitem[{Bansal and Meyers(2012)}]{BansalMeyers:2012}
Bansal S, Meyers LA (2012) The impact of past epidemics on future disease dynamics. J Theor Biol 309:176--184. \doi{10.1016/j.jtbi.2012.06.012}

\bibitem[{Bayati et~al(2010)Bayati, Kim, and Saberi}]{BayatiMohsenKim:2010}
Bayati M, Kim JH, Saberi A (2010) A sequential algorithm for generating random graphs. Algorithmica 58:860--910

\bibitem[{Blitzstein and Diaconis(2011)}]{BlitzsteinDiaconis:2011}
Blitzstein J, Diaconis P (2011) A sequential importance sampling algorithm for generating random graphs with prescribed degrees. Internet Math 6(4):489--522

\bibitem[{Bollob{\'a}s(1980)}]{Bollobas:1980}
Bollob{\'a}s B (1980) A probabilistic proof of an asymptotic formula for the number of labelled regular graphs. European J Combin 1:311--316. \doi{10.1016/S0195-6698(80)80030-8}

\bibitem[{Bollob{\'a}s and Thomason(1987)}]{BollobasThomason:1987}
Bollob{\'a}s B, Thomason AG (1987) Threshold functions. Combinatorica 7:35--38. \doi{10.1007/BF02579198}

\bibitem[{Bondy and Murty(2008)}]{BondyMurty:2008}
Bondy JA, Murty USR (2008) Graph theory. Springer Publishing Company, Inc.

\bibitem[{Decreusefond et~al(2012)Decreusefond, Dhersin, Moyal, and Tran}]{Decreusefond:2012}
Decreusefond L, Dhersin JS, Moyal P, et~al (2012) {Large graph limit for an SIR process in random network with heterogeneous connectivity}. The Annals of Applied Probability 22(2):541 -- 575. \doi{10.1214/11-AAP773}

\bibitem[{Diekmann et~al(1990)Diekmann, Heesterbeek, and Metz}]{diekmann_heesterbeek_metz_1990}
Diekmann O, Heesterbeek J, Metz J (1990) On the definition and the computation of the basic reproduction ratio $r_0$ in models for infectious diseases in heterogeneous populations. J Math Biol 28(4). \doi{10.1007/bf00178324}

\bibitem[{van~den Driessche and Watmough(2002)}]{vandendriessche_watmough_2002}
van~den Driessche P, Watmough J (2002) Reproduction numbers and sub-threshold endemic equilibria for compartmental models of disease transmission. Mathematical Biosciences 180(1-2):29–48. \doi{10.1016/s0025-5564(02)00108-6}

\bibitem[{Erd{\"{o}}s and Gallai(1961)}]{ErdosGallai:1960}
Erd{\"{o}}s P, Gallai T (1961) Graphen mit punkten vorgeschriebenen grades. Mat Lapok 11:264--274

\bibitem[{Erd{\"{o}}s and R{\'e}nyi(1960)}]{ErdosRenyi:1960}
Erd{\"{o}}s P, R{\'e}nyi A (1960) On the evolution of random graphs. Publ Math Inst Hung Acad Sci 5(1):17--60

\bibitem[{Frieze and Karo{\'n}ski(2016)}]{FriezeKaronski:2016}
Frieze A, Karo{\'n}ski M (2016) Introduction to random graphs. Cambridge University Press

\bibitem[{Gilbert(1959)}]{Gilbert:1959}
Gilbert EN (1959) Random graphs. The Annals of Mathematical Statistics 30(4):1141 -- 1144. \doi{10.1214/aoms/1177706098}

\bibitem[{Greenhill(2021)}]{Greenhill:2021}
Greenhill C (2021) Generating graphs randomly, Cambridge University Press, p 133–186. London Mathematical Society Lecture Note Series, \doi{10.1017/9781009036214.005}

\bibitem[{Jacobsen et~al(2016)Jacobsen, Burch, Tien, and Rempala}]{Jacobsen:2016}
Jacobsen K, Burch M, Tien J, et~al (2016) The large graph limit of a stochastic epidemic model on a dynamic multilayer network. Journal of Biological Dynamics 12. \doi{10.1080/17513758.2018.1515993}

\bibitem[{Johnson et~al(2005)Johnson, Kemp, and Kotz}]{JohnsonKempKotz:2005}
Johnson NL, Kemp AW, Kotz S (2005) Univariate Discrete Distributions, vol 444, 3rd edn. John Wiley \& Sons, Incorporated, Newark

\bibitem[{Keeling(1999)}]{Keeling:1999}
Keeling MJ (1999) The effects of local spatial structure on epidemiological invasions. Proceedings: Biological Sciences 266(1421):859--867. \urlprefix\url{http://www.jstor.org/stable/51194}

\bibitem[{Kiss et~al(2023)Kiss, Kenah, and Rempała}]{KissKenahRempala:2023}
Kiss I, Kenah E, Rempała G (2023) Necessary and sufficient conditions for exact closures of epidemic equations on configuration model networks. J Math Biol 87:36. \doi{10.1007/s00285-023-01967-9}

\bibitem[{Kiss et~al(2017)Kiss, Miller, Simon et~al}]{KissMillerSimon:2017}
Kiss IZ, Miller JC, Simon PL, et~al (2017) Mathematics of epidemics on networks. Cham: Springer 598:31

\bibitem[{Lov{\'a}sz(2012)}]{Lovasz:2012}
Lov{\'a}sz L (2012) Large networks and graph limits, vol~60. American Mathematical Soc.

\bibitem[{Miller et~al(2012)Miller, Slim, and Volz}]{MillerSlimVolz:2012}
Miller JC, Slim AC, Volz EM (2012) Edge-based compartmental modelling for infectious disease spread. Journal of The Royal Society Interface 9(70):890--906. \doi{10.1098/rsif.2011.0403}

\bibitem[{Molloy and Reed(1998)}]{MolloyReed:1998}
Molloy M, Reed B (1998) The size of the giant component of a random graph with a given degree sequence. Combinatorics, Probability and Computing 7(3):295–305. \doi{10.1017/S0963548398003526}

\bibitem[{Molloy and Reed(1995)}]{MolloyReed:1995}
Molloy M, Reed BA (1995) A critical point for random graphs with a given degree sequence. Random Struct Algorithms 6:161--180. \doi{10.1002/rsa.3240060204}

\bibitem[{Newman(2002)}]{Newman:2002}
Newman MEJ (2002) Spread of epidemic disease on networks. Phys Rev E 66:016128. \doi{10.1103/PhysRevE.66.016128}

\bibitem[{Newman(2003)}]{Newman:2003}
Newman MEJ (2003) The structure and function of complex networks. SIAM Review 45(2):167--256. \doi{10.1137/S003614450342480}

\bibitem[{Newman(2010)}]{Newman:2010}
Newman MEJ (2010) Networks: An Introduction. Oxford University Press, \doi{10.1093/acprof:oso/9780199206650.001.0001}

\bibitem[{Newman et~al(2001)Newman, Strogatz, and Watts}]{NewmanStrogatzWatts:2001}
Newman MEJ, Strogatz SH, Watts DJ (2001) Random graphs with arbitrary degree distributions and their applications. Phys Rev E 64:026118. \doi{10.1103/PhysRevE.64.026118}, \urlprefix\url{https://link.aps.org/doi/10.1103/PhysRevE.64.026118}

\end{thebibliography}

\end{document}